\newcommand{\Sch}{Schr\"odinger }
\newcommand{\szego}{Szeg\H{o} }
\newcommand{\baa}{\begin{align*}}
\newcommand{\eaa}{\end{align*}}
\newcommand{\bea}{\begin{eqnarray*} }
\newcommand{\eea}{\end{eqnarray*} }
\newcommand{\beq}{\begin{equation} }
\newcommand{\eeq}{\end{equation} }
\newcommand{\bp}{\begin{prop}}
\newcommand{\ep}{\end{prop}}
\newcommand{\bt}{\begin{theorem}}
\newcommand{\et}{\end{theorem}}
\newcommand{\bpf}{\begin{proof}}
\newcommand{\epf}{\end{proof}}
\newcommand{\bl}{\begin{lem}}
\newcommand{\el}{\end{lem}}
\newcommand{\bc}{\begin{cor}}
\newcommand{\ec}{\end{cor}}
\newcommand{\bd}{\begin{defn}}
\newcommand{\ed}{\end{defn}}
\newcommand{\be}{\begin{equation} }
\newcommand{\ee}{\end{equation} }
\newcommand{\bee}{\begin{eqnarray} }
\newcommand{\eee}{\end{eqnarray} }
\newcommand{\R}{{\mathbb R}}
\newcommand{\C}{{\mathbb C}}
\newcommand{\Z}{{\mathbb Z}}
\newcommand{\E}{{\mathbf E}}
\newcommand{\acal}{\mathcal{A}}
\newcommand{\ccal}{\mathcal{C}}
\newcommand{\dcal}{\mathcal{D}}
\newcommand{\fcal}{\mathcal{F}}
\newcommand{\hcal}{\mathcal{H}}
\newcommand{\ncal}{\mathcal{N}}
\newcommand{\rcal}{\mathcal{R}}
\newcommand{\scal}{\mathcal{S}}
\newcommand{\tcal}{\mathcal{T}}
\DeclareMathOperator{\Ai}{Ai}
\DeclareMathOperator{\Cov}{Cov}
\newcommand{\lan}{\langle}
\newcommand{\ran}{\rangle}
\newcommand{\pa}{\partial}
\newcommand{\set}[1]{\{#1\}}
\newcommand{\ihbar}{{\frac{i}{h}}}
\newcommand{\ZN}{|Z_{\Phi_{\hbar, E}}|}
\newcommand{\inprod}[2]{\ensuremath{\left\langle#1,#2\right\rangle}}
\newcommand{\twiddle}[1]{\ensuremath{\widetilde{#1}}}
\newcommand{\W}{\ensuremath{\Omega}}
\newcommand{\RM}{\backslash}
\newcommand{\gives}{\ensuremath{\rightarrow}}
\newcommand{\x}{\ensuremath{\times}}
\newcommand{\abs}[1]{\left\lvert #1 \right\rvert}
\newcommand{\lr}[1]{\ensuremath{\left(#1\right)}}
\newcommand{\dell}{\ensuremath{\partial}}
\newcommand{\half}{{\textstyle \frac 12}}
\newcommand{\vol}{{\operatorname{Vol}}}
\renewcommand{\Re}{{\text Re}}
\newcommand{\w}{\omega}
\newtheorem{theo}{{\sc Theorem}}[section]
\newtheorem{lem}[theo]{{\sc Lemma}}
\newtheorem{prop}[theo]{{\sc Proposition}}
\newtheorem{defn}[theo]{{\sc Definition}}
\newtheorem{cor}[theo]{{\sc Corollary}}
\newtheorem{remark}{{\sc Remark}}
\title[Harmonic Oscillator near the caustic]{Scaling of Harmonic Oscillator eigenfunctions and their nodal sets around the caustic }
\begin{document}
\author{Boris Hanin, Steve Zelditch, Peng Zhou}

\address{Department of Mathematics, Northwestern  University, Evanston, IL
60208, USA}
\email[S. Zelditch]{zelditch@math.northwestern.edu}
\email[P. Zhou]{pengzhou@math.northwestern.edu}
\address{Department of Mathematics, MIT, Cambridge, MA 02139}
\email[B. Hanin]{bhanin@mit.edu}

\thanks{SZ is partially supported by NSF grant DMS- 1541126, and BH is partially supported by NSF grant DMS-1400822.}
\maketitle

\begin{abstract}
We study the scaling asymptotics of the eigenspace projection  kernels $\Pi_{\hbar, E}(x,y)$ of the isotropic Harmonic Oscillator $\hat{H}_{\hbar} = - \hbar^2 \Delta +\abs{x}^2$ of eigenvalue   $E = \hbar(N + \frac{d}{2})$  in the semi-classical limit $\hbar \to 0$. The principal result is an explicit formula  for the scaling asymptotics of $\Pi_{\hbar, E}(x,y)$ for $x,y$ in a $\hbar^{2/3}$ neighborhood of
the caustic $\ccal_E$ as $\hbar \gives 0.$ The scaling asymptotics are applied to the distribution of nodal sets of Gaussian random eigenfunctions
  around the caustic
as $\hbar \to 0$.  In previous work we proved that the density of zeros of Gaussian random eigenfunctions of $\hat{H}_{\hbar}$ have different orders in the Planck constant $\hbar$ in the allowed and forbidden regions: In the allowed region the density is of order $\hbar^{-1}$ while it is $\hbar^{-1/2}$ in the forbidden region. 
%We use the scaling limit of $\Pi_{\hbar, E}$ to study the average density in %the transition region around the caustic between the allowed and forbidden %regions.
 Our main result on nodal sets is that the density of zeros is of order $\hbar^{-\frac{2}{3}}$ in an $\hbar^{\frac{2}{3}}$-tube around the caustic. This tube radius is the `critical radius'. For annuli of larger inner and outer radii $\hbar^{\alpha}$ with $0< \alpha < \frac{2}{3}$ we obtain density results which interpolate between this critical radius result and our prior ones in the allowed and forbidden region. We also show that the Hausdorff $(d-2)$-dimensional measure of the intersection of the nodal set with the caustic is of order $\hbar^{- \frac{2}{3}}$. 
\end{abstract}
%\tableofcontents
%%\setcounter{section}{-1}

\date{\today}

\section{Introduction}
This article is concerned with the scaling asymptotics of eigenspace projections of the isotropic Harmonic Oscillator
\begin{equation} \label{Hh}
\widehat{H}_{\hbar} =  \sum_{j = 1}^d \left(- \frac{\hbar^2}{2}   \frac{\partial^2 }{\partial
x_j^2} + \frac{x_j^2}{2} \right),
\end{equation}
and their applications to nodal sets of random Hermite eigenfunctions when $d\geq 2$. It is well-known that the spectrum of $\widehat{H}_{\hbar}$ consists of the eigenvalues 
\[E=\hbar\lr{N+d/2},\qquad N\in \Z_{\geq 0}.\]
The semi-classical limit at the energy level $E>0$ is the limit
as $\hbar \to 0, N \to \infty$ with fixed $E$, so that $\hbar$ only takes the values \[\hbar_N:=\frac{E}{N+d/2}.\]
We denote the corresponding eigenspaces by
\begin{equation} \label{VNE} V_{\hbar_N, E}: = \{\psi \in L^2(\R^d): \hat{H}_{\hbar_N} \psi = 
E \psi \}. \end{equation}
 The eigenspace projections are
the orthogonal projections
\begin{equation} \label{PiDEF} \Pi_{\hbar_N, E}: L^2(\R^d) \to V_{\hbar_N,E}. \end{equation}

An important feature of eigenfunctions of \Sch operators
$-\hbar^2 \Delta + V$ is that
as $\hbar \to 0$ and with fixed eigenvalue $E$, they are  rapidly oscillating in the classically allowed region 
\[\acal_E:=\set{V(x) \leq E},\]
and exponentially decaying in the classically forbidden region 
\[\fcal_E:=\acal_E^c=\set{V(x)>E},\]
with an Airy type transition along the caustic
\[\ccal_E:=\dell \acal_E=\set{V(x)=E}.\]
This reflects the fact that a classical particle of energy $E$  is  confined to $\acal_E=\set{V(x)\leq E}.$ In the forbidden
region, eigenfunctions exhibit exponential decay as $\hbar \to 0$,  measured by the Agmon distance to the caustic. We refer to \cite{Ag,HS} for background. In dimension one, eigenfunctions have no zeros in the forbidden region, but in dimensions $d \geq 2$ they do. In the allowed region, nodal sets of eigenfunctions behave in a similar way to nodal sets on Riemannian manifolds \cite{Jin},  but in the forbidden region they are sparser. The only results at present on forbidden nodal sets seem to be those of \cite{HZZ,CT}. This article contains the first results on the behavior of nodal sets in the transition region around the caustic. The scaling asymptotics of zeros around the caustic is  analogous in many ways to the scaling asymptotics of eigenvalues of $N \times N$  random  Hermitian matrices around the edge of the Wigner distribution in \cite{TW}, and as will be seen, the scaled Airy kernel of \cite{TW} is the same as the scaled
eigenspace projections when $d = 3$ (see Remark \ref{R:TW}).

When $d=1,$ the eigenspaces $V_{\hbar_N, E}$ have dimension $1$ and it is a classical fact (based on WKB or ODE techniques) that Hermite functions and more general \Sch eigenfunctions exhibit Airy asympotics at the caustic (turning points).  See for instance \cite{Sz,O,T,Th,FW}.  The main purpose of this article is to formulate and prove a generalization of these Airy asymptotics to all dimensions for the
isotropic Harmonic Oscillator. Instead of considering individual eigenfunctions,
we consider the scaling asymtptoics of the eigenspace projection
kernels  \eqref{PiDEF} with $x, y$ in an $\hbar^{2/3}$-tube around 
$\ccal_E$. 
Our main result gives scaling asymptotics for the eigenspace projection
kernels \eqref{PiDEF} around a point $x_0 \in \ccal_E$ of the caustic.
To state the result, we introduce some notation. Let $x_0$ be a point on the caustic $|x_0|^2=1$ for $E=1/2$. Points in an $\hbar^{2/3}$ neighborhood
of $x_0$ may be expressed as $x_0 + \hbar^{2/3} u$ with
$u \in \R^d$.  The caustic is a $(d-1)$-sphere whose
normal direction at $x_0$  is $x_0$, so the normal component of $u$ is
 $u_1 x_0$ when $|x_0| = 1$, where $u_1:=\inprod{x_0}{u}$. We also put  $u':=u-u_1x_0$ for the tangential component, and identify $T_{x_0} \ccal_E \cong T^*_{x_0} \ccal_E \cong \R^{d-1}$.   By rotational symmetry, we 
 may assume  $x_0 = (1, 0, \cdots, 0)$, so that $u=(u_1, u_2, \cdots, u_d) =: (u_1; u')$.

\begin{theo}\label{SCLintro}
Let $x_0$ be a point on the caustic $|x_0|^2=1$ for $E=1/2$. Then for  $u,v \in \R^d$,
\begin{equation}
\Pi_{\hbar,1/2} (x_0 + \hbar^{2/3} u, x_0 + \hbar^{2/3} v) = \hbar^{-2d/3+1/3} \Pi_0(u, v) (1 + O(\hbar^{1/3})), \label{E:CausticScaling}
\end{equation}
where  
\be 
\label{Pi0}
\Pi_0(u_1, u'; v_1, v') := 2^{2/3} (2 \pi)^{-d+1} \int_{\R^{d-1}} e^{i \lan u'-v', p\ran } \Ai(2^{1/3}(u_1 + p^2/2))\Ai(2^{1/3}(v_1 + p^2/2)) dp,
\ee
and  $u_1:=\inprod{x_0}{u}$, $u':=u-u_1x_0$ (similarly for $v_1.$)
%\be
% \Pi_0(u, v)  = (2\pi)^{-d/2} 
 % \hbar^{-2d/3+1/3} 
% \int_\ccal T^{-d/2} e^{- \frac{(u-v)^2}{2T} + \frac{T^3}{24} - \frac{T}{2} \lan %u+v, x_0 \ran}  dT \ee
On the diagonal, let $\abs{x}^2 = \abs{x_0+\hbar^{2/3} u}^2 = 1 + \hbar^{2/3} s+O(\hbar^{4/3})$ with $s = 2 \lan x_0, u\ran \in \R$. Then,
\be\label{pi-tube-1} \Pi_{\hbar}(x,x) = 2^{-d+1}\pi^{-d/2} \hbar^{(1-2d)/3} \Ai_{-d/2}(s)(1+O(\hbar^{1/3})). \ee
The error terms in \eqref{E:CausticScaling} and \eqref{pi-tube-1} are uniform when $u,v,s$ vary over a compact set. 
\end{theo}

\noindent 
Above, $\Ai$ is the Airy function, and $\Ai_{-d/2}$ is a {\it weighted Airy function},  defined  for $k \in \R$ by
\begin{equation}
 \label{eq:Ai_k} \Ai_{k}(s) := \int_\ccal T^{k} \exp \left(  \frac{T^3}{3} - T s\right) \frac{dT}{2\pi i},\qquad s\in \R
\end{equation}
where $\ccal$ is the usual contour for Airy function, running from $e^{-i \pi/3} \infty$ to $e^{i \pi/3} \infty$ on the right half of the complex plane (see Appendix A for a brief review of the Airy function).

\begin{remark}\label{R:TW} When $d = 3$, the kernel \eqref{Pi0} with $u' = v'$, i.e.  $\Pi_0(u_1, u'; v_1, u')$, coincides modulo the factor of $\sqrt{\lambda}$ with  the Airy kernel $K(x,y)$ of \cite{TW}
(see (4.5) of that article). The ``allowed region'' of this article is analogous to the `bulk' in random matrix theory, and the ``caustic'' of this article is
analogous to the ``edge of the spectrum''.  \end{remark}

To our knowledge, this is the first result on Airy scaling asymptotics
of \Sch eigenfunctions in dimensions $d > 1$.  Theorem \ref{SCLintro} is proved in Section \ref{TUBESECT}. The on-diagonal result \eqref{pi-tube-1} is proved first in Proposition \ref{pp:caustic} because it is the important case for the applications
to nodal sets.   It is not obvious that  \eqref{Pi0} reduces to \eqref{pi-tube-1} when $u=v$, but this is proved  by combining \eqref{eq:Ai_k} with Lemma \ref{airy-double} on products of Airy functions. The case of general $E$ is obtained by a simple rescaling as in Section \ref{s:notation}.

The isotropic Harmonic Oscillator is special even among Harmonic Oscillators because of the maximally high multiplicity of eigenvalues, and there is no direct
generalization of Theorem \ref{SCLintro} to eigenspace projections of other
\Sch operators.   However, we expect that
 the scaling asymptotics  generalize  if we replace eigenspaces by 
 spectral projections for small intervals in the spectrum (work in progress).  To explain the
 unique features of \eqref{Hh}, we recall  (Section \ref{S:Eigenspaces}) that
 $V_{\hbar_N, E},$ is spanned by Hermite functions of degree $N$ in $d$ variables and
\begin{equation} \label{dimV} \dim V_{\hbar_N,E}= \frac{1}{(d-1)!} N^{d-1}(1 + O(N^{-1})).\end{equation}
The high multiplicities are due to the $U(d)$-invariance of the isotropic Harmonic
Oscillator, and the periodicity of  the  classical Hamiltonian flow. As a
 result, the quantum propagator $e^{-it\widehat{H}_\hbar/\hbar}$ is (essentially) periodic, and
 the  Mehler formula \eqref{E:Mehler} expresses  the propagator  as an integral over the circle. This
 expression is used to obtain the scaling asymptotics of $\Pi_{\hbar, E}(x,y)$ for $x,y$ near $\ccal_E$.   For general Harmonic Oscillators with incommensurate frequencies the eigenvalues have multiplicity one and the eigenspace projections are of a very different type. It is for this reason that we only consider the isotropic Harmonic Oscillator in this article.

{Besides the edge asymptotics of \cite{TW},} the  asymptotic behavior of $\Pi_{\hbar, E}$ in a $\hbar^{2/3}$- neighborhood of the caustic is reminiscent of the scaling asymptotics of the \szego projector in \cite{BSZ} and of spectral projections on Riemannian manifolds \cite{CH}, which both have universal scaling limits. However, the presence of allowed and forbidden regions is a new feature of Schr\"odinger operators that does not occur for Laplacians or in the complex setting. If one rescales $\Pi_{\hbar, E}$  in an $\hbar-$nieghborhood of a point in the allowed region, one would obtain results analogous to those for Laplacians on Riemannian manifolds in \cite{CH}. But the $\hbar^{2/3}$-scaling asymptotics along the caustic are of a fundamentally different nature. Although there are several studies of  Airy asymptotics of Wigner functions in dimension one around the caustic in phase space  (originating in \cite{Be}), we are not aware of any prior studies of the scaling asymptotics of the spectral projections kernels along the caustic in configuration space. It is an important aspect of Schr\"odinger equations that deserves to be studied in generality. 
 In \cite{HZZ3} we study the  Wigner distribution $W_{\hbar, E}$ of \eqref{PiDEF} and its Airy  scaling asymptotics around the phase space energy surface $\Sigma_E = \{\half (|\xi|^2 + |x|^2) = E\}$, which gives
 a higher dimensional generalization of \cite{Be}.

\subsection{Random Hermite eigenfunctions}
Theorem \ref{SCLintro} has several applications to random Hermite
eigenfunctions, which have recently been studied in \cite{HZZ, PRT,IRT}.
A random  Hermite eigenfunction of eigenvalue $E=\hbar(N + \frac{d}{2})$ is defined by
\begin{equation} \label{PHIN} \Phi_{\hbar,E}(x):=\sum_{\abs{\beta}=N}  a_{\beta}\phi_{\beta,\hbar}(x). \end{equation}
where $\beta=(\beta_1, \cdots, \beta_d)$ and $|\beta| = \beta_1 + \cdots + \beta_d$. 
Here the coefficients $a_{\beta}\sim N(0,1)_{\R}$ are i.i.d. normal random variables and $\{\phi_{\beta, \hbar}\}_{\abs{\beta}=N}$ is an orthonormal basis of $V_{\hbar_ N,E}$ consisting of multivariable Hermite functions (see Section \ref{S:Eigenspaces}). Equivalently, we use the basis to identify $V_{\hbar_N, E} \equiv \R^{\dim(V_{\hbar_N, E})}$ and then endow $\R^{\dim(V_{\hbar_N, E})}$ with the standard Gaussian measure. 
%The invariance of the Gaussian under orthogonal transformations shows %that the definition of $\Phi_{\hbar, E}$ is independent of a choice of %orthonormal basis. Note that we have supressed in the notation $\Phi_{\hbar, E}$ its dependence on $N.$
 The Schwartz kernel of the eigenspace projection \eqref{PiDEF}
%\begin{equation} \label{PiDEF} 
%\Pi_{\hbar, E}: L^2(\R^d) \to V_{\hbar_N,E}
%\end{equation}
is the covariance (or two-point) function of $\Phi_{\hbar, E}:$
\begin{equation} \label{COV} 
\Pi_{\hbar,E}(x,y): ={\bf E} ( \Phi_{\hbar, E} (x) \Phi_{\hbar, E}(y)) = \sum_{\abs{\beta}=N} \phi_{\beta,\hbar}(x)\phi_{\beta,\hbar}(y).
\end{equation}  
The random Hermite functions $\Phi_{\hbar,E}$ are a centered Gaussian field. Their properties are therefore completely determined by $\Pi_{\hbar, E}.$

As a first application of Theorem \ref{SCLintro},  we determine  the expected $L^2$ mass of $L^2$-normalized random Hermite eigenfunctions $\Psi_{h, E} := \Phi_{h,E}(x)/ \|\Phi_{h,E}\|_{L^2}$ in a metric tube $T_{\delta}(\ccal_E) =\{y: d(y, \ccal_E) < \delta\}$ of radius $\delta$ around the caustic. Here $d(y, \ccal_E)$ is the distance $\inf_{x \in \ccal_E} |y -x|$ from $y$ to $\ccal_E$. We define the $L^2$-mass-squared of an $L^2$-normalized eigenfunction $\Psi$ in the $\delta$-tube by
\[M_2(\Psi_{\hbar, E}, T_{\delta}(\ccal_E)) := \int_{\tcal_{\delta}(\ccal_E)} |\Psi_{\hbar, E}(x)|^2 dx.\]

\begin{cor}\label{T:Caustic Mass}  Let  $\delta = \kappa \hbar^{2/3}$, $E=1/2$. Then  the expected $L^2$ mass-squared of an $L^2$-normalized Hermite eigenfunction in $ T_{\delta}(\ccal_E)$
  are given by
$$
 \begin{array}{lll}
\E M_2(\Psi_{\hbar, E}, T_{\delta}(\ccal_E))
% &=& C_1   \hbar^{\lr{d-2}/3}\int_{-\kappa/2}^{\kappa/2} \Ai_{-d/2}(u)du  \lr{1+O(\hbar^{1/3})} \\
&=& C(d)\; \hbar^{d/3}\int_{-2\kappa}^{2\kappa}\int_0^\infty \Ai(s+\rho) \rho^{d/2-1} d\rho ds  \lr{1+O(\hbar^{1/3})} 
\end{array}$$
where $C(d) = \frac{\Gamma(d)}{\Gamma(d/2)^2}$.

%{\blue Peng: I put in the correction in blue now. $\kappa/2$ should be $2 %\kappa$, scaling should be $\hbar^{d/3}$ not $\hbar^{(d-2)/3}$, since the %total $L^2$ mass is $1$. The $L^2$ mass density over the $\hbar^{2/3}$ %shell is $\hbar^{(d-2)/3}$, the previous derivation misses a factor of %$\hbar^{2/3}$ in the volume form $(1+\hbar^{2/3} r)^{d-1} d (1+\hbar^{2/3} %r) d \omega \approx \hbar^{2/3} dr d \omega$, where $\omega \in S^{d-1}$. %One should perhaps change the following remark as well.}
\end{cor}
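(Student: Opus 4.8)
The plan is to turn the random quantity $M_2(\Psi_{\hbar,E},T_\delta(\ccal_E))$ into an integral of the diagonal kernel $\Pi_{\hbar,E}(x,x)$ over the tube and then insert the on-diagonal scaling asymptotics \eqref{pi-tube-1}. The first step uses only the Gaussian structure of the ensemble. Write $\Phi_{\hbar,E}=R\,\Psi_{\hbar,E}$ with $R=\|\Phi_{\hbar,E}\|_{L^2}$. Because the $a_\beta$ are i.i.d. standard Gaussians, the direction $\Psi_{\hbar,E}$ is uniformly distributed on the unit sphere of $V_{\hbar_N,E}$ and independent of $R$, and $\E[R^2]=\sum_{|\beta|=N}\E[a_\beta^2]=\dim V_{\hbar_N,E}$. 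Hence $\E[\Psi_{\hbar,E}(x)\Psi_{\hbar,E}(y)]=\Pi_{\hbar,E}(x,y)/\dim V_{\hbar_N,E}$, and since the integrand $|\Psi_{\hbar,E}(x)|^2$ is nonnegative, Tonelli gives the exact identity
\[
\E\, M_2(\Psi_{\hbar,E},T_\delta(\ccal_E))=\frac{1}{\dim V_{\hbar_N,E}}\int_{T_\delta(\ccal_E)}\Pi_{\hbar,E}(x,x)\,dx .
\]

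Next I would evaluate the integral in polar coordinates $x=r\omega$, $\omega\in S^{d-1}$. For $E=1/2$ the caustic is the unit sphere, so $d(x,\ccal_E)=\big||x|-1\big|$ and $T_\delta(\ccal_E)=\{r\omega:\,|r-1|<\kappa\hbar^{2/3}\}$. Setting $s=2\hbar^{-2/3}(r-1)$, which is precisely the variable appearing in \eqref{pi-tube-1} since then $|x|^2=1+\hbar^{2/3}s+O(\hbar^{4/3})$, one has $s\in(-2\kappa,2\kappa)$ and $r^{d-1}\,dr\,d\omega=\tfrac12\hbar^{2/3}\bigl(1+O(\hbar^{2/3})\bigr)\,ds\,d\omega$. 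Plugging in \eqref{pi-tube-1}, whose error term is uniform on the fixed compact set $[-2\kappa,2\kappa]$, and using $\vol(S^{d-1})=2\pi^{d/2}/\Gamma(d/2)$, I obtain
\[
\int_{T_\delta(\ccal_E)}\Pi_{\hbar,E}(x,x)\,dx=\frac{2^{1-d}}{\Gamma(d/2)}\,\hbar^{\frac{3-2d}{3}}\int_{-2\kappa}^{2\kappa}\Ai_{-d/2}(s)\,ds\ \bigl(1+O(\hbar^{1/3})\bigr).
\]
Dividing by $\dim V_{\hbar_N,E}=\frac{2^{1-d}}{(d-1)!}\,\hbar^{1-d}\,(1+O(\hbar))$, which follows from \eqref{dimV} together with $N=\tfrac{1}{2\hbar}-\tfrac d2$, the powers of $\hbar$ collapse to $\hbar^{d/3}$ and the constants to $\Gamma(d)/\Gamma(d/2)$.

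It remains to rewrite $\Ai_{-d/2}$. Inserting the Gamma-integral representation $T^{-d/2}=\Gamma(d/2)^{-1}\int_0^\infty\rho^{d/2-1}e^{-T\rho}\,d\rho$ (valid since $\Re T>0$ on the contour $\ccal$) into the definition \eqref{eq:Ai_k} and exchanging the order of integration — legitimate for $d\ge 2$ because $\Ai(s+\rho)$ decays superexponentially as $\rho\to+\infty$ and $\rho^{d/2-1}$ is integrable near $0$ — gives
\[
\Ai_{-d/2}(s)=\frac{1}{\Gamma(d/2)}\int_0^\infty\Ai(s+\rho)\,\rho^{d/2-1}\,d\rho .
\]
This supplies the remaining factor $\Gamma(d/2)^{-1}$ and produces $C(d)=\Gamma(d)/\Gamma(d/2)^2$, as claimed.

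The single conceptual input is the probabilistic reduction of the first paragraph; after that the corollary is pure bookkeeping, and the main thing to get right is the matching of constants, which hinges on the Gamma-integral identity for $\Ai_{-d/2}$. All the $O(\hbar^{2/3})$ contributions — from the Jacobian $r^{d-1}$, from the second-order term in $|x|^2$, and from any displacement of the endpoints — are dominated by the $O(\hbar^{1/3})$ error of \eqref{pi-tube-1}, since $\kappa$ is fixed and every $s$-integral runs over the fixed compact set on which Theorem \ref{SCLintro} is uniform.
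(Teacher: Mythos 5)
Your proposal is correct and follows essentially the same route as the paper: reduce $\E M_2$ to $\frac{1}{\dim V_{\hbar,E}}\int_{T_\delta(\ccal_E)}\Pi_{\hbar,E}(x,x)\,dx$ via the Gaussian radius/direction decomposition, integrate the on-diagonal asymptotics \eqref{pi-tube-1} over the tube in (polar) coordinates adapted to the caustic, divide by \eqref{dimV}, and convert $\Ai_{-d/2}$ via the Gamma-integral identity (which is exactly \eqref{eq:Ai_kb} in the appendix). The bookkeeping of powers of $\hbar$, the constant $C(d)=\Gamma(d)/\Gamma(d/2)^2$, and the uniformity of the error terms all check out.
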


The $L^2$ mass density in the $\hbar^{2/3}$-tube is $\hbar^{(d-2)/3}$;
 integration over the thin tube introduces the additional volume factor  $ \hbar^{2/3}$. The proof is given in \S \ref{CORPROOF}.

%The expected mass is also studied in  \cite{HZZ3}, where the Wigner %distribution of \eqref{COV} is used  instead of propagator kernels.

$L^p$ norms of Hermite eigenfunctions are studied in  \cite{Th,KT} (see also
their references) and pointwise bounds on (non-dilated) Hermite functions are given near the caustic in \cite{Sz}
in dimension 1 (see Lemma 1.5.1 of \cite{Th}).  In dimension 1, the maximum of the $k$th (unscaled) Hermite function is achieved at a point $k^{-1/6}$ close to the caustic (turning points). This motivates the question of how 
$L^p$ mass of eigenfunctions builds up around the caustic for general $p$ in all dimensions. The same question for $p = 1$ arises in the study of nodal volumes. Eigenfunctions concentrated on a single trajectory are extremals for low $L^p$ norms. The above Corollary shows that the mass density is constant when $d = 2$ and decays for $d \geq 3$ for a random Hermite
function.

\begin{remark}
For $d = 1$ it is shown in \cite{Th}, Lemma 1.5.2 that the sup norm
of $L^2$ normalized Hermite functions $H_N$ is $\simeq N^{-\frac{1}{12}}$.
As reviewed in \S \ref{S:Eigenspaces}, the  sup norm of the semi-classically scaled eigenfunctions $\phi_{N, \hbar}$ of this article \eqref{PHI} equal
 $\hbar^{-d/4}$ times the sup norm of the unscaled Hermite functions $H_N$. 
If we fix the energy level $E$ and set
$\hbar (N + \half)= E$ as $\hbar \to 0, N \to \infty$, then
$||\phi_{N, \hbar}||_{L^{\infty}} \simeq \hbar^{-1/4} \hbar^{1/12} E^{-1/12}
\simeq \hbar^{-1/6}$,  which agrees with the mass density formula above.
\end{remark}

\subsection{Applications to nodal sets of random Hermite eigenfunctions}
One of our principal motivations to study the scaling asymptotics of the eigenspace projections is to understand the transition between the behavior of
nodal sets of random eigenfunctions $\Phi_{\hbar, E}$ in the allowed and forbidden regions. In particular, we study in 
\begin{figure}{rt}

\vspace{-11pt}
\begin{center}
  \includegraphics[width=.4 \textwidth]{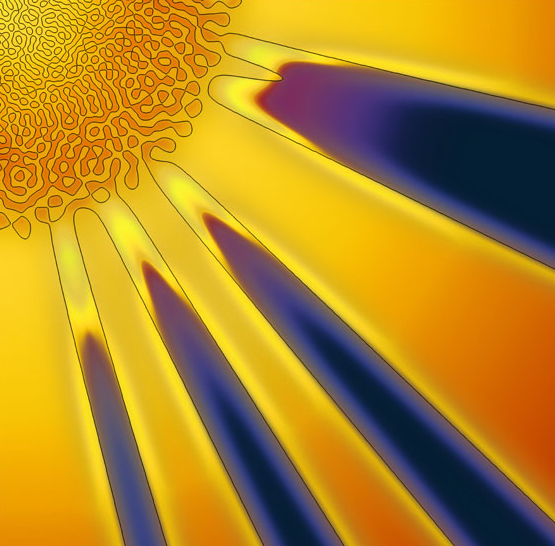} 
\end{center}
\caption{Nodal line (colored black) of $\Phi_{\hbar, E},$ $d=2.$ (Taken from \cite{BH})\label{F:Heller}}
\vspace{-10pt}
\end{figure}
Theorem \ref{CAUSTIC} the average density of the nodal set 
\[Z_{\Phi_{\hbar, E}}:=\set{\Phi_{\hbar, E}(x)=0}\] 
near $\ccal_E.$ Let us denote by $\ZN$ the random measure of integration over $Z_{\Phi_{\hbar, E}}$ with respect to the Euclidean hypersurface measure (the co-dimension $1$ Hausdorff measure $\hcal^{d-1}$) of the nodal set. Thus for any measurable set $B \subset \R^d$,
$$\ZN (B) := \hcal^{d-1} (B \cap Z_{\Phi_{\hbar, E}}).$$
Its expectation $\E \ZN$ is the average density or distribution of zeros and is the measure on $\R^d$ defined by
$$\E \ZN (B) = \int_{V_{\hbar, E}}  \hcal^{d-1} (B \cap Z_{\Phi_{\hbar, E}}) d\gamma_{\hbar, E},$$
where $\gamma_{\hbar, E}$ is the Gaussian from which $\Phi_{\hbar, E}$ is sampled. In recent work \cite{HZZ}, the authors showed that $\E \ZN$ has a different order as $\hbar \to 0$ in the allowed and forbidden regions. 

\begin{theo}[\cite{HZZ} 2013]  \label{THZZ} 
Fix $E>0.$ The measure $\E\ZN$ has a density $F_{\hbar, E}$ with respect to Lebesgue measure given by
\begin{align}
&F_{\hbar, E}(x) = \hbar^{-1}\cdot c_d \sqrt{2E-\abs{x}^2}\lr{1+O(\hbar)}\label{E:Allowed Density1}, &  \mbox{ if}~x\in \acal_E\backslash \{0\}\\
&F_{\hbar, E}(x) = \hbar^{-1/2}\cdot C_d \frac{E^{1/2}}{\abs{x}^{1/2}\lr{\abs{x}^2-2E}^{1/4}}\lr{1+O(\hbar)},&\mbox{ if}~ x\in \fcal_E, 
\label{E:Forbidden Density1}
\end{align}
where the implied constants in the `$O$' symbols are uniform on compact subsets of the interiors of $\acal_E\backslash \{0\}$ and $\fcal_E$, and where $c_d=\frac{\Gamma\lr{\frac{d+1}{2}}}{\sqrt{d\pi}\Gamma\lr{\frac{d}{2}}}$ and $C_d=\frac{\Gamma\lr{\frac{d+1}{2}}}{\sqrt{\pi}\Gamma\lr{\frac{d}{2}}}$ depend only on $d.$ 
%\[c_d=
%\qquad
%\text{and}\qquad C_d =
%\frac{\Gamma\lr{\frac{d+1}{2}}}{\sqrt{\pi}\Gamma\lr{\frac{d}{2}}}.\]
\end{theo}

Our main result on nodal sets (Theorem \ref{CAUSTIC}) gives scaling asymptotics for the average nodal density that `interpolate' between \eqref{E:Allowed Density1} and \eqref{E:Forbidden Density1}. The computer graphics of Bies-Heller \cite{BH} (reprinted as Figure \ref{F:Heller} in \cite{HZZ}) show that the nodal set in $\acal_E$ near the caustic $\partial \acal_E$ consists of a large number of highly curved nodal components apparently touching the caustic while the nodal set in $\fcal_E$ near $\partial \acal_E$ consists of fewer and less curved nodal components all of which touch the caustic. The scaling limit of the density of zeros in a shrinking neighborhood of the caustic, or in annular subdomains of $\acal_E$ and $\fcal_E$ at shrinking distances from the caustic are given Theorems \ref{CAUSTIC} and \ref{T:Mainb}. The varying density of zeros in $\acal_E, \fcal_E$ near the caustic proved there is the new phenomenon related to nodal sets at issue in this article.    For the expected density of zeros in the $\hbar^{2/3}$-neighborhood of the caustic, we apply the Kac-Rice formula to Theorem \ref{SCLintro};  for the expected density of zeros in the $\hbar^{\alpha}$-neighborhood of the caustic, where $0<\alpha <2/3$, we invoke the Kac-Rice formula together with a non-standard stationary phase method (see Propositions \ref{pp:allowed} and \ref{pp:forbidden}).

%\subsection{Re-scaled nodal set in shrinking balls around caustic points} 
The nodal set of $\Phi_{\hbar, E}$ near the caustic consists of a mixture of components from the forbidden nodal set and from the allowed nodal set (see Figure \ref{F:Heller}). To be more precise, if $\psi\in V_{\hbar, E}$ is non-zero, then there do not exist nodal domains contained entirely in $\fcal_E$, where the potential $V:=\abs{x}^2/2$ is greater than the energy $E$, because $\Delta \psi = (V - E) \psi$ forces $\psi$ and $\Delta \psi$ to have the same sign in $\fcal_E$. In a nodal domain $\dcal$ we may assume $\psi > 0$, but then $\psi$ is a positive subharmonic function in $\dcal$ and cannot be zero on $\partial \dcal$ without vanishing identically. Hence, every nodal component which intersects $\fcal_E$ must also intersect $\acal_E$ and therefore $\ccal_E$. As indicated in the computer graphics of \cite{BH}, some of these components remain in a very small tube around $\ccal_E$ and some stretch far out into $\fcal_E$ and most of these (apparently) stretch out to infinity. The density results of Theorem \ref{THZZ} suggest that there should only exist on average order of $\hbar^{-1/2}$ of the latter, not enough to explain the $\hbar^{-\frac{2}{3}}$ size of $\E \ZN$ around the caustic proved in Theorem \ref{CAUSTIC} below (see also Remark \ref{R:Unscaled}). Hence, one expects the main contribution to the nodal density near $\ccal_E$ to come from nodal components living mainly in $\acal_E$ which cross $\ccal_E$. 

To state our first result precisely, we fix $x \in \ccal_E$, where $E=1/2$, and study the  rescaled ensemble
\[\Phi_{\hbar, E}^{x,\alpha}(u):=\Phi_{\hbar,E}(x+\hbar^\alpha u)\]
and the associated hypersurface measure 
\[\abs{Z_{\hbar, E}^{x,\alpha}}(B)=\hcal^{d-1}\lr{\set{\Phi_{\hbar, E}^{x,\alpha}(v)=0}\cap B},\qquad B\subset \R^d.\]
Our main result gives the asymptotics of $\E \abs{Z_{\hbar, E}^{x,\alpha}}$ when $\alpha = 2/3$ is in terms of the weighted Airy functions $\Ai_k$ (see \eqref{eq:Ai_k}).

\begin{theo}[Nodal set in a shrinking ball around a caustic point]\label{CAUSTIC} Fix $E=1/2$ and $x\in \mathcal C_E$, i.e. $|x|=1$. For any bounded measurable $B\subseteq \R^d,$ 
\[\E \abs{Z_{\hbar, E}^{x,2/3}}(B)=\int_B \fcal(u)du,\]
where 
\be \fcal(u)= \lr{2\pi}^{-\frac{d+1}{2}}\int_{\R^d}|\Omega(u)^{1/2}\xi|e^{-\abs{\xi}^2/2}d\xi ~(1 + O(\hbar^{1/3}))\label{E:CausticDensity}\ee
and $\Omega=\lr{\Omega_{ij}}_{1\leq i,j\leq n}$ is the symmetric matrix
\begin{equation}
\Omega_{ij}(u) = x_i x_j \left( \frac{\Ai_{2-d/2}(s)}{\Ai_{-d/2}(s)} - \frac{\Ai^2_{1-d/2}(s)}{\Ai^2_{-d/2}(s)} \right) + \frac{\delta_{ij}}{2} \frac{\Ai_{-1-d/2}(s)}{\Ai_{-d/2}(s)}.\label{OMEGAintro}
\end{equation}
where $s = 2\lan u, x \ran$. 
The implied constant in the error estimate from \eqref{E:CausticDensity} is uniform when $u$ varies in compact subsets of $\R^d$. 
\end{theo}
\begin{remark}
The leading term in $\fcal$ is $\hbar$-independent and positive everywhere since the matrix $\Omega_{ij}(u)$ as a linear operator has nontrivial range. Indeed, as shown in Proposition \ref{Ai-pos}, $\Ai_{-d/2}(s) > 0$ for all integers $d \geq 2$, hence $\delta_{ij}$ term in \eqref{OMEGAintro} is nonzero. The matrix $\lr{x_i x_j}_{i,j}$ in \eqref{OMEGAintro} is a rank $1$ projection onto the $x-$direction. Hence, since the dimension $d \geq 2$, it cannot cancel out the second term.
\end{remark}
\begin{remark}\label{R:Unscaled}
Theorem \ref{CAUSTIC} says that if $x\in \ccal_E$ and $\twiddle{B}_\hbar=x+\hbar^{2/3}B$ for some bounded measurable $B,$ then 
\[\E{\abs{Z_{\Phi_{\hbar,E}}}}(\twiddle{B}_\hbar)=\hbar^{2/3\lr{d-1}}\E{\abs{Z_{\hbar, E}^{x,\alpha}}}(B)=\hbar^{-2/3}\int_{\twiddle{B}_\hbar}\fcal(\hbar^{-2/3}\lr{y-x})dy,\]
which shows that the average (unscaled) density of zeros in a $\hbar^{2/3}-$tube around $\ccal_E$ grows like $\hbar^{-2/3}$ as $\hbar\gives 0.$
\end{remark}
The choice of radius $\delta = \hbar^{\frac{2}{3}}$ is dictated by the scaling asymptotics of the associated covariance (2-point) function \eqref{COV}, which are stated in Theorem \ref{SCLintro} and proved in  \S \ref{TUBESECT}. The rescaled random Hermite functions $\Phi_{\hbar, E}^{x,2/3}(v)$ converge to an infinite dimensional Gaussian ensemble of solutions of a scaled eigenvalued problem, which is identified in Section \ref{LTENSEMBLE}.

\begin{remark} 
%The scaled Airy kernel \eqref{Pi0} is used to determine
%the density of scaled zeros around the caustic in Theorem \ref{CAUSTIC}. 
As mentioned above, the scaling asymptotics of zeros around the caustic,
especially in the radial (normal) direction,   is analogous to the scaling asyptotics of  eigenvalues of
random Hermitian matrices around the edge of the spectrum. But the scaled
radial distribution of zeros 
of random Hermite eigenfunctions does not seem to be a determinantal process, while
 eigenvalues of random Hermitian matrices is determinantal.
\end{remark}

\subsection{Sub-critical shrinking of balls around caustics points}\label{ANNULI}  So far, we have considered in detail the rescaling of the eigenspace projections and nodal sets in a $\hbar^{2/3}$-tube around the caustic. In \cite{HZZ} we have studied the bevavior within the allowed and forbidden regions at fixed ($\hbar$-independent) distance from the caustic. We refer to such regions as the `bulk'.  In  section \ref{ANNULISECT}, we fill in the gaps between the caustic tube and the `bulk' in the allowed and forbidden regions. That is, we consider shrinking annuli around the caustic which lie outside the $\hbar^{2/3}$-tube in a sequence of rings around the caustic. In this way, we obtain scaling results that interpolate between the bulk results of \cite{HZZ} and the caustic scaling results above.
This is the purpose of studying sub-critical rescaling exponents, i.e. the nodal set around $x + \hbar^\alpha v$ where $x \in \ccal_E$ and $0<\alpha<2/3$.%In short, 
 
\begin{theo}\label{T:Mainb}
Fix $E=1/2$ and $x \in \ccal_E$, and $\alpha<2/3$. Then the rescaled expected distribution of zeros $\E |Z^{x, \alpha}_{\hbar, E}|(u)$ has a density $F^{x, \alpha}_{\hbar, E}(u)$ with respect to Lebesgue measure given by
\begin{align}
&F^{x, \alpha}_{\hbar, E}(u) = \hbar^{-(1-\frac{3}{2}\alpha)} c_d \cdot s^{1/2} \lr{1 + O(\hbar^{\alpha/2}) + O(\hbar^{1-(3/2)\alpha})} \label{E:Allowed Density 2},   \mbox{ if}~x+\hbar^\alpha u\in \acal_E\backslash \{0\}\\
&F^{x, \alpha}_{\hbar, E}(u) = \hbar^{-\half(1-\frac{3}{2}\alpha)}    C_d \cdot s^{-1/4} \lr{1 + O(\hbar^{\alpha/2}) + O(\hbar^{1-(3/2)\alpha})},\mbox{ if}~ x+\hbar^\alpha u\in \fcal_E ,
\label{E:Forbidden Density 2}
\end{align}
where $s=2\inprod{u}{x},$ the implied constants in the `$O$' symbols are uniform for $u$ in a compact subset of $\R^d$,  and $c_d,C_d$ are positive dimensional constants. 
\end{theo}
\begin{remark}
As in Remark \ref{R:Unscaled}, the rescaled density $F_{\hbar, E}^{x,\alpha}$ has an extra $\hbar^\alpha$ factor compared with the unscaled density $\fcal_{\hbar, E}(x+\hbar^\alpha u)$ from Theorem \ref{THZZ}. Hence, writing $d(y,\ccal_E)$ for the distance from $y$ to the caustic, equations \eqref{E:Allowed Density 2} and \eqref{E:Forbidden Density 2} agree with the results of Theorem \ref{THZZ}
 \begin{equation*}
 \begin{cases}
 F_{\hbar,E}(x+\hbar^\alpha u) \sim \hbar^{-1} d\lr{x+\hbar^\alpha u,\mathcal C_E}^{1/2}  \sim \hbar^{-1+\alpha/2},& x\in \acal_E\\
 F_{\hbar,E}(x+\hbar^\alpha u) \sim \hbar^{-1/2} d\lr{x+\hbar^\alpha u,\mathcal C_E}^{-1/4} \sim \hbar^{-1/2-\alpha/4},& x\in \fcal_E
\end{cases}
\end{equation*}
since $-1+\alpha/2+\alpha = -(1-3\alpha/2), -1/2-\alpha/4+\alpha = -\half(1-3\alpha/2)$.
\end{remark}

\begin{remark}
The two error terms $O(\hbar^{\alpha/2}), O(\hbar^{1-(3/2)\alpha})$ come from the non-standard stationary phase expansion used to prove Theorem \ref{T:Mainb}: the $O(\hbar^{\alpha/2})$ comes from approximating the critical point of the phase function, while the $O(\hbar^{1-(3/2) \alpha})$ marks the failure of the stationary phase expansion as $\alpha$ approaches $2/3$. 
\end{remark}
 The main ingredient in the proof is the sub-critical scaling asymptotics of the covariance function \eqref{COV} in Propositions \ref{pp:allowed} resp. \ref{pp:forbidden} around points in $\acal_E$, resp. $\fcal_E$.

\subsection{Nodal set intersections with the caustic} Our next result measures the density of intersections of the nodal set with the caustic. This is much simpler than measuring the density in shrinking tubes or annuli, since it is not necessary to rescale the covariance kernel  \eqref{COV}. For an open set  $B \subset \ccal_E$ we consider 
$$ \hcal^{d-2} (B \cap Z_{\Phi_{\hbar, E}}).$$
When $d = 2$ this means to count the number of nodal intersections with the caustic. Since the Gaussian measure is $SO(d)$ invariant and the caustic is a sphere, the average nodal density along the caustic is constant. 
%In dimension 2, the expected number of nodal intersections is especially concrete:

\begin{theo} \label{T:Caustic Zeros} Fix $d \geq 2,$ $E=1/2,$ and define the constant
\[\fcal_{\ccal_E,d} = \frac{\Gamma\lr{\frac{d}{2}}}{\sqrt{2\pi} \Gamma\lr{\frac{d-1}{2}}}\cdot \lr{\frac{\Ai_{-1-d/2}(0)}{\Ai_{-d/2}(0)}}^{1/2},\]
where $\Ai_k$ are defined in \eqref{eq:Ai_k}. Then, as $\hbar \gives 0,$ for any open set  $B \subset \ccal_E$
\[\E  \hcal^{d-2} (B \cap Z_{\Phi_{\hbar, E}}) = \hbar^{-\frac{2}{3}}  \fcal_{\ccal_E,d}\vol_{S^{d-1}}(B) \lr{1+O(\hbar^{1/3})}.\] 
In particular, if $d = 2$,  $\E \left( \# Z_{\Phi_{\hbar, E}} \cap \ccal_E \right) = C_0 \; \hbar^{-2/3}\lr{1+O(\hbar^{1/3})} $ where
\[ C_0 = \sqrt{2} \lr{\Ai_{-2}(0) / (\Ai_{-1}(0)) }^{1/2}>0.\]
\end{theo}

The proof is to use a  Kac-Rice formula for the expected number of intersections of the nodal set with the caustic. The relevant covariance kernel is the restriction  of $\Omega_{ij}$ in \eqref{OMEGAintro} to the tangent plane of the caustic (hence the radial component $x_ix_j$ drops out). It is analogous to the formula in \cite{TW} (Proposition 3.2) for the expected number of intersections of nodal lines with the boundary of a plane domain, and we therefore omit  its proof.

%\subsection{\label{SLINTRO}Scaling limit around the caustic} 

%The principal ingredient in Theorem \ref{CAUSTIC} is the scaling asymptotics of \eqref{COV} around the caustic. It is the main analytical result of this article. 
 
%\begin{theo}\label{SCLintro}
%Let $x_0$ be a point on the caustic $|x|^2=1$ for $E=1/2$, then
%\be
%\Pi_\hbar (x_0 + \hbar^{2/3} u, x_0 + \hbar^{2/3} v) = \hbar^{-2d/3+1/3} \Pi_0(u, v) (1 + O(\hbar^{1/3})) 
%\ee
%where  
%\be 
%
%\be
% \Pi_0(u, v)  = (2\pi)^{-d/2} 
 % \hbar^{-2d/3+1/3} 
% \int_\ccal T^{-d/2} e^{- \frac{(u-v)^2}{2T} + \frac{T^3}{24} - \frac{T}{2} \lan %u+v, x_0 \ran}  dT \ee
%\end{theo}
%See \eqref{Pi0} for a formula in terms of Airy functions. 
%The proof is given in Section \ref{TUBESECT}. The result on the diagonal is proved first in Proposition \ref{pp:caustic}.

\subsection{Outline of the proofs}\label{S:Outline of Proofs}
As mentioned above, the proofs of Theorems \ref{CAUSTIC}, \ref{T:Mainb}, and \ref{T:Caustic Zeros} are based on a detailed analysis of the Kac-Rice formula (Lemma \ref{L:Gaussian KR} in \S \ref{KRSECT}), which gives a formula for the average density of zeros at $x$ in terms of a covariance matrix $\Omega_x$ depending only on $\Pi_{\hbar, E}(x,y)$ and its derivatives
\[d_x  \Pi_{\hbar, E}(x,y), d_y \Pi_{\hbar, E}(x,y), d_x d_y \Pi_{\hbar, E}(x,y)\]
restricted to the diagonal. The kernels behave differently depending on the position of $x$ relative to the caustic and have contour integral representations of the following type
\[ \int_\ccal \frac{1}{z^k} \exp \left({\frac{ -u z }{\hbar^{1-\alpha}}+ \frac{z^3}{\hbar}}\right) dz \]
where $\ccal$ is the usual Airy contour (Appendix A), $u\in\C$ is a parameter, and we simplified the phase function and the amplitude by keeping only the leading term. See \eqref{eq:Pi_Contour} for the precise formulas. The critical points of the simplified phase function are $z_{\pm} = \pm \sqrt{u/3} \hbar^{\alpha/2}$.  If $\alpha=0$, we would have two separate critical points, which were analyzed in our previous paper \cite{HZZ}. If $0< \alpha <2/3$, then as $\hbar \to 0$ the two critical points $z_\pm$ starts to move closer, with  $|z_+-z_-| \sim \hbar^{\alpha/2}$, and each corresponds with a Gaussian bump of width $\hbar^{1/2-\alpha/4}$. Hence the two bumps will start overlapping when $\alpha = 2/3$. In more details, the integral near each critical point can be roughly evaluated as follows 
\bea &&C_\hbar^{-1} \int (z_c + \eta)^{-k} e^{ - \frac{\eta^2}{\hbar^{1-\alpha/2}}} d\eta =   C_\hbar^{-1} \int ((z_c)^{-k} +C(k) (z_c)^{-k-2} \eta^2 + \cdots)   e^{ - \frac{\eta^2}{\hbar^{1-\alpha/2}}} d\eta \\
& =&(z_c)^{-k} (1 +C(k) (z_c)^{-2} \hbar^{1-\alpha/2} + \cdots) = (z_c)^{-k} (1 + O(\hbar^{1-(3/2)\alpha}))
\eea
where $C_\hbar =\int e^{ - \frac{\eta^2}{\hbar^{1-\alpha/2}}} d\eta$ is a normalization constant. The point of the above sketch computation is to show that due to the singularity of $z^{-k}$, the error term is enhanced to $O(\hbar^{1-(3/2)\alpha})$ from the naive expectation $O(\hbar/Hess(\Phi)) = O(\hbar^{1-\alpha/2})$. 
Hence the break down of the expansion at $\alpha=2/3$. Another way to see the break down of stationary phase at $\alpha=2/3$, which we do not go into detail in the paper\footnote{The reason we did not adopt the rescaling $z = \hbar^{\alpha/2} \eta$ is that, in the actual calculation, we have higher order terms $z^4, z^5$ etc in the phase function, and after the change of variable we get $e^{\hbar^{-1} z^4} = e^{\hbar^{-1+2\alpha} \eta^4}$, which cannot be treated perturbatively unless $\alpha > 1/2$. To use this method, one needs to use the Malgrange preparation theorem (see \cite{FW}) to get rid of the higher order terms, which we choose not to do in this paper.} , is to rescale $z$ centered at $z=0$ rather than at $z=z_c$. Namely if we set $z = \hbar^{\alpha/2} \eta$, then we get
\[ \int \eta^{-k} \exp \left({\frac{ -u \eta + \eta^3}{\hbar^{1-(3/2) \alpha}}}\right) d\eta \]
Hence we see at $\alpha = 2/3$, the $\hbar$ factor in the exponent is unity, hence we do not have a small parameter to do asymptotic expansion, and have to express the result in terms of the weighted Airy functions $\Ai_{-k}$ defined in \eqref{eq:Ai_k}.

%This relation with the Airy function may be seen in two ways, one from the expression of $\Pi_{\hbar, E}$ in terms of the propagator \eqref{eq:pixy}, and the other from the relation of $\Pi_{\hbar, E}(x,x)$ to its so-called Wigner distribution $W_{\hbar, E}(x, \xi)$, which may be expressed in terms of the Laguerre function. In the caustic region, the Laguerre function and hence the Wigner function $W_{\hbar, E}(x, \xi)$ have asymptotics in terms of Airy functions. Since the analysis of the Wigner function is lengthy and of independent interest, we carry it out in the related article \cite{HZZ3}, along with a discussion of Wigner functions of spectral projections for general Harmonic Oscillators.

\subsection*{Notation\label{s:notation}}
To get rid of  $2E$ factors, we will set $2E=1$ for the rest of the article, and drop the $E$ subscript. The general case can be obtained by the following replacement
\begin{equation}
 \label{eq:scaling} x \mapsto x':= x/\sqrt{2E} ,\quad p \mapsto p' :=p/\sqrt{2E}, \quad \hbar \mapsto \hbar':= \hbar/2E
\end{equation}
and 
\[ \Pi_{\hbar, E}(x,y) = (2E)^{-d/2}  \Pi_{\hbar'}(x',y'), ~~~\Omega_{\hbar, E}  = (2E)^{-1} \Omega_{\hbar'}(x'), ~~~\; F_{\hbar,E}(x) =  (2E)^{-1/2} F_{\hbar'}(x'). \]
For general $E$, the dilation factor $\delta=C\hbar^\alpha$ in Theorem \ref{T:Mainb} should be changed to $C (\hbar/2E)^\alpha$. We will also abbreviate throughout $V_{\hbar_N, E}=V_{\hbar, E}$ with the understanding that $E$ is fixed so that $\hbar=\hbar_N=E/N-d/2.$

\section{\label{BACKGROUND} Background}
We follow the notation of \cite{HZZ} and refer there for background on
the isotropic Harmonic Oscillator, its spectrum and its spectral projections. 
We also refer to that article for background on the Kac-Rice formula and
other fundamental notions on Gaussian random Hermite eigenfunctions.
In this section, we recall some of the basic definitions and facts that are
used in the proofs of the main results.

%Two spectral functions play a fundamental role in the proofs:

%\begin{itemize}

%\item The most important spectral function is the covariance kernel or %spectral projections kernel $\Pi_{\hbar, E}$ \eqref{COV}, whose properties %are recalled in \S \ref{PISECT};
%\bigskip

%\item The propagator $U_h(t, x, y)$, which is used to give a convenient %expression \eqref{E:Projector Integral Forbidden} for $\Pi_{\hbar, E}$ (see %\S \ref{USECT}); \bigskip

%\end{itemize}

%\subsection{An orthonormal basis}

\subsection{Eigenspaces}\label{S:Eigenspaces}

Fix $E = 1/2$. An orthonormal basis of  the eigenspace $V_{\hbar_N, E}$ \eqref{VNE}
%(the $E=\hbar\lr{N+d/2}$-eigenspace of $\widehat{H}_{\hbar}$) 
is given by 
%{\red Peng: It seems wrong to have $2E$ factors in $p_{\beta}\lr{2 E \hbar^{-1/2}\; x}$. The formula is for $2E=1$. To recover the general case, one can use the replacement rule: then $x/\sqrt{\hbar}$ changes to $(x/\sqrt{2E}) / \sqrt{ \hbar/2E} = x / \sqrt{\hbar}$, it turns out the $2E$ factors cancels out. So I removed it, and added $E=1/2$ in the begining.} 
\begin{equation}
\phi_{\beta,\hbar}(x)=\hbar^{-d/4}p_{\beta}\lr{\hbar^{-1/2}\; x
}e^{-x^2/2h},\label{PHI}
\end{equation}
where $\beta=\lr{\beta_1,\ldots, \beta_d}\geq (0,\ldots,0)$ is a $d-$dimensional multi-index with $|\beta|=\sum_i \beta_i = N$ and $p_{\beta}(x) = \prod_{j=1}^d p_{\beta_j}(x_j)$ is the product of the Hermite polynomials $p_k$ (of degree $k$) in one variable, with the normalization that $\int_\R p_k(x)^2 e^{-x^2} dx = 1$. The eigenvalue of $\phi_{\beta,\hbar}$ is given by
\begin{equation} \label{EV}
H_\hbar \phi_{\beta,\hbar} = \hbar (|\beta|+d/2) \phi_{\beta,h}.
\end{equation}
The multiplicity of the eigenvalue $ \hbar (|\beta|+d/2)$ is the partition function of $|\beta|$, i.e. the number of $\beta=\lr{\beta_1,\ldots, \beta_d}\geq (0,\ldots,0)$  with a fixed value of $|\beta|$. Hence 
\[ \dim V(\hbar, E) = {N+d-1 \choose d-1 } = \frac{1}{(d-1)!} N^{d-1}(1 + O(N^{-1}) .\]
For further background and notation we refer to \cite{HZZ}.

\subsection{\label{USECT} Mehler Formula for the propagator}
The Mehler formula gives an explicit expression for the Schwartz (Mehler) kernel 
\[ U_\hbar (t,x,y) = e^{-\frac{i}{\hbar} t \widehat{H}_\hbar}(x,y) = \sum_{N=0} ^\infty e^{-\frac{i}{\hbar} t E_N} \Pi_{\hbar, E_N}(x,y)\] 
of the propagator, $e^{-\ihbar t \widehat{H}_\hbar}.$ The Mehler formula \cite{F} reads
\begin{equation}
 U_\hbar (t, x,y) =e^{-\ihbar t \widehat{H}_\hbar}(x,y)= \frac{1}{(2\pi i \hbar \sin t)^{d/2}}
 \exp\left( \frac{i}{\hbar}\left(
 \frac{\abs{x}^2 + \abs{y}^2}{2} \frac{\cos t}{\sin t} - \frac{x\cdot
 y}{\sin t} \right) \right),
 \label{E:Mehler}
\end{equation}
where $t \in \R$ and $x,y \in \R^d$. The right hand side is singular at $t=0.$ It is well-defined as a distribution, however, with $t$ understood as $t-i0$. Indeed, since $\widehat{H}_\hbar$ has a positive spectrum the propagator $U_\hbar$ is holomorphic in the lower half-plane and $U_\hbar (t, x, y)$ is the boundary value of a holomorphic function in $\{\Im t < 0\}$.

\subsection{\label{PISECT} Spectral projections}
We also use that the spectrum of $\widehat{H}_\hbar$ is easily related to the integers $|\beta|$. The operator with the same eigenfunctions as $\widehat{H}_\hbar$ and eigenvalues $\hbar |\beta|$ is often called the number operator, $\hbar\ncal$. If we replace $U_\hbar (t)$ by $e^{- \frac{i t}{\hbar} \ncal}$ then the spectral projections $\Pi_{\hbar, E}$ are simply the Fourier coefficients of $e^{- \frac{i t}{\hbar} \ncal}$. In \cite{HZZ}, we used the related formula,
\begin{align}
\label{E:Projector Integral Forbidden}
\Pi_{\hbar, E}(x,y)&=\int_{-\pi-i\epsilon}^{\pi-i\epsilon} U_\hbar (t,x,y) e^{\ihbar t E} \frac{dt}{2\pi},
\end{align}
where, as before, $E=\hbar\lr{N+d/2}.$ The integral is independent of $\epsilon$. Using the Mehler formula \eqref{E:Mehler} we obtain a rather explicit integral representation of \eqref{COV}. If we introduce a new complex variable $z = e^{-it}$, then the above integral can be written as
\begin{equation}
\label{eq:pixy}
\Pi_{\hbar, E_N}(x,y) = \oint_{C_\epsilon}\left(\frac{z}{\pi \hbar (1-z^2)} \right)^{d/2}e^{-\frac{1}{\hbar} \left( \frac{1+z^2}{1-z^2}\frac{x^2+y^2}{2} - \frac{2z}{1-z^2} x \cdot y + E\log z \right)}  \frac{dz}{2\pi i z} 
\end{equation}
where the contour is a circle $C_\epsilon=\set{\abs{z}=1- \epsilon}$ traversed counter-clockwise.

%\subsection{Kac-Rice formula  near the caustic}\label{TMainbSECT}

\subsection{\label{KRSECT} The Kac-Rice Formula} As with Theorem \ref{THZZ}, the proofs of Theorems \ref{CAUSTIC}, \ref{T:Mainb} and \ref{T:Caustic Zeros} are based on the Kac-Rice formula \cite[Thm. 6.2, Prop. 6.5]{AW} for the average density of zeros.  The Kac-Rice formula is  the formula for the pushforward of the Gaussian measure on the random Hermite functions under the evalution maps $ev_x(\phi) = (\phi(x), \nabla \phi(x))$. It is valid at $x\in \R^d$ as long as the so-called 1-jet spanning property holds at $x.$ Namely, $ev_x: V_{h_N, E} \to \R^{d +1}$ is surjective, or equivalently, the $(d +1) \times (d +1)$ covariance matrix  $\Sigma_x = \Cov(\Phi_{\hbar, E}(x), \, \nabla_x\Phi_{\hbar,E}(x))$ of values and gradients is invertible. We now verify that the condition for the validity of the Kac-Rice formula  holds.

\begin{prop}
  For any $x \in \R^d \RM \{0\}$, the 1-jet evaluation map
  \[ ev_x: V_{h_N, E} \to \R^{d +1}, \quad \varphi \mapsto  (\varphi(x), \pa_1 \varphi(x), \cdots, \pa_d \varphi(x)) \]
  is surjective, where $V_{h_N, E}$ is defined in Eq \eqref{VNE}. Equivalently, if $V_{h_N, E}$ is equipped with a
  standard Gaussian measure $\gamma_{V}$ induced by the inner product on $V_{h_N, E}$, then its pushforward under ${ev_x}_*$ is a non-degenerate Gaussian measure on $\R^{d+1}$.
\end{prop}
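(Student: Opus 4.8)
The plan is to show directly that the $(d+1)\times(d+1)$ covariance matrix $\Sigma_x = \mathrm{Cov}(\Phi_{\hbar,E}(x),\nabla_x\Phi_{\hbar,E}(x))$ is invertible for each fixed $x\neq 0$, which is equivalent to surjectivity of $ev_x$ by the general fact that the image of a Gaussian vector under a linear map is non-degenerate iff the map is onto the relevant coordinate subspace. Suppose, for contradiction, that $\Sigma_x$ is singular. Then there is a nonzero covector $(c_0,c_1,\dots,c_d)\in\R^{d+1}$ annihilating the image of $ev_x$, i.e. the linear functional $\ell(\varphi) := c_0\varphi(x) + \sum_{j=1}^d c_j\,\partial_j\varphi(x)$ vanishes identically on $V_{\hbar_N,E}$. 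Equivalently, applying the reproducing-kernel description of the Gaussian field, the function $y\mapsto c_0\Pi_{\hbar,E}(x,y) + \sum_j c_j\,\partial_{x_j}\Pi_{\hbar,E}(x,y)$ is identically zero, since this function is the element of $V_{\hbar_N,E}$ representing $\ell$. The goal is then to derive a contradiction from this identity.

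First I would use the $U(d)$-invariance (rotational symmetry) of the isotropic Harmonic Oscillator: $V_{\hbar_N,E}$ is $SO(d)$-invariant, so it suffices to treat $x = (r,0,\dots,0)$ with $r>0$. Next I would exhibit explicit eigenfunctions in $V_{\hbar_N,E}$ on which the functionals $\varphi\mapsto\varphi(x)$ and $\varphi\mapsto\partial_j\varphi(x)$ take controllable values. A clean choice: restrict attention to the rank-one ``coherent-like'' elements, or more elementarily, use the one-dimensional Hermite factorization $\phi_{\beta,\hbar}(x) = \prod_j \phi_{\beta_j,\hbar}(x_j)$. For instance, consider the product $\phi_{N,0,\dots,0}$ together with $\phi_{N-1,1,0,\dots,0}$: the first has $\partial_2\phi = \cdots = \partial_d\phi = 0$ at points with $x_2=\cdots=x_d=0$ (since each factor $\phi_{0,\hbar}(x_j)$ is even, its derivative vanishes at $0$), while $\phi_{N-1,1,0,\dots,0}$ vanishes at such $x$ but has nonzero $\partial_2$ there (because $\phi_{1,\hbar}(0)=0$ but $\phi_{1,\hbar}'(0)\neq 0$). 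Running through $\phi_{N-1,0,\dots,1,\dots,0}$ for each slot handles all of $\partial_2,\dots,\partial_d$. Thus the $c_2,\dots,c_d$ must vanish. It remains to kill $c_0$ and $c_1$ using the pair $\phi_{N,0,\dots,0}$ and $\phi_{N-1,1,0,\dots,0}$ — wait, better to use $\phi_{N,0,\dots,0}$ and then $\phi_{N-1,0,\dots,0}$ is not in the space; instead use two functions of the form $f(x_1)\prod_{j\geq2}\phi_{0,\hbar}(x_j)$ with $f$ ranging over degree-$N$ and (via a different slot juggling) producing independent $(f(r),f'(r))$. Since the one-variable degree-$N$ Hermite function and its linear combinations with lower-degree pieces span a space on which $(f(r),f'(r))$ is surjective for any $r\neq 0$ — a classical ODE fact, as a nonzero solution of a second-order ODE cannot have $f(r)=f'(r)=0$ — we conclude $c_0=c_1=0$, a contradiction.

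The main obstacle is the final step: I need a genuinely $N$-dimensional family of eigenfunctions in $V_{\hbar_N,E}$ restricting to one-variable functions evaluated at $x_1=r$, and I must ensure the map $f\mapsto(f(r),f'(r))$ is onto $\R^2$ on that family. The cleanest route is probably to note that $\{\phi_{k,\hbar}(x_1)\cdot\prod_{j\ge2}\phi_{m_j,\hbar}(x_j): k+\sum m_j=N\}$ includes, for fixed choices of the $m_j$ with $\sum_{j\ge2}m_j$ any value $\le N$, a family $\{\phi_{k,\hbar}\}_{k}$ of one-variable Hermite functions; and since distinct Hermite functions are linearly independent and their values plus derivatives at a point $r\neq0$ cannot all simultaneously vanish for a nontrivial combination (Wronskian/ODE uniqueness argument — or simply that $\phi_{0,\hbar}(r)\neq0$ and one can perturb by $\phi_{1,\hbar}$ whose derivative at $r$ is generically independent), the pair $(c_0,c_1)$ is forced to zero. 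I expect this to take only a short paragraph in the full write-up, the key point being the reduction to one dimension plus the elementary non-vanishing of Hermite functions away from the origin. I would also remark that the statement fails at $x=0$ when $d\ge 2$ and $N\ge1$ (odd Hermite functions vanish at the origin), consistent with the hypothesis $x\in\R^d\setminus\{0\}$.
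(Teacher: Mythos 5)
Your overall strategy is the same as the paper's: dualize (a degenerate $\Sigma_x$ gives a covector $(c_0,\dots,c_d)$ killed by all 1-jets), rotate so $x=(r,0,\dots,0)$, and exhibit explicit product Hermite eigenfunctions whose jets at $x$ force all $c_j$ to vanish. However, there are two genuine gaps in the execution. First, in the tangential step you use $\phi_{(N-1,1,0,\dots,0)}$ (and its slot permutations) to kill $c_2,\dots,c_d$, and the surviving entry is $\partial_2\varphi(x)=\phi_{N-1,\hbar}(r)\,\phi_{1,\hbar}'(0)\,\phi_{0,\hbar}(0)^{d-2}$. You only justify $\phi_{1,\hbar}'(0)\neq0$; but $\phi_{N-1,\hbar}(r)$ can vanish — $r$ may be any of the $N-1$ zeros of the degree-$(N-1)$ Hermite function — so for those radii this test function gives no information. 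The fix (as in the paper's style) is to put the \emph{low}-degree factor in the radial slot, e.g.\ $\phi_k(x_1)\phi_{N-k}(x_2)\phi_0(x_3)\cdots$ with $k\in\{0,1\}$ chosen so that $N-k$ is odd: then the radial factor is nonvanishing for $r\neq0$ and $\phi_{N-k,\hbar}'(0)\neq0$ for odd degree.

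Second, and more seriously, the final step (forcing $c_0=c_1=0$) is not established. Once $c_2=\dots=c_d=0$, the usable test functions are $\phi_k(x_1)\prod_{j\ge2}\phi_{m_j}(x_j)$ with all $m_j$ even (otherwise both $\varphi(x)$ and $\partial_1\varphi(x)$ vanish and you learn nothing), so the available radial degrees satisfy $k\equiv N\pmod 2$; in particular you cannot use $\phi_{0,\hbar}$ and $\phi_{1,\hbar}$ together, contrary to your last suggestion. You then need \emph{two} admissible degrees $k_1,k_2$ with $\bigl(\phi_{k_1}(r),\phi_{k_1}'(r)\bigr)$ and $\bigl(\phi_{k_2}(r),\phi_{k_2}'(r)\bigr)$ linearly independent at the given $r$. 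The ``classical ODE fact'' you invoke only says that a single nonzero solution cannot have $f(r)=f'(r)=0$; it yields rank one, i.e.\ that $(c_0,c_1)$ lies on a line, not that it is zero. Independence of the jets of two Hermite functions of \emph{different} degrees is a statement about a mixed Wronskian $\phi_{k_1}\phi_{k_2}'-\phi_{k_2}\phi_{k_1}'$ (they solve different equations, so no Wronskian constancy applies), and ``generically independent'' is not enough since the claim must hold for every $r\neq0$. The paper closes exactly this point by taking radial degrees $\{0,2\}$ for $N$ even and $\{1,3\}$ for $N$ odd and computing the $2\times2$ determinant explicitly (it is a nonzero constant times $r$, resp.\ $r^3$), which is why its proof goes through for all $r\neq0$ without any genericity assumption. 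As written, your argument does not prove the proposition; it needs either this explicit low-degree determinant computation or a proof that the relevant mixed Wronskian is nonvanishing off the origin.
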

\begin{proof}
Fix an orthonormal basis $\{\phi_j\}_{j=1}^{d_N}$ of $V_{h_N, E}$, then $ev_x$ can be written as a $(d+1) \times d_N$ matrix $M_x$, where $d_N = \dim V_{h_N, E}$. Showing $ev_x$ is surjective is equivalent to showing $M_x$ has rank $d+1$, or $\Sigma_x := M_x M_x^t$ is a non-degenerate $(d+1) \times (d+1)$ square matrix. By definition, $\Sigma_x$ is the covariance matrix of the Gaussian measure ${ev_x}_*(\gamma_V)$. Hence, the two statements in the proposition are equivalent. 

Recall that $\Phi_{h,E}$ is the Gaussian random variable valued in $V_{h_N, E}$ with measure $\gamma_V$.  We then express $\Sigma_x$ in polar coordinates  $x=(r,\omega)$ where $\omega \in S^{d-1}$ and $r \in \R_{>0}$. The first observation is that  $\Sigma_x$ is block diagonal if we break it up into its radial part and angular part,
\begin{equation}\Sigma_{(r,\w)}  =
\left.\left(\begin{array}{cc} 
 \Cov\lr{\Phi_{\hbar, E}, \dell_r \Phi_{\hbar, E}} & 0\\
0 & \Cov(\nabla_\w \Phi_{\hbar,E},\nabla_\w \Phi_{\hbar,E})
\end{array}\right) \right \vert_{(r,\w)}.\label{E:1jet}
\end{equation} 
Indeed, the $(\phi, \nabla_{\omega} \phi)$ block is $\nabla_{\omega} \Pi_{h,E}(r, \omega, r, \omega') |_{\omega =\omega'}= 0$ and $\Pi_{h, E} (x,x)$ is invariant under rotations in $SO(d)$. For the same reason the mixed $(r, \omega)$ deriviatives are zero. The block-diagonality may be expressed in an invariant form by combining the second derivative block  into the  Riemannian metric 
\begin{equation} \label{gx} g_x: = d_x \otimes d_y \Pi_{h_N, E} (x, y) |_{x = y} = \sum_{\abs{\beta}=N} d \phi_{\beta,\hbar}(x) \otimes d\phi_{\beta,\hbar}(x) \end{equation} of the process.  Due to the $O(d)$ symmetry, the metric has the form $G(r)\; dr^2 + H(r) \; g_{S^{d-1}}$ where $g_{S^{d-1}}$ is the standard metric of $S^{d-1}$
and $G(r), H(r)$ are radial functions. This is equivalent to the statement that the  angular block is
orthogonal to the radial block.

Next we check that the angular derivative block is invertible, i.e. that $H(r) > 0$.  The isotropy group of $x$ is  $SO(d-1)$ acting in the tangent space to the sphere centered at the origin through $x$. By the $SO(d-1)$ symmetry
\[  \left. \Cov(\nabla_\w \Phi_{\hbar,E},\nabla_\w \Phi_{\hbar,E}) \right|_{(r,\w)} = H(r) \cdot I_{d-1} \]
where $I_{d-1}$ is the diagonal matrix.  Taking trace on both sides of the above equation, we get
\[ (d-1) H(r) = \sum_{j=1}^{d_N} |\nabla_\omega \phi_j(r,\w)|^2.\]
If $H(r)=0$ for some $r=r_0$, then it means every Hermite function in $V_{h_N, E}$ is constant at the sphere with radius $r=r_0$, which is absurd since any product Hermite functions in Eq. \eqref{PHI} is not constant at any sphere with positive radius. To complete the proof, we need to show that the upper  $2 \times 2$ 
 block
\[\Cov\lr{\Phi_{\hbar, E}, \dell_r \Phi_{\hbar, E}}  
 = \begin{pmatrix} \Pi_{\hbar, E}(x,x) & \half \partial_r \Pi_{\hbar, E}(x,x) \\ &\\
  \half \partial_r \Pi_{\hbar, E}(x,x)  & \partial_r \otimes \partial_{r'} \Pi_{\hbar, E}(x,y) |_{r = r'}
  \end{pmatrix}\] 
is invertible. By the same argument as in the beginning of the proof, it is equivalent to showing the following linear map 
\[ V_{\hbar_N,E} \to \R^2, \quad \phi \mapsto (\phi, \dell_r \phi)|_{(r,\w)} \] 
is surjective for any $(r,\omega) \in \R^d \RM \{0\}$. We will provide two functions $\phi^{(1)}, \phi^{(2)}$ in $V_{\hbar_N,E}$, whose span  surjects to $\R^2$. Without loss of generality, we may assume $\omega$ is in the positive $x_1$ direction, then $\pa_r \phi=\pa_{x_1}\phi$. If $N$ is even, we take
\bea
\phi^{(1)}(r,\omega) &=& \phi_{(0,N,0,\cdots,0),\hbar}(r,0,\cdots,0) = c_1 e^{-r^2/2h} \\
\phi^{(2)}(r,\omega) &=& \phi_{(2,N-2,0,\cdots,0),\hbar}(r,0,\cdots,0) = c_2 (r^2/h-1) e^{-r^2/2h}, 
\eea
and if $N$ is odd, we take
\bea
\phi^{(1)}(r,\omega) &=& \phi_{(1,N-1,0,\cdots,0),\hbar}(r,0,\cdots,0) = c_1 ((h^{-1/2} r) e^{-r^2/2h} \\
\phi^{(2)}(r,\omega) &=& \phi_{(3,N-3,0,\cdots,0),\hbar}(r,0,\cdots,0) = c_2 ((h^{-1/2} r)^3 - 3 (h^{-1/2} r)) e^{-r^2/2h}, 
\eea
where we used that the $1$D Hermite functions satisfy $\phi_{\ell, \hbar}(0)\neq 0$ if $\ell$ is even, and $c_1, c_2$ are $r$-independent constant. Rescaling the variable $r \mapsto r \hbar^{1/2}$ to get rid of the $\hbar$ dependence and omitting the $c_i$ factors, we may verify that the image of $\phi^{(1)}, \phi^{(2)}$ are independent: 
\[ \det \left( 
\begin{array}{cc} 
\phi^{(1)} & \pa_r (\phi^{(1)}) )\\
\phi^{(2)} & \pa_r (\phi^{(2)}) ) \end{array} \right) = \det \left( 
\begin{array}{cc} 
f e^{-r^2/2} & \pa_r (f e^{-r^2/2} )\\
g e^{-r^2/2} & \pa_r (g e^{-r^2/2}) \end{array} \right)
= e^{-r^2} \det \left( 
\begin{array}{cc} 
f & \pa_r f  \\
g  & \pa_r g \\  
  \end{array} \right)\]
where $N$ is even, $\{f,g\} = \{1, r^2-1\}$, if $N$ is odd, $\{f,g\}=\{r, r^3-3r\}$. A direct computation shows that the above determinant is non-zero. Hence $\Cov\lr{\Phi_{\hbar, E}, \dell_r \Phi_{\hbar, E}}$ is invertible, and the covariance matrix in Eq \eqref{E:1jet} is non-degenerate. This finishes the proof for the proposition. 
\end{proof}

We then have,

\begin{lem}[Kac-Rice for Gaussian Fields]\label{L:Gaussian KR}
Let $\Phi_{\hbar, E}$ be the random Hermite eigenfunction of $\widehat{H}_\hbar$ with eigenvalue $E$ as in \eqref{PHIN}. Then the density of zeros of $\Phi_{\hbar, E}$ is given by
\begin{equation}
F_{\hbar, E}(x)= \lr{2\pi}^{-\frac{d+1}{2}}\int_{\R^d}|\W^{1/2}(x)\xi| \;\; e^{-\abs{\xi}^2/2} \;\; d\xi,\label{E:Gaussian KR}
\end{equation}
where $\W(x)$ is the $d\x d$ matrix
\begin{align}
\notag \W_{ij}(x) &= (\dell_{x_i}\dell_{y_j} \log \Pi_{\hbar, E})(x,x)\\
\label{E:Gaussian Cov Mat} &=
\frac{(\Pi_{\hbar, E}\cdot \dell_{x_i}\dell_{y_j}\Pi_{\hbar, E})(x,x)-(\dell_{x_i}\Pi_{\hbar, E} \cdot \dell_{y_j}\Pi_{\hbar, E})(x,x)}{\Pi_{\hbar, E} (x,x)^2}
\end{align}
and $\Pi_{\hbar, E}(x,y)$ is the kernel of eigenspace projection \eqref{COV}.
\end{lem}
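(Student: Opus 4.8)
The plan is to reduce the statement to the general Gaussian Kac--Rice formula of \cite{AW} and then carry out the linear algebra of conditioning a Gaussian vector, observing that all powers of $\Pi_{\hbar, E}(x,x)$ cancel. The field $x \mapsto \Phi_{\hbar, E}(x)$ of \eqref{PHIN} is a smooth centered Gaussian field on $\R^d$, and by the Proposition proved just above, the $1$-jet evaluation map $ev_x$ is surjective --- equivalently the $(d+1)\times(d+1)$ covariance matrix $\Sigma_x = \Cov(\Phi_{\hbar, E}(x), \nabla_x\Phi_{\hbar, E}(x))$ is non-degenerate --- for every $x \in \R^d \setminus \{0\}$. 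Thus the hypotheses of \cite[Thm.~6.2, Prop.~6.5]{AW} are satisfied at every such $x$, and $\E|Z_{\Phi_{\hbar, E}}|$ has a density with respect to Lebesgue measure given by
\[
F_{\hbar, E}(x) = p_{\Phi_{\hbar, E}(x)}(0)\; \E\big[\, |\nabla_x \Phi_{\hbar, E}(x)| \;\big|\; \Phi_{\hbar, E}(x) = 0 \,\big],
\]
where $p_{\Phi_{\hbar, E}(x)}$ is the probability density of the Gaussian random variable $\Phi_{\hbar, E}(x)$.

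Next I would identify the two factors. Since $\Phi_{\hbar, E}(x)$ is centered Gaussian with variance $\E \Phi_{\hbar, E}(x)^2 = \Pi_{\hbar, E}(x,x)$ by \eqref{COV}, we have $p_{\Phi_{\hbar, E}(x)}(0) = (2\pi\, \Pi_{\hbar, E}(x,x))^{-1/2}$. Differentiating \eqref{COV} term by term --- legitimate because $V_{\hbar_N, E}$ is finite-dimensional and spanned by Schwartz functions, so derivatives commute with the Gaussian expectation --- gives $\E[\partial_{x_i}\Phi_{\hbar, E}(x)\, \Phi_{\hbar, E}(x)] = \partial_{x_i}\Pi_{\hbar, E}(x,y)\big|_{y=x} =: b_i(x)$ and $\E[\partial_{x_i}\Phi_{\hbar, E}(x)\, \partial_{x_j}\Phi_{\hbar, E}(x)] = (g_x)_{ij}$, the metric of the process from \eqref{gx}. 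By the standard Gaussian regression (Schur complement) formula, the conditional law of $\nabla_x \Phi_{\hbar, E}(x)$ given $\Phi_{\hbar, E}(x) = 0$ is centered Gaussian on $\R^d$ with covariance matrix
\[
\Lambda(x) = g_x - \frac{b(x)\, b(x)^t}{\Pi_{\hbar, E}(x,x)}, \qquad \Lambda_{ij}(x) = \partial_{x_i}\partial_{y_j}\Pi_{\hbar, E}(x,x) - \frac{\partial_{x_i}\Pi_{\hbar, E}(x,x)\, \partial_{y_j}\Pi_{\hbar, E}(x,x)}{\Pi_{\hbar, E}(x,x)},
\]
which by \eqref{E:Gaussian Cov Mat} is exactly $\Pi_{\hbar, E}(x,x)\, \W(x)$; non-degeneracy of $\Sigma_x$ guarantees $\Lambda(x)$, hence $\W(x)$, is positive definite so $\W(x)^{1/2}$ is well defined.

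To finish, write a centered Gaussian vector with covariance $\Lambda(x)$ as $\Lambda(x)^{1/2}\xi$ with $\xi \sim N(0, I_d)$, and use $\Lambda(x)^{1/2} = \Pi_{\hbar, E}(x,x)^{1/2}\, \W(x)^{1/2}$ to obtain
\[
\E\big[\, |\nabla_x \Phi_{\hbar, E}(x)| \;\big|\; \Phi_{\hbar, E}(x) = 0 \,\big] = \Pi_{\hbar, E}(x,x)^{1/2}\, (2\pi)^{-d/2}\int_{\R^d} |\W(x)^{1/2}\xi|\, e^{-|\xi|^2/2}\, d\xi.
\]
Multiplying this by $p_{\Phi_{\hbar, E}(x)}(0) = (2\pi\,\Pi_{\hbar, E}(x,x))^{-1/2}$, the factors $\Pi_{\hbar, E}(x,x)^{\pm 1/2}$ cancel and $(2\pi)^{-1/2}(2\pi)^{-d/2} = (2\pi)^{-(d+1)/2}$, yielding exactly \eqref{E:Gaussian KR}. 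The chain-rule identity $\partial_{x_i}\partial_{y_j}\log \Pi_{\hbar, E}(x,x) = (\Pi_{\hbar, E}\, \partial_{x_i}\partial_{y_j}\Pi_{\hbar, E} - \partial_{x_i}\Pi_{\hbar, E}\, \partial_{y_j}\Pi_{\hbar, E})(x,x)/\Pi_{\hbar, E}(x,x)^2$ reconciles the two displayed forms of $\W_{ij}$ in \eqref{E:Gaussian Cov Mat}.

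The genuinely substantive input --- the non-degeneracy of $\Sigma_x$, which is needed both to invoke \cite{AW} and to make the conditioning meaningful --- has already been supplied by the preceding Proposition, so the main obstacle has been dispatched; the remaining work is the bookkeeping above. The only points requiring minor care are the interchange of differentiation and expectation in the computation of $b(x)$ and $g_x$ (immediate here), and keeping track of the powers of $\Pi_{\hbar, E}(x,x)$ so that the normalization in \eqref{E:Gaussian KR} comes out correctly.
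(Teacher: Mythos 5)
Your proof is correct and follows exactly the route the paper intends: it invokes the Kac--Rice theorem of \cite{AW} (justified by the preceding non-degeneracy Proposition) and then carries out the standard Gaussian regression, with the powers of $\Pi_{\hbar,E}(x,x)$ cancelling to give \eqref{E:Gaussian KR}; the paper itself omits this bookkeeping and simply defers to \cite{AW} and \cite{HZZ}, so you have merely written out the computation those references supply.
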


We refer to \cite{HZZ} for background. The main task in proving results on zeros near the caustic is therefore to work out the asymptotics of $\Pi_{\hbar, E}(x,x)$ and its derivatives there.

\section{The tube region for $\alpha = \frac{2}{3}$: Proofs of Theorems \ref{SCLintro} and  \ref{CAUSTIC}}\label{TUBESECT}

This section is the heart of the paper, in which we determine the Airy scaling asymptotics of the eigenspace projections \eqref{COV} and of the Kac-Rice matrix \eqref{E:Gaussian Cov Mat}  in a $\hbar^{2/3}$-tube around the caustic. We also find the scaling asymptotics of the derivatives of the kernel and prove the Kac-Rice formula of Theorem \ref{CAUSTIC}. We begin with an outline of the proof of Theorem \ref{SCLintro} (and Theorem \ref{T:Mainb} since its proof follows a similar pattern). 

 % In the next Proposition  we derive Airy scaling asymptotics 
%for the scale eigenspace projection kernels, and then   derive scaling %asymptotics of the Kac-Rice matrix.
% For simplicity of exposition,  we put some technical details in the Appendix %(Section \ref{APPB}).  Theorem \ref{CAUSTIC}  follows from these scaling %asymptotics and the Kac-Rice formula of Theorem  \ref{CAUSTIC}.

%Our proof of Theorem \ref{SCLintros} proceeds by a steepest descent analysis of the expressions \eqref{eq:Pi_Contour}. It is divided schematically in three steps:
%\begin{enumerate}
%\item[(i)] Deform the contour to pass through the (complex) critical point along the steepest descent direction;
%\item[(ii)] Excise away the part of the contour sufficiently far away from the critical points (see Lemma \ref{lm:excision}); 
%\item[(iii)] Calculate the contribution of the contour integral near the critical point. 
%\end{enumerate}

\subsection{ Outline of the proof of Theorems \ref{SCLintro} and \ref{T:Mainb}}\label{S:Outline}
The proofs of Theorems \ref{SCLintro} and \ref{T:Mainb} are based on a steepest descent analysis of the contour integral \eqref{eq:pixy} for the eigenspace projection kernel. They involve a number of tricky technical steps, which we now sketch. 

We fix $E=1/2$ (i.e. set $\hbar=\hbar_N=\lr{2N+d}^{-1}$) and drop the $E$ subscript. Using \eqref{eq:pixy}, the spectral projector $\Pi_\hbar$ and the derivatives appearing in the Kac-Rice matrix $\W$ (given in \eqref{E:Gaussian Cov Mat}) can be written as
\begin{align}
\Pi_{\hbar}(x,x)  &=  \oint_{C_\epsilon} A(z) e^{\Phi(z)/\hbar} dz \label{eq:Pi_Contour}\\
\pa_{x_i} \Pi_{\hbar}(x,x) &=  \oint_{C_\epsilon } \left( -\frac{x_i}{\hbar} \frac{1-z}{1+z} \right) A(z) e^{\Phi(z)/\hbar} dz  \notag\\
\pa_{x_i} \pa_{y_j} \Pi_{\hbar}(x,x) &=  \oint_{C_\epsilon}  \left[  \frac{x_i x_j}{\hbar^2} \left(\frac{1-z}{1+z} \right)^2  +  \frac{\delta_{ij}}{\hbar} \frac{2z}{1-z^2} \right]  A(z) e^{\Phi(z)/\hbar} dz \notag.
\end{align}
The phase function $\Phi(z)$ and the amplitude $A(z)$ are 
\be
\label{eq:PhiandA}
\Phi(z) = -\frac{1-z}{1+z} \; |x|^2  - \frac{1}{2} \log z ,\quad A(z) dz  = \left(\frac{z}{\pi \hbar (1-z^2)} \right)^{d/2} \frac{ dz }{2\pi i z}.
\ee

Note that the integrand $A(z)e^{\Phi(z)/\hbar}$ is defined on 
\begin{equation}
S=\C ~\backslash ~\lr{(\infty, -1]\cup [1,\infty)\cup \set{0}}.\label{E:SDef}
\end{equation}
Indeed, the term $e^{-\log(z)/(2\hbar)}$ from $e^{\Phi(z)/\hbar}$ and $z^{d/2-1}$ from $A(z)$ combine to give an integer total power of $z$ 
\[z^{\frac{d}{2}-\frac{1}{2\hbar}-1}=z^{-N-1},\]
 where $N=\frac{1}{2}\lr{\hbar^{-1}-d}$ is the degree of the Hermite functions in the $\frac{1}{2}-$eigenspace of $\widehat{H}_\hbar.$  Observe that the integrand has singularities at $z = \pm 1$ and the critical points 
\be \label{CPE} z_\pm =1 \pm 2 |x| \sqrt{|x|^2-1} + 2 (|x|^2-1) \ee
of $\Phi(z)$ do not lie on $C_\epsilon.$ If $x$ is in the $\hbar^\alpha$ neighborhood of the caustic $\ccal_E$ (the unit circle), say 
\begin{equation}
|x|^2 = 1 + \hbar^\alpha s,\label{E:Caustic Scaling}
\end{equation}
then
\[ z_\pm = 1 \pm 2\hbar^{\alpha/2} \sqrt{s} + O(\hbar^\alpha), \]
and the Hessian of $\Phi(z)$ at the critical points is 
\[ \Phi''(z_\pm) = \mp 2^{-3/2}\hbar^{\alpha/2} \sqrt{s} +  O(\hbar^\alpha).\]
The proofs of Theorems \ref{SCLintro} and \ref{T:Mainb} proceed schematically in three steps:
\begin{enumerate}
\item[(i)]  Deform the contour such that it passes through the critical points of the phase function $\Phi(z)$, and wraps around the singular points of the amplitude function $A(z)$, 
\item[(ii)]  Show that the contribution from the contour away from the critical points and singular points are irrelevant. 
\item[(iii)] Calculate the leading term contribution from the contour near the critical points and the singular points. 
\end{enumerate}

\begin{figure}
   \centering   
     \includegraphics[width=0.3\textwidth]{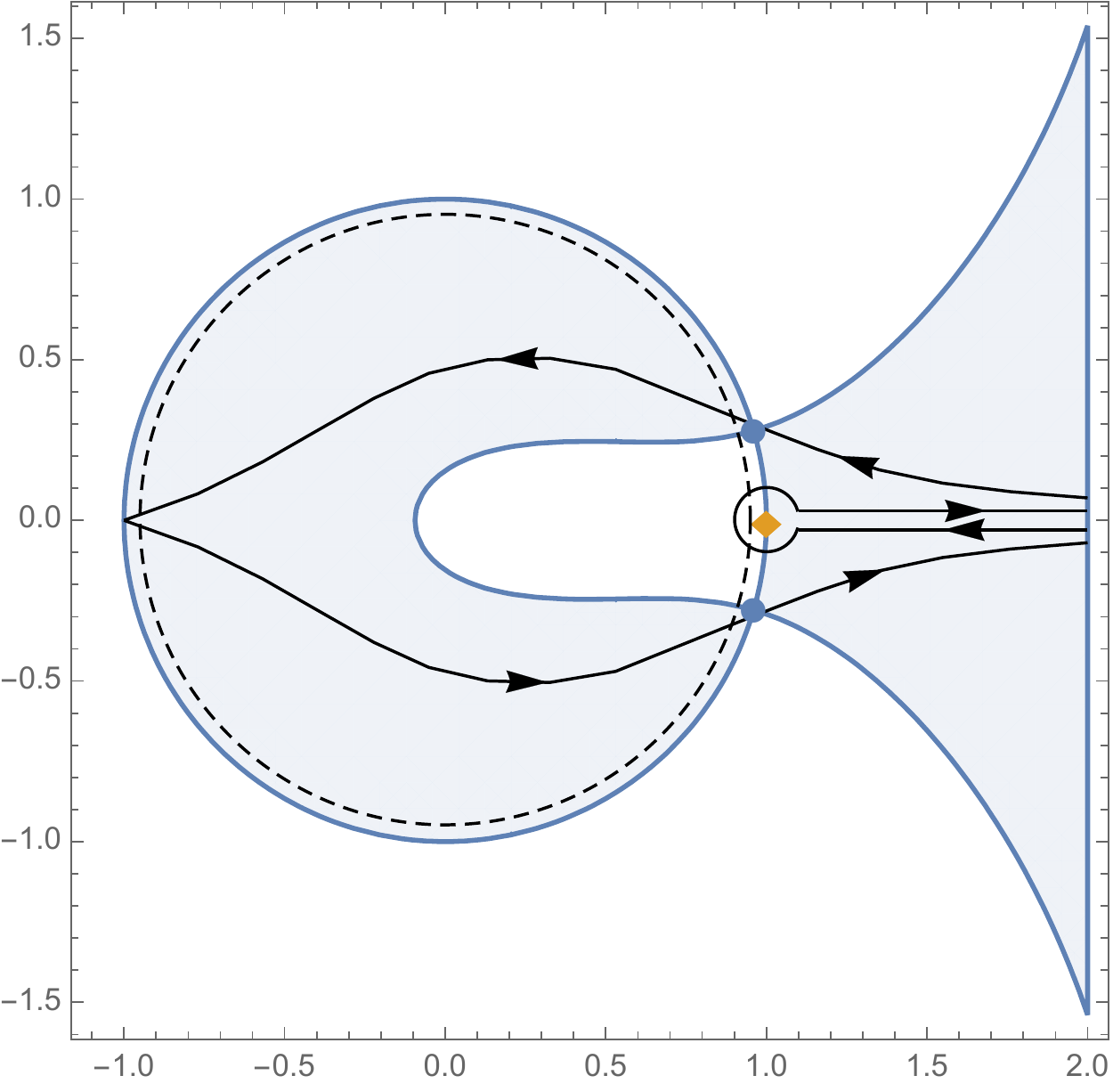}
       \includegraphics[width=0.3\textwidth]{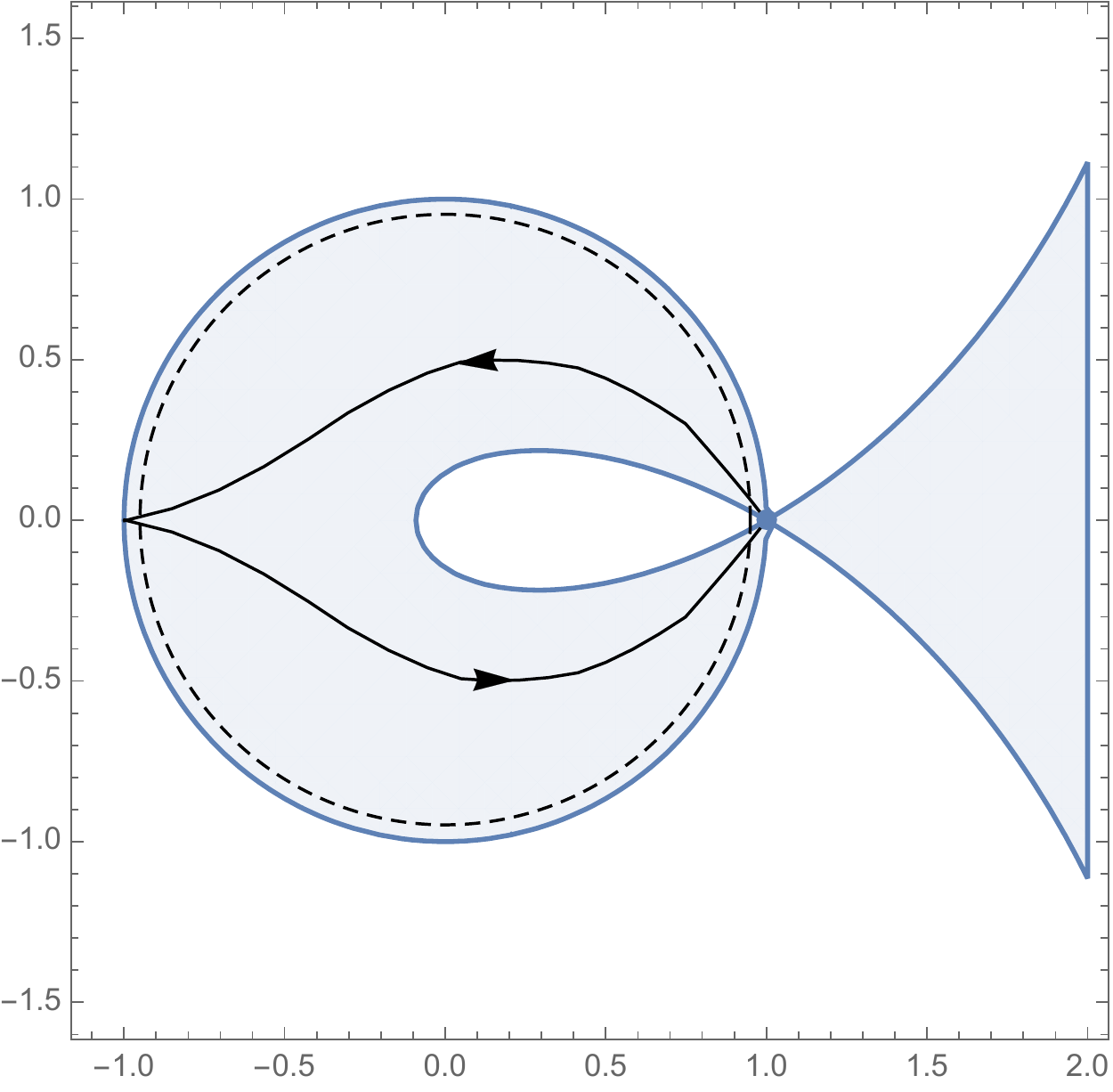}
        \includegraphics[width=0.3\textwidth]{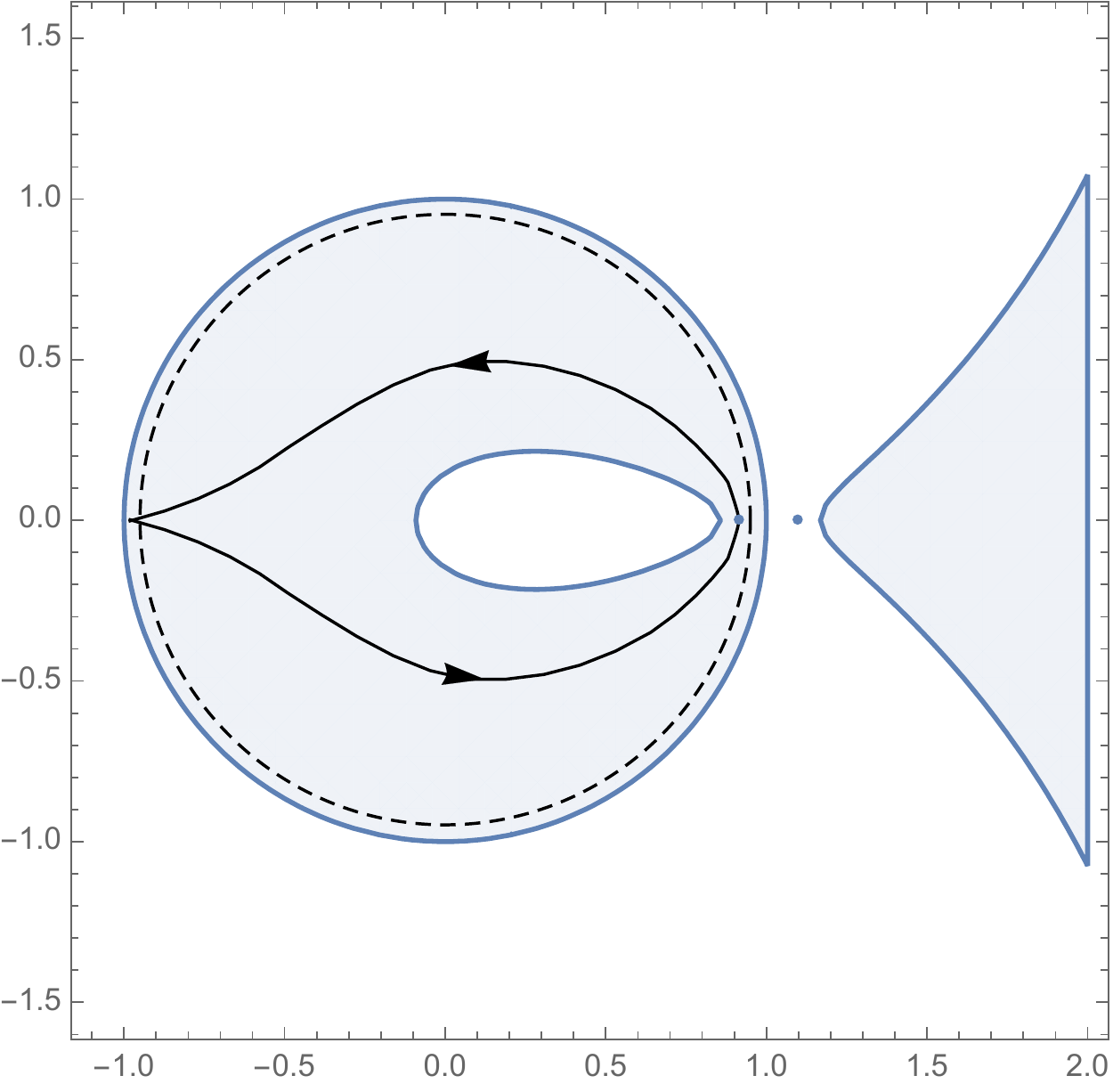}
    \caption{The contour deformation for  $|x|^2 < 1,  |x|^2=1$ and $|x|^2 >1$. Regions where $\text{Re}(\Phi)<0$ are shaded blue.\label{fig:safe-region} }  
\end{figure}

The contour deformations we use in step (i) are shown in Figure \ref{fig:safe-region}, where the panels from left to right correspond to the allowed region ($x\in \acal_E$ and $\alpha <2/3$), the caustic region ($\alpha=2/3$) and the forbidden region ($x\in \fcal_E$ and $\alpha<2/3$) respectively. The black oriented lines are the deformed contours, and the dashed line corresponds to the original contour. The blue regions are those where $\text{Re}(\Phi)<0$ and hence the integrand is rapidly decaying in $\hbar.$ Note that 
\begin{enumerate}
\item  In the allowed region, the critical points lie on the unit circle; in the caustic region $|x|^2=1$, the two critical point merge at $z=1$; in the forbidden region, the two critical points are real, lie on the opposite sides of $1$, and the deformed contour passes through the critical point inside the circle. 
\item In all three cases, the contour goes to the point $z=-1$ with finite slope, and $\Re \Phi(z) \to -\infty$, in agreement with the contour following the downward gradient flow of $\Re \Phi(z)$. 
\item In the allowed region case (left panel), there is an additional `key-hole' contour, caused by the singularity of $A(z)$ at $z=1$. 
\end{enumerate}

There are two non-standard aspects in our stationary phase integral when $|x|\approx 1$, which are particularly important for the proof of Theorem \ref{T:Mainb}. The first is the coalescing of the critical points when $\abs{x}=1,$ which causes the usual stationary phase method with quadratic phase function to break down. The second is the singularity in the amplitude $A(z)$ at $z=\pm 1$. The singularity at $z=-1$ is less problematic, since the phase function $\Re(\Phi(z)) \to -\infty$ as $z \to -1$ along the steepest descent path, making the integral $e^{\Phi(z)/\hbar} A(z)$ convergent. The singularity of $A(z)$ at $z = 1$, however, coincides with the critical point $z_c = 1$ if $|x|^2=1$ right on the caustic. 

A key point in the proof of Theorem \ref{SCLintro} (when $\abs{x}^2=1+\hbar^{2/3} s$) is the form of the deformed contour $C_\epsilon$ in the central panel of Figure \ref{fig:safe-region}. Namely, in an $\hbar^{1/3}$ neighborhood of $z=1$ the deformed contour becomes (up a sign change) to the Airy contour $\mathcal C$ (see the left panel in Figure \ref{fig:airy-contour}). The integral over this portion of the contour will turn out to give the leading order behavior of $\Pi_\hbar(x,x).$ The phase function $\Phi,$ when properly rescaled, becomes the Airy phase and it is in this way that the primitives and derivatives $\Ai_k$ of the Airy function appear in the statement of Theorem \ref{SCLintro} (see Section \ref{S:Outline of Proofs} as well as the beginning of the proof of Proposition \ref{pp:caustic} for more on this point).

In the proofs of both Theorem \ref{SCLintro} and \ref{T:Mainb} we implement step (ii) for each critical point $z_c$ by using the Excision Lemma \ref{lm:excision} to localize, to leading order, the integral \eqref{eq:Pi_Contour} (with the contour deformed as above) to a neighborhood of $z_c$
\[ \Gamma_{z_c, \delta}: = \{z \in \Gamma \mid 0 > [\Re(\Phi(z)) - \Re(\Phi(z_c))]/\hbar > -\hbar^{-\delta} \} \]
for some $\delta>0$.  
%For more detail, see the Lemma \ref{lm:excision} in Appendix \ref{APPB}.   

Finally, in step (iii), to get the leading order order behavior of \eqref{eq:Pi_Contour}, we evaluate the localized integral over $\Gamma_{z_c,\delta}$ by Taylor expanding $A(z)$ and $\Phi(z)$ around $z_c$ and use the Trimming Lemma \ref{lm:trim} to estimate the error term.

\subsection{Two Technical Lemmas}

As discussed above, after deforming the contour to good position, we simplify the integrals in two ways. Since the simplifications are a bit technical, we prove the relevant Lemmas before getting into the proof of Theorem \ref{SCLintro}. The two Lemmas are general statements about oscillatory integrals with complex phase.

\begin{lem}[Excision Lemma]\label{lm:excision}
  Let
  \[ I_\hbar = \int_{\Gamma_\hbar} f_\hbar(z) e^{\Phi_\hbar(z)} dz \]
  where for each $\hbar \in (0, \hbar_0)$ we have the following conditions satisfied \\
  (1) $\Gamma_\hbar$ is a compact smooth curve in $\C$, and there are $C_1, \alpha>0$, such that $\text{Length}(\Gamma_\hbar) < C_1 \hbar^{-\alpha}$ \\
  (2) $f_\hbar(z)$ is analytic in a neighborhood of $\Gamma_\hbar$, and
  there are $C_2, \beta>0$, such that
  \[ \sup_{\hbar \in (0, \hbar_0), z \in \Gamma_\hbar} |f_\hbar(z)|
  \hbar^\beta < C_2\]
  (3) $\Phi_\hbar(z)$ is analytic in a neighborhood of $\Gamma_\hbar$,
  and there are constants $C_3, \gamma> 0$, such that
  \[ \sup_{\hbar \in (0, \hbar_0), z \in \Gamma_\hbar} Re(\Phi_\hbar(z))
  \hbar^\gamma < -C_3 \] Then
  \[ I_\hbar = O(\hbar^\infty) \]
\end{lem}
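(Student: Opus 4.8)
The plan is to bound $|I_\hbar|$ directly by the trivial ``length times sup of integrand'' estimate, and then observe that the hypotheses force this bound to decay faster than any power of $\hbar$. First I would write
\[
|I_\hbar| \;\le\; \mathrm{Length}(\Gamma_\hbar) \cdot \sup_{z \in \Gamma_\hbar} |f_\hbar(z)| \cdot \sup_{z \in \Gamma_\hbar} \bigl| e^{\Phi_\hbar(z)} \bigr|.
\]
By hypothesis (1) the first factor is at most $C_1 \hbar^{-\alpha}$, and by hypothesis (2) the second factor is at most $C_2 \hbar^{-\beta}$. For the third factor, $|e^{\Phi_\hbar(z)}| = e^{\mathrm{Re}\,\Phi_\hbar(z)}$, and hypothesis (3) gives $\mathrm{Re}\,\Phi_\hbar(z) < -C_3 \hbar^{-\gamma}$ uniformly on $\Gamma_\hbar$, so that third factor is at most $e^{-C_3 \hbar^{-\gamma}}$. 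Combining,
\[
|I_\hbar| \;\le\; C_1 C_2 \, \hbar^{-\alpha - \beta} \, e^{-C_3 \hbar^{-\gamma}}.
\]

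The final step is to note that for any $M > 0$, $\hbar^{-\alpha-\beta} e^{-C_3 \hbar^{-\gamma}} = o(\hbar^M)$ as $\hbar \to 0^+$: indeed $\hbar^{-\alpha-\beta-M} e^{-C_3\hbar^{-\gamma}} \to 0$ since the exponential decay in $\hbar^{-\gamma}$ (with $\gamma>0$) dominates any negative power of $\hbar$. Hence $I_\hbar = O(\hbar^\infty)$, which is the claim. One should also remark that analyticity of $f_\hbar$ and $\Phi_\hbar$ near $\Gamma_\hbar$ is not actually needed for this argument — continuity on $\Gamma_\hbar$ suffices for the sup bounds — but it is harmless to keep the hypothesis as stated since it holds in all our applications; the compactness of $\Gamma_\hbar$ in (1) guarantees the suprema are attained (or at least finite).

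There is essentially no obstacle here: this is a soft estimate and the only ``content'' is the bookkeeping of the three uniform bounds and the elementary fact that exponential decay beats polynomial growth. The lemma is stated in this packaged form precisely so that in the proofs of Theorem \ref{SCLintro} and Theorem \ref{T:Mainb} one only has to verify conditions (1)--(3) for the specific contours $\Gamma_\hbar$ and phases $\Phi_\hbar(z) = \Phi(z)/\hbar$ arising from \eqref{eq:Pi_Contour} — i.e. check that the deformed contour has at most polynomial length, that the amplitude $A(z)$ (together with its polynomial prefactors from the derivative formulas) blows up at most polynomially in $1/\hbar$ along it, and that $\mathrm{Re}\,\Phi(z)$ stays bounded away from $\mathrm{Re}\,\Phi(z_c)$ by a definite amount on the part of the contour excised from the critical-point neighborhood $\Gamma_{z_c,\delta}$.
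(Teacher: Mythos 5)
Your proposal is correct and is essentially the same argument as the paper's: the paper bounds $|I_\hbar| \leq C_2 \hbar^{-\beta}\, C_1 \hbar^{-\alpha}\, e^{-C_3 \hbar^{-\gamma}} = O(\hbar^\infty)$ by exactly the length-times-sup estimate you describe, with condition (3) giving $\mathrm{Re}\,\Phi_\hbar(z) < -C_3\hbar^{-\gamma}$ on $\Gamma_\hbar$. Your added remarks (that analyticity is not needed and that exponential decay beats any power) are accurate but do not change the substance.
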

\bpf
For all $\hbar \in (0, \hbar_0)$, we have
\[ |I_\hbar| \leq \int_{\Gamma_\hbar} |f_\hbar(z)| e^{Re(\Phi_\hbar(z))} dz 
\leq  C_2 \hbar^{-\beta} C_1 \hbar^{-\alpha} e^{-C_3 \hbar^{-\gamma}}  
= O(\hbar^\infty)\]
\epf

\bl[Trimming Lemma]
\label{lm:trim}
Fix $a,C,m>0$ and $\epsilon\in \lr{0,a/m}.$ Let $\Gamma$ be a contour in $\C$ that goes to infinity along $\arg(z)=\theta_{\pm}$,  $f(z)$  a holomorphic function in a neighborhood of $\Gamma$, and let  $A_\hbar(z),B_{\hbar}(z)$ be smooth functions in a neighborhood of $\Gamma$. Assume that
\[|f(z)| < C (1+|z|^m),\quad  |A_\hbar(z)| + |B_\hbar(z)| \leq C \hbar^a (1 + |z|^m), \quad \forall z \in  \Gamma \cap D(\hbar^{-\epsilon}),\]
where $D(r)$ is the disk of radius $r$ centered at the origin. For each $\hbar>0,$ let
\[ I_{\hbar} = \int_{\Gamma \cap D(\hbar^{-\epsilon}) } (f(z) + B_\hbar(z)) e^{\Phi(z)+ A_\hbar(z)} dz, \quad \Phi(z) = \sum_{k=0}^d a_k z^k,\]
with  $\Re(a_d e^{id\theta_\pm}) < 0$. Then, as $\hbar\gives 0,$ we have
\[ I_{\hbar} = \int_\Gamma f(z) e^{\Phi(z)} dz + O(\hbar^a).\]
\el

\bpf
For $\hbar$ small enough such that $2 C \hbar^{a - m \epsilon} < 1$, we have
\[ |A_\hbar(z)| + |B_\hbar(z)| \leq C \hbar^a (1 + |z|^m) \leq C (\hbar^a + \hbar^{a-m \epsilon})  \leq 1, \quad e^{|A_\hbar(z)|} \leq e \quad \forall z \in  \Gamma \cap D(\hbar^{-\epsilon}).\]
Let $I_0 =  \int_\Gamma f(z) e^{\Phi(z)} dz $. By assumption,  $\Re(\Phi(z)) < 0$ as $z \to \infty$ along $\Gamma$ and $f(z)$ has only polynomial growth, hence $I_0$ is finite. Next, define 
\[I_{\hbar,s} = \int_{\Gamma \cap D(\hbar^{-\epsilon}) } (f(z) + s B_\hbar(z)) e^{\Phi(z)+ s A_\hbar(z)} dz, \quad s \in [0,1].\]
From Taylor's formula, we have
\[ |I_{\hbar, 1} - I_{\hbar, 0}| \leq \sup_{s \in [0,1]} \left| \frac{d}{ds} I_{\hbar, s} \right|, \]
and there exists $C'>0$ so that 
\bea 
\left| \frac{d}{ds} I_{\hbar, s} \right|&=& \left| \int_{\Gamma \cap  D(\hbar^{-\epsilon})} [B_\hbar(z) + (f(z) + s B_\hbar(z))A_\hbar(z)]e^{\Phi(z)+ s A_\hbar(z)} dz \right| \\
& \leq &  \int_{\Gamma \cap D(\hbar^{-\epsilon})} C' \hbar^a (1+|z|^{2m}) e^{|A_\hbar(z)|} e^{\Re(\Phi(z))}  |dz| \\
& \leq & eC' \hbar^a \int_{\Gamma} (1+|z|^{2m}) e^{\Re(\Phi(z))} |dz| \leq C'e \hbar^a
\eea
Finally, we note that for some $C''>0$
\[ | I_0 - I_{\hbar, 0}|  = |\int_{\Gamma \RM D(\hbar^{-\epsilon})} f(z) e^{\Phi(z)} dz| = O(e^{-C'' \hbar^{-d \epsilon}+\epsilon})  = O(\hbar^\infty) \]
This shows that $|I_\hbar - I_0| =O(\hbar^a)$ and completes the proof.
\epf

With these preparations, we are now ready to prove Theorem \ref{SCLintro}.

\subsection{Diagonal scaling asymptotics of Theorem \ref{SCLintro}}
The proof of the scaling asymptotics \eqref{E:CausticScaling} of the eigenspace projection kernels is almost the same along the diagonal $u=v$ as it is for $u\neq v.$ Since the on-diagonal result
\begin{equation}
\Pi_{\hbar,1/2} (x_0 + \hbar^{2/3} u, x_0 + \hbar^{2/3} u) = \hbar^{-2d/3+1/3} \Pi_0(u, u) (1 + O(\hbar^{1/3})) \label{E:CausticScalingdiag}
\end{equation}
and its analogs for the for the derivatives $\pa_{x_i} \Pi_{\hbar}(x,x)$,  $\pa_{x_i} \pa_{y_j} \Pi_{\hbar}(x,x)$ are the key ingredients in obtaining the scaled Kac-Rice formulae of Theorem \ref{CAUSTIC}, we state them separately in the following Proposition. We then indicate in Section \ref{S:CompletionPf} the additional steps required to obtain the off-diagonal statement of Theorem \ref{SCLintro}.

\bp
\label{pp:caustic}
Let $\abs{x}^2 = 1 + \hbar^{2/3} s$ for any $s \in \R$, then
\be\label{pi-tube} \Pi_{\hbar}(x,x) = 2^{-d+1}\pi^{-d/2} \hbar^{(1-2d)/3} \Ai_{-d/2}(s)(1+O(\hbar^{1/3})). \ee
Moreover, the entries of the Kac-Rice matrix \eqref{E:Gaussian KR} 
have the scaling asymptotics,
\be \label{omega-tube} \Omega_{ij} = \hbar^{-4/3} \left\{ x_i x_j \left[ \frac{\Ai_{2-d/2}(s) }{\Ai_{-d/2}(s)} - \left(\frac{\Ai_{1-d/2}(s) }{\Ai_{-d/2}(s)} \right)^2\right] + 2^{-1} \delta_{ij}  \frac{\Ai_{-1-d/2}(s) }{\Ai_{-d/2}(s)}   \right\} (1+O(\hbar^{1/3})).
\ee
The implied constants are uniform when $s$ varies over a compact set.
\ep

\bpf
First, we use the integral expression \eqref{eq:Pi_Contour} to write $\Pi_\hbar(x,x) = \int_{C_\epsilon} A(z) e^{ \Phi(z)/\hbar}$, where $\Phi(z)$ and $A(z)$ are given in \eqref{eq:PhiandA} and $C_\epsilon=\set{\abs{z}=1-\epsilon}$ counter clockwise. 

%Near $z=1$, if we introduce coordinate $T$ such that $z = 1 - \hbar^{1/3} T$, we may get
%\[ A(z) dz = (2\pi)^{-d/2} \hbar^{1/3-2d/3} T^{-d/2} (1+O(\hbar^{1/3} T)) \frac{-dT}{2\pi i}, \quad \Phi(z) = \]

Then, we deform the contour $C_\epsilon$ along the steepest descent path, except bending it a bit near $z=1$ singularity. More precisely, fix  $1 \gg \delta > 0$, we  then define 
\[ z'_\pm = 1 - \hbar^{1/3-\delta} e^{\pm i \pi /3} \] 
and denote $\Gamma'_\pm$ to be the downward gradient flowlines of $\Re \Phi(z)$ starting from $z'_\pm$ ending at $-1$, as shown in the middle panel of Figure \ref{fig:safe-region}. Then we define the deformed contour  $\Gamma := \Gamma_1 \cup \Gamma_2 \cup \Gamma_3$, where \\
\[ \begin{cases}
 \Gamma_1 = (\cup_{a=\pm} \Gamma'_a) \cap \{ |z+1| < \delta \} \\
 \Gamma_2 = (\cup_{a=\pm} \Gamma'_a) \cap \{ |z+1| > \delta \}\\
 \Gamma_3 = \{ 1 - r \hbar^{1/3} e^{i \pi /3} \mid r \in [1, \hbar^{-\delta}]\} \cup \{ 1 - r \hbar^{1/3} e^{-i \pi /3} \mid r \in [1, \hbar^{-\delta}]\} \\
\qquad\quad  \cup \{ 1 - \hbar^{1/3} e^{i \theta} \mid \theta \in [-\pi/3, \pi/3]\}
\end{cases}
\]
Next, we estimate the contribution from the three parts of the contours. 

On $\Gamma_1$, the amplitude satisfies
\[ \abs{A(z)}\leq C \hbar^{-d/2}|z+1|^{-d/2}, \]
while the real part of the phase obeys
\[ \Re(\Phi(z)/\hbar)\leq - C (\hbar|z+1|)^{-1}.\]
By the Excision Lemma \ref{lm:excision},  the integral over $\Gamma_1$ is $O(\hbar^\infty)$. 

On $\Gamma_2$, we have
\[ \sup_{z \in \Gamma_2} \Re \Phi(z) = \Re \Phi( z'_\pm) = -\frac{1}{24} \hbar^{-3\delta} (1 + O(\hbar^{2\delta})) \]
with the implied constants uniform when $s$ varies over compact sets. Therefore, by the Excision Lemma  \ref{lm:excision} again, the integral over $\Gamma_2$ is $O(\hbar^\infty)$. 

Finally, on $\Gamma_3$, we note that $\hbar^{1/3} < |z-1| < \hbar^{1/3-\delta}$, hence over $\Gamma_3$ for small enough $\hbar$, we have
\[ A(z) = (2 \pi \hbar (1-z))^{-d/2} (1 + R_1(z)) \] 
where there exists $C_1>0$ such that $|R_1(z)| < C_1 |z-1|$  for all $|z-1| < \hbar^{1/3-\delta}$. Hence over $\Gamma_3$, we have
\[ |(1-z)^{-d/2} R_1(z)| < C_1 |z-1|^{-d/2+1} < C_1 \hbar^{(-d/2+1)(1/3)} \] 
Similarly, for $\Phi(z)$, we consider its Taylor expansion around $z=1$: 
\[ \Phi(z) = \frac{\hbar^{2/3} s }{2}(z-1) + \underbrace{\frac{\hbar^{2/3} s }{4}(z-1)^2}_{R_3(z)} + \left(-\frac{(z-1)^3}{24} + \underbrace{\frac{\hbar^{2/3} s (z-1)^3}{8}}_{R_4(z)} \right) +  R_2(z) \]
where there exists $C_2>0$, which is uniformly bounded when $u$ varies over a compact set, such that $|R_2(z)| < C_2 |z-1|^4$  for all $|z-1| < \hbar^{1/3-\delta}$. We may rewrite the integral using $z = 1 - \hbar^{1/3} T$ and reverse the contour orientation, then 
\[ T \in \Gamma_T := \{ r e^{\pm i \pi /3} \mid r \in [1, \hbar^{-\delta}]\} \cup \{ e^{i \theta /3} \mid \theta \in [-\pi/3, \pi/3] \} \]
The phase function and the amplitude become
\[ - A(z(T)) dz(T) = (2\pi)^{-d/2} \hbar^{1/3-2d/3} [T^{-d/2} + \tilde{A}_\hbar (T)] \frac{dT}{2\pi i}\]
where from the bound on $R_1(z)$ we have
\[ |\tilde{A}_\hbar (T)| = |T^{-d/2} R_1(z(T))| \leq C_1 |T|^{-d/2+1} \hbar^{1/3} < C_1 \hbar^{1/3} \]
and 
\[ \hbar^{-1} \Phi(z(T)) = \frac{T^3}{24} - \frac{T s}{2} + \hbar^{-1}(R_2(z(T)) + R_3(z(T)) + R_4(z(T)) ).\]
The remainder can be bounded as follows
\bea
\hbar^{-1}|R_2(z(T))| & \leq &  C_2 \hbar^{1/3} T^4 \\
\hbar^{-1}|R_3(z(T))| & \leq &  (|s|/4) \hbar^{1/3} T^2 \\
\hbar^{-1}|R_4(z(T))| & \leq &  (|s|/8) \hbar^{2/3} T^3.
\eea
The above argument shows that we may apply the Trimming Lemma \ref{lm:trim}, to get the contribution from $\Gamma_3$ to be
\bea 
I_3 &=& \int_{\Gamma_3} A(z) e^{\Phi(z)/\hbar} dz  \\
 & = & (2\pi)^{-d/2}\hbar^{(1-2d)/3} \int_{\ccal} T^{-d/2} e^{\frac{T^3}{24} - \frac{T s}{2}} \frac{dT}{2\pi i} (1 + O(\hbar^{1/3}) \\
 & = & 2^{-d+1}\pi^{-d/2} \hbar^{(1-2d)/3} \Ai_{-d/2}(s)(1+O(\hbar^{1/3}))
\eea
This yields the desired result for $\Pi_\hbar(x,x)$. The same argument applies straightforwardly to $\pa_{x_i} \Pi_{\hbar}(x,x)$,  $\pa_{x_i} \pa_{y_j} \Pi_{\hbar}(x,x)$. We get
\bee
\Pi_{\hbar}^{-1}\pa_{x_i} \Pi_{\hbar}(x,x)  &=& \frac{ \oint  \left( -\frac{x_i}{\hbar} \frac{1-z}{1+z} \right) A(z) e^{\Phi(z)/\hbar} dz }{ \oint   A(z) e^{\Phi(z)/\hbar} dz } \notag\\
&=&  - x_i \hbar^{-2/3} \frac{\Ai_{-d/2+1}(s)}{\Ai_{-d/2}(s)}(1+O(\hbar^{1/3}))\label{1st-der}
\eee
and
\bee
\Pi_{\hbar}^{-1} \pa_{x_i} \pa_{y_j} \Pi_{\hbar}(x,x) &=&  \frac{\oint  \left[  \frac{x_i x_j}{\hbar^2} \left(\frac{1-z}{1+z} \right)^2  +  \frac{\delta_{ij}}{\hbar} \frac{2z}{1-z^2} \right]  A(z) e^{\Phi(z)/\hbar} dz }{\oint   A(z) e^{\Phi(z)/\hbar} dz} \notag\\
&=& \left(x_i x_j \hbar^{-4/3} \frac{\Ai_{-d/2+2}(s)}{\Ai_{-d/2}(s)} + \delta_{ij} \hbar^{-4/3} \frac{\Ai_{-d/2-1}(s)}{2\Ai_{-d/2}(s)}\right)(1+O(\hbar^{1/3})). \label{2nd-der}
\eee
Combining these, we get
\bea
\Omega_{ij}(u) &=& 
\Pi_{\hbar}^{-1} \pa_{x_i} \pa_{y_j} \Pi_{\hbar}(x,x) -\Pi_{\hbar}^{-2}\pa_{x_i} \Pi_{\hbar}(x,x) \pa_{x_j} \Pi_{\hbar}(x,x) \\
&=& \hbar^{-4/3} \left(\delta_{ij} \frac{\Ai_{-d/2-1}(s)}{2\Ai_{-d/2}(s)} + x_i x_j \left( \frac{\Ai_{-d/2+2}(s)}{\Ai_{-d/2}(s)} -\frac{\Ai^2_{-d/2+1}(s)}{\Ai^2_{-d/2}(s)} \right) \right)(1+O(\hbar^{1/3})).
\eea
This completes the proof of Proposition \ref{pp:caustic}.
\epf

\begin{remark}
By substituting $s \mapsto \hbar^{\alpha-2/3} s$, and applying the asymptotic expansion of the weighted airy function $Ai_{k}$ (see Proposition \ref{ppBPD}), we may recover the leading term of the corresponding results on $\Pi$ and $\Omega$ in the allowed and forbidden annuli. However, this does not give an estimate of the error terms. We leave the more detailed analysis for Section \ref{ANNULISECT}.
\end{remark}

\subsection{Proof of  the off-diagonal scaling asymptotics of  Theorem \ref{SCLintro}}\label{S:CompletionPf}
To complete the proof of Theorem \ref{SCLintro}, we now give the full off-diagonal scaling asymptotics of the covariance function \eqref{COV}. The proof is a development of the diagonal result in Proposition \ref{pp:caustic}. We fix $x_0 \in \ccal_E$ and consider
\begin{equation} \label{SC} \Pi_{\hbar, E}(x_0 + \hbar^{2/3} u, x_0 +  \hbar^{2/3}v) = \sum_{\beta \in (\Z_{\geq 0})^d, |\beta|=N} \varphi_{\hbar, \beta}(x_0 + \hbar^{3/2} u)
\varphi_{\hbar,\beta}(x_0 + \hbar^{3/2} v). \end{equation}
Theorem \ref{SCLintro} asserts that (up to a scalar factor) the scaling limit of the kernels \eqref{SC} is the kernel \eqref{Pi0}.

\begin{proof} The proof of the off-diagonal scaling asymptototics
is similar to that of the on-diagonal, so we only give a brief sketch of it.

Repeating the proof of Proposition \ref{pp:caustic}, we again localize the integral for $\Pi_\hbar(x,x)$ to the contour $\Gamma_3.$ We then rescale $z =1- \hbar^{1/3} T$ in \eqref{eq:pixy} and again apply the Trimming Lemma \ref{lm:trim} to obtain 
\be 
\label{eq:piuv}
\Pi_{\hbar} (x_0 + \hbar^{2/3} u, x_0 +\hbar^{2/3} v)  = (2\pi)^{-d/2}  \hbar^{-2d/3+1/3} \int_\ccal T^{-d/2} e^{- \frac{(u-v)^2}{2T} + \frac{T^3}{24} - \frac{T}{2} \lan u+v, x_0 \ran}  dT(1+ O(\hbar^{1/3})).
\ee

On the other hand, we may rewrite \eqref{Pi0} using Lemma \ref{airy-double} on products of Airy functions. We change the integration variable $T \mapsto 2^{-4/3} T$ to get,
\[\Ai(2^{1/3}(u_1 + p^2/2))\Ai(2^{1/3}(v_1 + p^2/2)) = \int_\ccal e^{ \frac{ T^3}{24} - (u_1+v_1+p^2) \frac{T}{2} - \frac{ (u_1 - v_1)^2}{2 T} } \sqrt{\frac{1}{2 \pi T}} \frac{ 2^{-2/3} d T}{2 \pi i} 
\] Substituting into  \eqref{Pi0}, and computing the $dp$ Gaussian integral, we get

\begin{align}
\notag \Pi_0(u_1, u'; v_1, v') &= (2 \pi)^{(-d+1)/2}  \int_\ccal T^{-(d-1)/2} e^{ \frac{ T^3}{24} - (u_1+v_1+p^2) T/2 - \frac{ (u_1 - v_1)^2}{2 T}  - \frac{(u'-v')^2}{2 T} } \sqrt{\frac{1}{2 \pi T}} \frac{  d T}{2 \pi i} \\
&= (2\pi)^{-d/2}  \int_\ccal T^{-d/2} e^{ \frac{ T^3}{24} - (u_1+v_1)\frac{T}{2} - \frac{ (u - v)^2}{2 T}}  \frac{  d T}{2 \pi i} \label{eq:Pifin}
\end{align}
This agrees with the right hand side of  \eqref{eq:piuv}, hence gives the proof for  off-diagonal formula \eqref{E:CausticScaling}. 
This combines with the diagonal result \eqref{pi-tube} in Proposition \ref{pp:caustic} finishes the proof of Theorem \ref{SCLintro}.
\epf

\begin{remark}
We may also easily derive the result of $\Omega_{ij}$ in the $\alpha=2/3$ region from the off-diagonal scaling limit of $\Pi_\hbar$. 
\bea
\Omega_{ij}(x_0+\hbar^{2/3} u_0) &=& \hbar^{-4/3}  \pa_{u_i} \pa_{v_j}|_{u=v=u_0} \log (\Pi_\hbar (x_0 + \hbar^{2/3} u, x_0 + \hbar^{2/3} v)) \\
&=& \hbar^{-4/3}  \pa_{u_i} \pa_{v_j}|_{u=v=u_0} \log (\Pi_0(u,v)) (1+O(\hbar^{1/3}) 
\eea
which immediately gives the correct scaling law of $\Omega \sim \hbar^{-4/3}$. For the full $\Omega_{ij}$, one can use the contour integral expression \eqref{eq:Pifin} for $\Pi_0$, then using the definition for the weighted Airy function. It is the same calculation as \eqref{1st-der} and \eqref{2nd-der}. 
\end{remark}

\subsection{\label{CORPROOF}$L^2$ mass near the caustic: Proof of Corollary \ref{T:Caustic Mass} }

 \bpf[Proof of Corollary  \ref{T:Caustic Mass}]
  %It suffices to  integrate the result \eqref{pi-tube}. 
A random $L^2-$normalized eigenfunction $\Psi_{\hbar, E}$ is distributed according to the uniform measure on the unit sphere from $V_{\hbar, E}.$ Hence,
\[\Psi_{\hbar, E}\stackrel{d}{=}\sum_{\abs{\beta}=N} \frac{a_\beta}{\abs{\vec{a}}} \phi_{\beta,\hbar}\]
where $\vec{a}=\lr{a_\beta}$ is a standard Gaussian vector and $\abs{\vec{a}}=\lr{\sum_\beta \abs{a_\beta}^2}^{1/2}$ is its length. Write $\vec{\phi} = (\phi_\beta)_\beta$.  Then
\[ \E |\Psi(x)|^2 = \E \abs{\left\lan \frac{\vec{a}}{|a|}, \vec{\phi}(x)\right\ran }^2 =  | \vec{\phi}(x) |^2 \E\abs{\left \lan \frac{\vec{a}}{|a|}, \frac{\vec{\phi}(x)}{| \vec{\phi}(x) | }\right \ran }^2  =  \frac{\Pi_{\hbar}(x,x)}{\dim V_{\hbar, E}}.\]
We then integrate \eqref{pi-tube-1} of Theorem \ref{SCLintro} over
  $T_{\delta}(\ccal_E) $ and use \eqref{dimV} and the equation for
  weighted Airy functions in the  second line of \eqref{eq:Ai_kb}
  of  Appendix \S \ref{AIRYAPP}  to find that
$$\begin{array}{lll} \E M_2(\Psi_{\hbar, E}, T_{\delta}(\ccal_E))  & = &\frac{1}{\dim V_{\hbar,E}} \int_{T_{\delta}(\ccal)} \Pi_{\hbar, E}(x,x)dx \\&& \\& \approx &
\Gamma(d) (2\hbar)^{d-1} \int_{\ccal} \int_{-\kappa}^\kappa \Pi_{\hbar, E}(x_0(1 + \hbar^{2/3} r), x_0(1 + \hbar^{2/3} r)) \hbar^{2/3}d r dS(x_0)  \\&&\\& = & 
\Gamma(d) (2\hbar)^{d-1} \cdot \frac{2\pi^{d/2}}{\Gamma(d/2)}\hbar^{2/3} \int_{-\kappa}^\kappa \Pi_{\hbar, E}(x_0(1 + \hbar^{2/3} r), x_0(1 + \hbar^{2/3} r)) d r   \\ &&\\
& \approx &
\Gamma(d) (2\hbar)^{d-1} \cdot \frac{2\pi^{d/2}}{\Gamma(d/2)}\hbar^{2/3} \int_{-\kappa}^\kappa 2^{-d+1} \pi^{-d/2}\hbar^{(1-2d)/3} \Ai_{-d/2}(2r) d r \\ \\
&=& \frac{\Gamma(d)}{\Gamma(d/2)} \hbar^{d/3}\int_{-2\kappa}^{2\kappa}  \Ai_{-d/2}(s) d s \\ \\
&=& \frac{\Gamma(d)}{\Gamma(d/2)^2} \hbar^{d/3}\int_{-2\kappa}^{2\kappa}  \int_0^\infty \Ai(s+\rho) \rho^{d/2-1} d\rho d s
  \end{array}$$
which gives the stated result with
$C(d) = 
\frac{\Gamma(d)}{\Gamma(d/2)^2} $.

 \epf

\subsection{\label{LTENSEMBLE} Scaling limit random Wave Ensemble near the Caustic}

The scaled kernel \eqref{Pi0} is the covariance kernel for a limiting (infinite dimensional) ensemble of Gaussian random functions on $\R^d \simeq T_{x_0} \ccal_E \oplus N_{x_0} \ccal_E$, where $N_{x_0}\ccal_E$ is the fiber of the normal bundle to $\ccal_E$ at $x_0.$ This scaled covariance function corresponds to a Hilbert space of functions on $\R^d$ obtained as scaling limits of Hermite eigenfunctions in the eigenspaces $ V_{\hbar, E}$. 

To show this explicitly, consider the eigenfunctions
$$\widehat{H}_{\hbar} \psi_{\hbar, E} = E \psi_{\hbar}. $$
We rescale this equation around $x_0 \in \ccal_E$ using the local dilation operator
$$D_{\hbar}^{x_0} \psi(u) = \psi(x_0 + \hbar^{\alpha} u). $$
The equation above then is equivalent to
$$D_{\hbar}^{x_0} \circ \widehat{H}_{\hbar} \circ(D_{\hbar}^{x_0})^{-1} \circ D_{\hbar}^{x_0} \psi_{\hbar, E} = E D_{\hbar}^{x_0} \psi_{\hbar}. $$
Now,
$$D_{\hbar}^{x_0} \circ\widehat{H}_{\hbar}\circ (D_{\hbar}^{x_0})^{-1}  = -\half \hbar^2 \hbar^{- 2 \alpha} \Delta_u + \half |x_0 + \hbar^{\alpha} u|^2 
= -\half \hbar^{2 - 2 \alpha} \Delta_u + \frac{|x_0|^2}{2} + \hbar^{\alpha}  \langle x_0, u\rangle  + R(\hbar, x_0, u). 
$$
Note that since $|x_0|^2/2 = V(x_0) = E$,  the terms $E D_\hbar^{x_0} \psi_{\hbar}$ cancel.  The eigenvalue equation becomes a harmonic equation
$$
\left( - \hbar^{2 - 2 \alpha} \Delta_u + 2\hbar^{\alpha}  \langle x_0, u\rangle  + R(\hbar, x_0, u)
\right)\psi^{x_0}_{\hbar}(u) = 0. $$

The equation is a small perturbation of the {\it osculating equation}
$$
\left(  -\hbar^{2 - 2 \alpha} \Delta_u + 2\hbar^{\alpha}  \langle x_0, u\rangle 
\right)\psi^{x_0}_{\hbar}(u) = 0. $$
If $\alpha = \frac{2}{3}$ all factors of $\hbar$ may be eliminated from the osculating equation, giving
$$
\left(\Delta_u -2 u_1  \right)\psi^{x_0}_{\hbar}(u) = 0. $$
Here we choose coordinates $(u_1, u')$ so that $\langle x_0, u \rangle = u_1$, i.e. $x_0 = (1, 0, \dots, 0)$.\footnote{Here, as above, we set  $E=1/2$ as explained near Eq \eqref{eq:scaling}.}
The osculating equation is separable, and becomes
\begin{equation}\label{OE} \left((\frac{\partial^2}{\partial u_1^2} - 2 u_1) + \Delta_{u'} \right) \psi^{x_0}  = 0. \end{equation}
We write
$$\psi^{x_0}(u_1, u') = f(u_1) g(u') $$
to get on $\R_{u_1} \times \R^{d-1}_{u'},$
\be (\frac{\partial^2}{\partial u_1^2} - 2 u_1) f(u_1) = \lambda f(u_1), \;\;\; \Delta_{u'} g (u') = - \lambda g(u'). \ee
We define $\hcal_{\infty}$ be the space of temperate solutions of \eqref{OE},
i.e. solutions in $\scal'(\R^d)$.

The temperate eigenfunctions of $ y'' - x y = \lambda y$ (i.e lying in
$\scal'(\R)$)  are $\{\Ai(x + \lambda)\}. $ The spectrum of the Airy operator is purely absolutely continuous with multiplicity one on all of $\R$ (\cite{G,T} and \cite{O} (Chapter 6)).  Note that $y''  - x y = \lambda u$ is equivalent to $y'' - \tilde{x} y = 0$ if $\tilde{x}=x-\lambda$. 
It follows that a basis of temperate solutions of \eqref{OE} are product
solutions
\begin{equation} \label{PRODUCTS} (2\pi)^{(1-d)/2}  \Ai(2^{1/3}(u_1 + p^2/2)) 2^{1/3}e^{i \lan p, u' \ran}, \;\; p \in \R^{d-1}, \;\; \lambda = p^2. \end{equation}

There is a natural isomorphism $W: L^2(\R^{d-1}) \to \hcal_{\infty},$ which can be used to endow $\hcal_{\infty} $ with an inner product.
Taking the Fourier transform of the osculating equation gives
$$ \left(- |\xi'|^2 - |\xi_1|^2 + \frac{2}{i} \frac{\partial}{\partial \xi_1} \right) \widehat{\psi} = 0. $$
This is a first order linear equation with ``time parameter'' $\xi_1$, and
we write $\widehat{\psi} = \widehat{\psi}(\xi_1, \xi'). $ Then the Cauchy problem 
$$\left\{ \begin{array}{l} \frac{1}{i} \frac{\partial}{\partial \xi_1} \widehat{\psi} (\xi_1, \xi') = 
\frac{1}{2}( |\xi'|^2 + |\xi_1|^2 ) \widehat{\psi}(\xi_1, \xi') \\ \\
\psi(0, \xi') = \widehat{\psi}_0(\xi') \end{array} \right.$$
is solved by the unitary propagator $U(\xi_1)$ on $L^2(\R^{d-1}, d \xi')$ defined by 
\begin{equation} \label{FT1} \widehat{\psi}(\xi', \xi_1) = U(\xi_1) \widehat{\psi}_0(\xi') =  e^{\frac{i}{2} (\xi_1 |\xi'|^2 + \xi_1^3/3)}  \widehat{\psi}_0(\xi').\end{equation}
It follows that  $\hcal_{\infty}$ is isomorphic to the space of Cauchy
data $\widehat{\psi}_0$.
\begin{lem} Let  $g \in L^2(\R^{d-1},dx)$ and let $G(\xi_1, \xi'):  =  U(\xi_1) g(\xi')$. Also, let  $\fcal^*$ denote the
inverse Fourier transform on $\R^d$.  Then the  linear isomorphism 

\begin{equation} \label{W}
\begin{array}{lll} W g(u_1, u') = \left[\fcal^*G \right] (u_1, u') & = & \int_{\R^{d-1} \times \R} e^{i \langle u', p  \rangle} e^{ i \xi_1 u_1}   e^{\frac{i}{2} (\xi_1 |p|^2 + \xi_1^3/3)}  g(p) d\xi _1 dp \\&&\\
& = &\int_{\R^{d-1}} e^{i \langle u', p  \rangle}  \Ai(2^{1/3}(u_1 + |p|^2/2))2^{1/3}
%e^{ i \xi_1 u_1}   e^{i (\xi_1 |p|^2 + \xi_1^3/3)} 
 g(p)  dp. \end{array}  \end{equation} 
 maps $L^2(\R^{d-1}) \to \hcal_{\infty}$ bijectively.
 \end{lem}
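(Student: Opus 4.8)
The plan is to realize $W$ as a composition of elementary bijections and thereby reduce everything to a structure theorem for tempered distributions. Write $W = \fcal^*\circ M_U\circ\iota$, where $\iota\colon L^2(\R^{d-1})\to\scal'(\R^d)$ sends $g=g(\xi')$ to $g(\xi')\otimes 1(\xi_1)$, $M_U$ is multiplication by the smooth unimodular function $e^{\frac{i}{2}(\xi_1|\xi'|^2+\xi_1^3/3)}$ (the propagator $U(\xi_1)$ of \eqref{FT1}, acting on $\scal'(\R^d)$), and $\fcal^*$ is the inverse Fourier transform on $\R^d$. Since $\fcal^*$ is a topological automorphism of $\scal'(\R^d)$ carrying solutions of \eqref{OE} bijectively onto solutions of the Fourier-side equation $\bigl(-|\xi'|^2-|\xi_1|^2+\frac{2}{i}\partial_{\xi_1}\bigr)\widehat\psi=0$, it suffices to analyze the latter space.

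The routine parts are as follows. \emph{(i) Well-definedness and the explicit formula.} For $g\in L^2(\R^{d-1})$ the function $G(\xi_1,\xi')=e^{\frac{i}{2}(\xi_1|\xi'|^2+\xi_1^3/3)}g(\xi')$ lies in $L^2(\R^{d-1}_{\xi'})\otimes L^\infty(\R_{\xi_1})\subset\scal'(\R^d)$, so $Wg=\fcal^*G$ is well defined; carrying out the $\xi_1$-integral via the contour-integral representation of $\Ai$ (Appendix A) with the substitution $\xi_1=2^{1/3}t$ turns the cubic exponential into $\Ai(2^{1/3}(u_1+|p|^2/2))$, yielding the second line of \eqref{W}. \emph{(ii) Image in $\hcal_\infty$.} By construction $\partial_{\xi_1}G=\frac{i}{2}(|\xi'|^2+\xi_1^2)G$, so $G$, hence $Wg$, solves the Fourier-side equation; equivalently, one differentiates under the integral in \eqref{W} and uses $\Ai''(w)=w\Ai(w)$ to get $(\Delta_u-2u_1)Wg=0$. \emph{(iii) Injectivity.} $Wg=0$ forces $G=\fcal(Wg)=0$, and the exponential factor never vanishes, so $g=0$ a.e.

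The substance is \emph{surjectivity}. Given $\psi\in\hcal_\infty$, set $h:=M_U^{-1}\fcal\psi$; since $M_U^{-1}$ is multiplication by the smooth unimodular $e^{-\frac{i}{2}(\xi_1|\xi'|^2+\xi_1^3/3)}$, it is an automorphism of $\scal'(\R^d)$, and the Fourier-side equation gives in one line $\partial_{\xi_1}h=0$ in $\scal'(\R^d)$. The key lemma is then that \emph{a tempered distribution on $\R^d$ killed by $\partial_{\xi_1}$ equals $g\otimes 1(\xi_1)$ for a unique $g\in\scal'(\R^{d-1})$}; I would deduce this from the identity $\{\partial_{\xi_1}\phi : \phi\in\scal(\R^d)\}=\{\psi\in\scal(\R^d) : \int_\R\psi\,d\xi_1\equiv 0\}$, the inclusion $\supseteq$ being the nontrivial direction and following from the fact that the $\xi_1$-antiderivative of such a $\psi$ is again Schwartz --- and this is precisely where the \emph{global} hypothesis (all of $\R^d$, not a bounded $\xi_1$-slab) enters. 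Feeding this back gives $\fcal\psi=M_U(g\otimes 1)$ and $\psi=Wg$ with $W$ extended to $\scal'(\R^{d-1})$; the product solutions \eqref{PRODUCTS} are, up to normalization, the images $W(\delta_p)$, so the continuum basis is complete, and one gives $\hcal_\infty$ its Hilbert structure by transporting the $L^2(\R^{d-1})$ norm through $W$, i.e.\ declaring $W$ an isometry, which is the claimed isomorphism.

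The main obstacle is exactly this structure lemma for $\ker(\partial_{\xi_1})$ on $\scal'(\R^d)$, together with the bookkeeping that the literal restriction of a tempered distribution to $\{\xi_1=0\}$ is meaningless and must be replaced by the conjugated object $h$, whose $\xi_1$-independence encodes the Cauchy datum $g$. Once that lemma is in hand, the completeness of the product system \eqref{PRODUCTS}, and hence the bijectivity of $W$, follows formally from the factorization above.
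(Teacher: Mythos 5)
Your proposal is correct in substance, but it reaches surjectivity by a genuinely different mechanism than the paper. The paper's own proof is shorter and leans on spectral theory: it takes as input that the temperate solutions of the separated equation $y''-xy=\lambda y$ are exactly the multiples of $\Ai(x+\lambda)$ (multiplicity one, the exponentially growing second solution being excluded), asserts on that basis that the product solutions \eqref{PRODUCTS} span $\hcal_\infty$, exhibits the explicit $L^2$ data $g_{v_1,v'}$ with $Wg_{v_1,v'}=\Pi_0(\cdot,(v_1,v'))$ (the identity \eqref{W2}, which is then reused in Lemma \ref{PROJ} and for the RKHS remark), and describes the inverse as $W^{-1}=\rcal\fcal$, i.e.\ restriction of the Fourier transform to $\{\xi_1=0\}$, justified via the functional equation \eqref{FT1}. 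You instead factor $W=\fcal^*\circ M_U\circ\iota$ (inverse Fourier transform, multiplication by the unimodular phase of \eqref{FT1}, tensoring with $1(\xi_1)$) and reduce surjectivity to the structure theorem for tempered distributions annihilated by $\partial_{\xi_1}$. That is a clean, self-contained substitute for the one-dimensional spectral input, and it makes rigorous exactly the step the paper treats loosely: the slice at $\xi_1=0$ of a tempered distribution is meaningless per se, and becomes meaningful only after conjugation, since your $h=M_U^{-1}\fcal\psi$ is $\xi_1$-independent. What your route buys is rigor and independence from the cited spectral theory; what the paper's buys is the explicit formula \eqref{W2} needed later.

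One caveat, which you share with the paper but state more honestly: your structure lemma produces a Cauchy datum $g\in\scal'(\R^{d-1})$ only, so what is literally proved is that the extension of $W$ to $\scal'(\R^{d-1})$ is a bijection onto the space of all temperate solutions of \eqref{OE}; the product solutions themselves are (up to constants) $W(\delta_p)$ and do not lie in $W(L^2)$. Hence bijectivity of $W$ from $L^2(\R^{d-1})$ onto $\hcal_\infty$ as literally defined requires reading $\hcal_\infty$ as the image $W(L^2)$, i.e.\ through Definition \ref{WDEF}, which is also what the paper implicitly does when it says the product solutions ``span'' $\hcal_\infty$ and writes $W^{-1}=\rcal\fcal$. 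This is a looseness of the statement rather than a defect of your argument; your closing paragraph handles it at least as carefully as the paper's own proof does.
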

 
\begin{proof} $W$ is obviously injective and takes its values in $\hcal_{\infty}$. To prove surjectivity, we let
 $$g_{v_1, v'}( p) = (2\pi)^{-(d-1)/2} e^{ -i \langle v', p \rangle}   \Ai(2^{1/3}(v_1 + |p|^2/2))2^{1/3} \in L^2(\R^{d-1}, dx). $$
 Explicitly,
  $$||g_{v_1, v'}||^2_{  L^2(\R^{d-1})} = C_d \int_0^{\infty} |\Ai(2^{1/3}(v_1 + \rho))|^2 \rho^{\frac{d-1}{2}} d \rho. $$
  for some constant $C_d$. 
 The integral converges since $|\Ai(2^{1/3}(v_1 + \rho))|^2$ decays exponentially
 as $\rho \to \infty$ in $\R_+$.
 We then observe that
   \begin{equation} \label{W2}W g_{v_1, v'}(u_1, u') = \Pi_0(u_1, u', v_1, v'). \end{equation}
But  as observed above, product solutions \eqref{PRODUCTS} span
$\hcal_{\infty}$ and we obtain all of them in \eqref{W}.

The inverse $W^{-1}$ can be explicitly described as follows: 
the range of the Fourier transform restricted to $\hcal_{\infty}$,
$$\fcal: \hcal_{\infty} \to \scal'(\R^d)$$
 is the subspace of temperate functions satisfying the functional
equation \eqref{FT1}. Thus 
$$ W^{-1}: \psi \in \hcal_{\infty} \to \widehat{\psi}(0, \xi')   $$
is an injective map to $L^2(\R^d)$ which inverts $W$.
We may write  $W^{-1} = \rcal \fcal$ where $\rcal F(\xi') = F(0, \xi')$.
\end{proof}

 \begin{defn} \label{WDEF} We define  an inner product on $\hcal_{\infty}$  by
 $$\langle W  g, W h \rangle_{\hcal_{\infty}}:  =\langle g, h \rangle_{L^2(\R^{d-1})}. $$
 \end{defn}
  
 \begin{lem}\label{PROJ} With the above inner product, $\Pi_0^2 = \Pi_0$. \end{lem}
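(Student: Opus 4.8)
The plan is to prove that $\Pi_0$ is the reproducing kernel of the Hilbert space $(\hcal_\infty,\langle\cdot,\cdot\rangle_{\hcal_\infty})$; the idempotency relation is then a formal consequence. First, though, one must pin down what ``$\Pi_0^2$'' means here. Since $\hcal_\infty\not\subset L^2(\R^d)$ --- a temperate solution of \eqref{OE} does not decay in the normal variable $u_1$, and indeed $\int_\R|\Ai(2^{1/3}(u_1+c))|^2\,du_1=+\infty$ --- the naive composition $\int_{\R^d}\Pi_0(u,w)\Pi_0(w,v)\,dw$ diverges (carrying out the $w'$--integral forces the two internal momenta to coincide, leaving the divergent factor $\int_\R\Ai^2$). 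The correct reading of $\Pi_0^2=\Pi_0$ is that the middle variable is contracted using $\langle\cdot,\cdot\rangle_{\hcal_\infty}$ rather than Lebesgue measure, i.e.\ $\langle\Pi_0(\cdot,u),\Pi_0(\cdot,v)\rangle_{\hcal_\infty}=\Pi_0(u,v)$ for all $u,v\in\R^d$, which for a real symmetric kernel is exactly the defining property of a reproducing kernel.

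By Definition \ref{WDEF} and the preceding Lemma, $W\colon L^2(\R^{d-1})\to\hcal_\infty$ is a unitary isomorphism, so $W^{-1}=W^{*}$. The engine of the proof is the observation that evaluation at a point of $\R^d$ is represented, via $W$, by the $L^2(\R^{d-1})$--vectors $g_{v_1,v'}$ from the proof of the preceding Lemma: for every $g\in L^2(\R^{d-1})$ and every $v=(v_1,v')$,
\[ (Wg)(v_1,v') \;=\; \langle g,\, g_{v_1,v'}\rangle_{L^2(\R^{d-1})}. \]
This is read off directly from the integral formula \eqref{W}: since $\Ai$ is real, $\overline{g_{v_1,v'}(p)}$ is (a constant times) the integration kernel $e^{i\langle v',p\rangle}\Ai(2^{1/3}(v_1+|p|^2/2))\,2^{1/3}$ appearing in \eqref{W}, so pairing against $g_{v_1,v'}$ in $L^2(\R^{d-1})$ reproduces $Wg$ pointwise. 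Both sides converge because $g\in L^2$ and $\|g_{v_1,v'}\|_{L^2(\R^{d-1})}^2=C_d\int_0^\infty|\Ai(2^{1/3}(v_1+\rho))|^2\rho^{(d-1)/2}\,d\rho<\infty$, exactly as in the previous proof. Combining this with $\Pi_0(\cdot,v)=Wg_{v_1,v'}$ (equation \eqref{W2}) gives the reproducing property: for $\psi=Wg\in\hcal_\infty$,
\[ \langle\psi,\,\Pi_0(\cdot,v)\rangle_{\hcal_\infty} \;=\; \langle Wg,\,Wg_{v_1,v'}\rangle_{\hcal_\infty} \;=\; \langle g,\,g_{v_1,v'}\rangle_{L^2(\R^{d-1})} \;=\; (Wg)(v_1,v') \;=\; \psi(v), \]
using unitarity of $W$ in the middle step. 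Taking $\psi=\Pi_0(\cdot,u)$ then yields $\langle\Pi_0(\cdot,u),\Pi_0(\cdot,v)\rangle_{\hcal_\infty}=\Pi_0(v,u)=\Pi_0(u,v)$, which is the claim. Equivalently, the two checks above establish that $\Pi_0$ acts on $\hcal_\infty$ as $WW^{*}$, whence $\Pi_0^2=W(W^{*}W)W^{*}=WW^{*}=\Pi_0$ since $W^{*}W=\mathrm{Id}_{L^2(\R^{d-1})}$; the kernel identity is the pointwise form of this.

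The only computation with any content is the evaluation identity in the second paragraph, and even there the substance is just matching normalization constants (the $2^{1/3}$ and the powers of $2\pi$) among \eqref{Pi0}, \eqref{W} and the definition of $g_{v_1,v'}$; everything else is routine Hilbert--space formalism. The one conceptual point to get right --- and the place where the hypothesis ``with the above inner product'' is used essentially --- is that the contraction defining $\Pi_0^2$ must be taken in $\langle\cdot,\cdot\rangle_{\hcal_\infty}$: with respect to $L^2(\R^d)$ the identity is simply false (both sides are infinite), and it is precisely the transported inner product of Definition \ref{WDEF} that turns $\Pi_0$ into an orthogonal projection.
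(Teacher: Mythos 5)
Your argument is correct and is essentially the paper's own proof: both interpret $\Pi_0^2$ as the contraction in the $\hcal_\infty$ inner product, write $\Pi_0(\cdot,v)=Wg_{v_1,v'}$ via \eqref{W2}, and then use Definition \ref{WDEF} to reduce to $\langle g_{u_1,u'},g_{v_1,v'}\rangle_{L^2(\R^{d-1})}=\Pi_0(u,v)$. The extra material you include (the failure of the naive $L^2(\R^d)$ composition, the reproducing-kernel reading, and the $\Pi_0=WW^{*}$ formalism) is a helpful elaboration of the same computation rather than a different route.
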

 
 \begin{proof} By \eqref{W},
 $$\begin{array}{lll}\Pi_0^2(u_1, u', v_1, v') & = &  \langle W g_{u_1,u'}, W g_{v_1, v'} \rangle \\&&\\ &  = & \langle g_{u_1, u'}, g_{v_1, v'} \rangle_{L^2(\R^{d-1}} \\&&\\
 & = & \Pi_0(u_1, u', v_1, v'). \end{array}$$
 
 \end{proof}

We also could use $\Pi_0$ directly to define an inner product on $\hcal_{\infty}$
using the method of reproducing kernel Hilbert spaces. According to the Aronszajn theorem, a  symmetric and positive definite
kernel defines a unique reproducing kernel Hilbert space (RKHS) \cite{A}. We
briefly recall that 
a kernel $K(x,y)$ on a space $X \times X$  is called positive-definite if
 %$K(y, x) = \overline{K(x, y)}$ and if 
%$$\sum_{j, k = 1}^M K(x_j, j_k) c_j \bar{c_k} > 0 , \;\;\; \forall \{x_1, \dots, %x_M \in X\}, \forall c_1 \dots c_n \in \C.$$
% That is,
$\begin{pmatrix}K(x_i, x_j)\end{pmatrix}_{i, j \leq M}$ defines a positive Hermitian matrix. Since $\Pi_0$ is the limit of positive definite kernels, it is positive definite
and therefore induces an inner product on $\hcal_{\infty}$. We claim that the RKHS is the same $\hcal_{\infty}$ equipped with  the inner product of Definition \ref{WDEF}.

  By definition, the RKHS associated to $\Pi_0$ is the closure of the 
  set of functions of the form
$$ \left \{g_{v_1, v', \vec a}(\cdot) = \sum_{j = 1}^n a_j \Pi_0(\cdot,  v_1, v'),  \;\; a_j \in \R\right\}, $$ 
equipped  with the inner product
$$\langle g_{v_1, v', \vec a}, g_{u_1, u', \vec b}  \rangle_H = \sum_{j, k= 1}^n a_j b_k  \Pi_0(u_1, u', v_1, v'), $$
and it follows by \eqref{W2} and Lemma \ref{PROJ} that this inner product is the same as Definition \ref{WDEF}.
%where
%$$ f(x) = \sum_{j = 1}^n a_j K(x_j, \cdot), \;\;\; g(x) =  \sum_{j = 1}^m b_j %K(y_j, \cdot). $$
%Since $K$ is positive definite,  this is a Hilbert space inner product, and
%$$\langle K(x, \cdot), K(y, \cdot) \rangle_H : = K(x, y). $$
%It follows that for any $g$,
%$$\langle g, K(x, \cdot) \rangle_H = g(x). $$

The scaled density of the random nodal set in Theorem \ref{CAUSTIC}
can be identified as the density of zeros of the Gaussian random functions in
$\hcal_{\infty}$.

\section{Completion of the proof of Theorem \ref{CAUSTIC}}

The formulae  \eqref{omega-tube} give scaling
asymptotics for the entries of the Kac-Rice matrix of Lemma \ref{L:Gaussian KR}. As mentioned in the remark after the statement of Theorem \ref{CAUSTIC}, we still need to prove the positivity of the first term in our expansion of the spectral projector near the caustic in Theorem \ref{SCLintro}. The proof is supplied by
\bp
\label{Ai-pos}
 $\Ai_{-d/2}(s) > 0$ for all integers $d \geq 2$ and $s \in \R$.
\ep
\bpf
By Proposition \ref{ppBPD},  as $s \to \infty$, $\Ai_k(s) \to 0$. Also note that
 \[ \frac{d}{ds} \Ai_k(s) = -\Ai_{k+1} (s) \]
we know that
\[ \Ai_k(s) = \int_s^\infty \Ai_{k+1}(s') ds'.\]
Hence, it suffices to show that $\Ai_{-1}(s) >0 $ and $\Ai_{-3/2}(s) > 0$ for all $s \in \R$. 

For $\Ai_{-3/2}(s)$, we use Lemma \ref{airy-double} and set $x=y$ to get
\[ \Ai_{-1/2}(s) = \sqrt{2\pi} 2^{1/6} \Ai^2(2^{-2/3} s) \geq 0 \]
and integrate to get $\Ai_{-3/2}(v) > 0$ for all $v \in \R$.

For $\Ai_{-1}(s)$, we use the fact that
$ \Ai_0(s) = \Ai(s)$
and the first zero of $\Ai(s)$ is at $s \approx -2.8$ to see  $\Ai_{-1}(s)>0, \forall s \geq -2$. We now show that $\Ai_{-1}(s) > c > 0$ when $s < -2$ as well.  We use the method of stationary phase, following \cite{BPD}.  We first deform the contour integral $\ccal$ into union of $\ccal_1, \ccal_2, \ccal_3$, see Figure \ref{fig:airy-contour}. 
% has an explicit upper bound on their modulus. 
Here and below we use the same notation for a contour $\ccal_j$ and the integral over the contour. 

The contribution from $\ccal_1$ to $\Ai_{-1}(s)$ for $s < -2$ is
\[ \ccal_1 =  \frac{1}{2 \pi i }\int_{\ccal_1} e^{\frac{T^3}{3} +T |s|} \frac{d T}{T}   =  \frac{1}{2 \pi i }\oint_{|T|=\epsilon} e^{\frac{T^3}{3} +T |s|} \frac{d T}{T}  = 1.\]
The contribution from $\ccal_2$ and $\ccal_3$ are complex conjugates of each other, hence it suffices to compute the real part of one of them. Here we deform the contour $\ccal_2$ again for computing the upper bound.  The sum of the contributions from $\ccal_2$ and $\ccal_3$ is

\bea  
\ccal_2+\ccal_3& = & 
2 \Re \left \{ \frac{1}{2 \pi i }\int_{\ccal_2} e^{\frac{T^3}{3} +T |s|} \frac{d T}{T} \right \} \\
& = & 2 \Re \left \{ \frac{1}{2 \pi i } \left( \int_{T=-\infty}^{-\sqrt{|s|} } + \int_{T= \sqrt{|s|} e^{i \theta},\theta \in (\pi, \pi/2) } + \int_{T = \sqrt{|s|} (i + \exp( i \pi/4) \rho)}  \right) e^{\frac{T^3}{3} +T |s|} \frac{d T}{T} \right \} \\
& = & 2 \Re \left \{ \frac{-1}{2 \pi i } \int_{\sqrt{|s|}}^\infty e^{-\rho^3/3 - \rho |s|} \frac{d\rho}{\rho} \right \} + 2 \Re \left \{ \frac{1}{2\pi} \int_{\pi}^{\pi/2} e^{|s|^{3/2}(e^{3 i \theta}/3 + e^{i\theta})} d\theta \right \} \\
&& \ + 2 \Re \left \{ \frac{1}{2\pi i} \int_0^\infty e^{- |s|^{3/2} \rho^2 -\frac{|s|^{3/2}}{3\sqrt{2}} \rho^3 } e^{i (2/3) |s|^{3/2} + i / (3\sqrt{2}) |s|^{3/2} \rho^3} \frac{d \rho}{e^{i \pi /4} + \rho}  \right \} \\
&=: & K_1 + K_2 + K_3.
\eea
$K_1$ is the real part of a purely imaginary number, hence 
$ K_1 =  0. $ 
For $K_2$, we bound the integrands by the sup-norm hence
\[ K_2 = 2 \Re \frac{1}{2\pi} \int_{\pi}^{\pi/2} e^{|s|^{3/2}(e^{3 i \theta}/3 + e^{i\theta})} d\theta \geq  - \frac{1}{ \pi} \int_{\pi}^{\pi/2} \left| e^{|s|^{3/2}(e^{3 i \theta}/3 + e^{i\theta})} \right| d\theta \geq - \frac{1}{\pi} \int_{\pi}^{\pi/2} d\theta = -0.5,  \]
where we used the fact that $\cos(3\theta)/3 + \cos(\theta) \leq 0$ for $\theta \in [\pi/2, \pi]$. And for $K_3$ we have
\bea 
K_3 &\geq& - \frac{1}{ \pi} \left| \int_0^\infty e^{- |s|^{3/2} \rho^2 -\frac{|s|^{3/2}}{3\sqrt{2}} \rho^3 } e^{i \frac{2}{3} |s|^{3/2} + i |s|^{3/2} \rho^3/(3\sqrt{2})} \frac{d \rho}{e^{i \pi /4} + \rho} \right|\\
& \geq & -\frac{1}{\pi} \int_0^\infty e^{- |s|^{3/2} \rho^2} d\rho  =  -\frac{1}{2\sqrt{\pi} |s|^{3/4}} \geq -\frac{1}{2\sqrt{\pi} 2^{3/4}} \geq -0.2
\eea
Thus, we get $\ccal_1+ \ccal_2 + \ccal_3  \geq 0.3$ when $s < -2$, hence $\Ai_{-1}(s) > 0$ for all $s \in \R$. 

\epf

\section{Intersections of the nodal set with the caustic: Proof of Theorem \ref{T:Caustic Zeros} }

Theorem \ref{T:Caustic Zeros} follows from our formulae for \eqref{COV} restricted to the caustic together with the   Kac-Rice formula for the expected number of intersections of the nodal set with
the caustic. The result is analogous to the formula in \cite{TW} (Proposition 3.2)
for the expected number of intersections of nodal lines with the boundary
of a plane domain. The argument that the Kac-Rice formula can be applied to measure the volume of nodal intersections with the caustic is identical to that given in the beginning of \S \ref{S:KRSECT}.

\bpf

In the notation of the previous section,  with  $x=x_0 + \hbar^{2/3} u$, with $x_0^2 = 1$,  $x_0 = (1,0,\cdots, 0)$ and  $s=2 \langle x_0, u\rangle$, the restriction to the caustic is $u = 0$ and therefore $s = 0$. The covariance matrix used in the Kac-Rice formula is
\begin{align}
\notag \Omega_{ij}(x_0) &= \lr{\dell_{u_i}\dell_{v_j}|_{u=v}\log \Pi_{\hbar, E}(x_0+u\hbar^{2/3},\, x_0+v\hbar^{2/3})}_{2\leq i,j\leq d}\\
&= \hbar^{-4/3} \frac{ \delta_{ij} \Ai_{-1-d/2}(0) \Ai_{-d/2}(0)}{2\Ai^2_{-d/2}(0)} (1+O(\hbar^{1/3})) = \hbar^{-4/3} \frac{ \delta_{ij} \Ai_{-1-d/2}(0) }{2\Ai_{-d/2}(0)} (1+O(\hbar^{1/3})). \label{OMEGA}
\end{align}
We calculate the constant $\fcal_{C_E, d}$ as follows: 
\bea
\fcal_{C_E,d} &=& (2\pi)^{-d/2} \int_{\R^{d-1}} |\xi| \left(\frac{\Ai_{-1-d/2}(0)}{2\Ai_{-d/2}(0)}\right)^{1/2} e^{-\xi^2/2} d\xi \\
&=& (2\pi)^{-1/2} \left(\frac{\Ai_{-1-d/2}(0)}{2\Ai_{-d/2}(0)}\right)^{1/2}  \int_{\R^{d-1}}  |\xi|  e^{-\xi^2/2} \frac{d\xi}{(2\pi)^{(d-1)/2}}  \\
%&=& (2\pi)^{-1/2} \left(\frac{\Ai_{-1-d/2}(0)}{2\Ai_{-d/2}(0)}\right)^{1/2}  %\frac{\int_{\R^{d-1}}  |\xi|  e^{-\xi^2/2} \frac{d\xi}{(2\pi)^{(d-1)/2}}}%{\int_{\R^{d-1}}   e^{-\xi^2/2} \frac{d\xi}{(2\pi)^{(d-1)/2}}} \\
%&=& (2\pi)^{-1/2} \left(\frac{\Ai_{-1-d/2}(0)}{2\Ai_{-d/2}(0)}\right)^{1/2}  %\frac{\int_0^\infty e^{-r^2/2} r^d dr }{\int_0^\infty e^{-r^2/2} r^{d-1} dr } \\
%(r=\sqrt{2u}) &=& (2\pi)^{-1/2} \left(\frac{\Ai_{-1-d/2}(0)}{2\Ai_{-d/2}%(0)}\right)^{1/2}  \sqrt{2} \frac{\int_0^\infty e^{-u} u^{(d+1)/2-1} du }%{\int_0^\infty e^{-r^2/2} u^{d/2-1} du } \\
&=&  (2\pi)^{-1/2} \left(\frac{\Ai_{-1-d/2}(0)}{\Ai_{-d/2}(0)}\right)^{1/2}  \frac{\Gamma(d/2)}{\Gamma((d-1)/2)}
\eea
In the case of $d=2$, we have
\[ \fcal_{C_E,2} = \frac{1}{\sqrt{2} \pi} \left(\frac{\Ai_{-2}(0)}{\Ai_{-1}(0)}\right)^{1/2},\]
and hence 
\[ \E \left( \# Z_{\Phi_{\hbar, E}} \cap \ccal_E \right) = \hbar^{-2/3}  \fcal_{C_E,2} (2\pi) = \hbar^{-2/3} \sqrt{2} \left(\frac{\Ai_{-2}(0)}{\Ai_{-1}(0)}\right)^{1/2}. \]
This concludes the proof.
\end{proof}

\section{Allowed and forbidden annuli for $\alpha < \frac{2}{3}$: Proof of Theorem \ref{T:Mainb}\label{ANNULISECT}}
The main results of this section are Propositions \ref{pp:allowed} and \ref{pp:forbidden}, giving asymptotic
formulae for the Kac-Rice matrix in $h^{\alpha}$ tubes around the caustic where $0 < \alpha < 2/3. $ 
In Section \ref{S:Allowed Annuli} we find the asymptotics in the allowed region for $\alpha < 2/3. $ In Section \ref{pp:forbidden} we do
the same in the forbidden region.

\begin{remark}
Before going into the details of the proofs, we explain how the asymptotics
for $0< \alpha < \frac{2}{3}$ are related to those for $\alpha=0$ or $\alpha=2/3$. The asymptotics have a leading term and a remainder term.  The leading term for  $0< \alpha < \frac{2}{3}$ can be formally obtained by interpolation from the leading term of the  $\alpha=0$ result or the $\alpha=2/3$ result. But this would  not prove that the asymptotics are valid,
because one still has to prove that
 the remainder term for $0< \alpha < \frac{2}{3}$  is smaller than the
purported leading term. The proof we give uses the  stationary phase approach similarly to $\alpha = 0$ but  with a nearly degenerate quadratic phase function and keeps track of how the degeneracy affects the remainder estimate.
\end{remark}

\subsection{The allowed annuli region}\label{S:Allowed Annuli}
\begin{prop}\label{pp:allowed}
Let $\abs{x}^2 = 1 - \hbar^{\alpha} s$ for any $s>0, \alpha \in (0, 2/3)$. We have
\[ \Pi_{\hbar}(x,x) = C_d\hbar^{\Delta(\alpha,d)} s^{d/2-1} (1 + O(\hbar^{1- \frac{3}{2}\alpha})+O(\hbar^{\alpha/2})),\]
where 
\[\Delta(\alpha,d)=1-d + \frac{d-2}{2}\alpha\qquad \text{and}\qquad C_d=\frac{2^{-d+1}\pi^{-d/2}}{\Gamma\lr{\frac{d}{2}}}.\] 
Further, 
\[  \Omega_{ij} (x)=d^{-1}\delta_{ij} \hbar^{-2+\alpha} s (1 + O(\hbar^{1-(3/2)\alpha}) + O(\hbar^{\alpha/2})).\]
The implied constants in $O(\cdots)$ are uniform for $s$ in a compact subset of $(0,\infty).$
\end{prop}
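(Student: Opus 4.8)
The plan is to follow the three-step steepest descent strategy outlined in Section \ref{S:Outline}, but with careful bookkeeping of how the near-coalescence of the two critical points $z_\pm$ degrades the remainder estimate. Write $\abs{x}^2 = 1 - \hbar^\alpha s$ with $s > 0$. By \eqref{CPE} the critical points of $\Phi$ lie on the unit circle: $z_\pm = 1 \pm 2 i \hbar^{\alpha/2}\sqrt{s} + O(\hbar^\alpha)$, and $\Phi''(z_\pm) = \pm i\, 2^{-3/2}\hbar^{\alpha/2}\sqrt{s} + O(\hbar^\alpha)$, so the Gaussian width associated with each critical point is $\sim \hbar^{1/2-\alpha/4}$, which is much smaller than the separation $\abs{z_+-z_-}\sim\hbar^{\alpha/2}$ precisely because $\alpha < 2/3$. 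First I would deform $C_\epsilon$ to the steepest descent contour $\Gamma$ through both $z_\pm$ together with the small keyhole around the amplitude singularity at $z=1$, as in the left panel of Figure \ref{fig:safe-region}; the part of $\Gamma$ joining $z_\pm$ to $z=-1$ contributes $O(\hbar^\infty)$ by the Excision Lemma \ref{lm:excision} (using $\Re\Phi \to -\infty$ along the descent flow and the polynomial control of $A$), and the keyhole piece around $z=1$ is $O(\hbar^\infty)$ for the same reason since $\Re\Phi<0$ strictly there when $s>0$ (the singularity $z=1$ is now separated from the critical points). This localizes $\Pi_\hbar(x,x)$, up to $O(\hbar^\infty)$, to the two arcs $\Gamma_{z_\pm,\delta}$.

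Next, on each arc I would Taylor-expand $\Phi$ and $A$ about $z_c = z_\pm$ and rescale $z = z_c + \hbar^{1/2-\alpha/4} w$, so that the quadratic term of $\Phi/\hbar$ becomes $O(1)$ and the integral over $w$ is (after the Trimming Lemma \ref{lm:trim}) a Gaussian integral with an amplitude $A(z_c)(1 + \text{corrections})$. The amplitude at $z_c$ is of size $A(z_\pm) \sim \hbar^{-d/2}(1-z_\pm^2)^{-d/2} \sim \hbar^{-d/2}(\hbar^{\alpha/2}\sqrt s)^{-d} = \hbar^{-d/2-\alpha d/2} s^{-d/2}$, and combining with $(\Phi''(z_\pm))^{-1/2}\sim \hbar^{-\alpha/4}s^{-1/4}$ from the Gaussian integration and the two-fold sum gives the overall power $\hbar^{\Delta(\alpha,d)}$ with $\Delta(\alpha,d) = 1-d+\frac{d-2}{2}\alpha$ and the constant $C_d$ after matching with the $\alpha=2/3$ or $\alpha=0$ computations via the asymptotics of $\Ai_k$ (Proposition \ref{ppBPD}); here the oscillatory factors $e^{\Phi(z_\pm)/\hbar}$ from the two critical points are complex conjugates and add up to a real cosine-type contribution that, together with the phase $\Phi(z_\pm) \sim \pm \tfrac{i}{3}2^{3/2}\hbar^{3\alpha/2-1}s^{3/2}$, produces the $s^{d/2-1}$ factor after expanding in the regime $\hbar^{3\alpha/2-1}\to\infty$ — this is exactly the mechanism by which the $\alpha=0$ WKB formula \eqref{E:Allowed Density1} is recovered in the limit. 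The same computation applied to the integrands carrying the extra factors $-\tfrac{x_i}{\hbar}\tfrac{1-z}{1+z}$ and $\tfrac{x_ix_j}{\hbar^2}(\tfrac{1-z}{1+z})^2 + \tfrac{\delta_{ij}}{\hbar}\tfrac{2z}{1-z^2}$ gives $\partial_{x_i}\Pi_\hbar$ and $\partial_{x_i}\partial_{y_j}\Pi_\hbar$; forming the Kac-Rice matrix \eqref{E:Gaussian Cov Mat} the leading $x_ix_j\hbar^{-2}$ terms cancel between the two pieces in the combination $\Pi_\hbar^{-1}\partial_i\partial_j\Pi_\hbar - \Pi_\hbar^{-2}\partial_i\Pi_\hbar\partial_j\Pi_\hbar$ (as in the $\alpha=2/3$ case), leaving the isotropic term $d^{-1}\delta_{ij}\hbar^{-2+\alpha}s$ after evaluating $\tfrac{2z}{1-z^2}$ at $z_\pm$.

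The remainder analysis is where I expect the real work. The two error terms $O(\hbar^{\alpha/2})$ and $O(\hbar^{1-\frac32\alpha})$ must be tracked through the Trimming Lemma: the $O(\hbar^{\alpha/2})$ arises because $z_\pm$ is only known to leading order $1 + 2i\hbar^{\alpha/2}\sqrt s$, so replacing the true critical point by this approximation, and Taylor-expanding $A$ and the cubic and higher terms of $\Phi$ about it, produces relative corrections of size $\hbar^{\alpha/2}$; the $O(\hbar^{1-\frac32\alpha})$ arises, exactly as in the heuristic computation in Section \ref{S:Outline of Proofs}, because the amplitude has a factor $(1-z^2)^{-d/2}$ which near $z_\pm$ behaves like $(z_c-z)^{-d/2}$ up to shifts — expanding $(z_c+\hbar^{1/2-\alpha/4}w)^{-d/2}\cdot(\text{shift})$ around $z_c$ and doing the Gaussian integral in $w$ multiplies the $w^2$ coefficient by $\hbar^{1-\alpha/2}$ but divides by $z_c^2 \sim \hbar^{\alpha}$, giving the enhanced remainder $\hbar^{1-\alpha/2}/\hbar^{\alpha} = \hbar^{1-\frac32\alpha}$. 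I would make this rigorous by the same device as in the proof of Proposition \ref{pp:caustic}: bound the remainders $R_2,R_3,R_4$ of the Taylor expansion of $\Phi$ and the remainder $R_1$ of $A$ on the rescaled contour $\Gamma_w$, verify the polynomial-growth hypotheses of Lemma \ref{lm:trim} with $a = \min(\alpha/2,\, 1-\frac32\alpha)$, and conclude. The main obstacle, and the place requiring the most care, is ensuring the two contributions from $z_+$ and $z_-$ are combined correctly — their phases are purely imaginary and opposite, so the leading real part is a genuine oscillation, and one must check that the subleading corrections do not conspire to cancel the leading term (they cannot, because the positivity of $s^{d/2-1}$ and of the $\delta_{ij}$ term follows from the structure, exactly as in the $\alpha=2/3$ remark), and that the uniformity of all error constants for $s$ in compact subsets of $(0,\infty)$ survives the division by powers of $s$ coming from $z_\pm^{-1}\sim s^{-1/2}\hbar^{-\alpha/2}$.
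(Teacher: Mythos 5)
Your proposal localizes the leading asymptotics at the two stationary points $z_\pm$ and dismisses the keyhole around $z=1$ as $O(\hbar^\infty)$; this is exactly backwards, and it is the central gap. Near $z=1$ the phase satisfies $\Re\Phi(z)\approx -\tfrac{1}{2}\hbar^\alpha s\,(z-1)$, so within a distance $\sim\hbar^{1-\alpha}$ of $z=1$ one has $\Re\Phi(z)/\hbar=O(1)$ — there is no exponential suppression there — while the amplitude blows up like $\hbar^{-d/2}(1-z)^{-d/2}$. Substituting $t=\hbar^{\alpha-1}s(z-1)$ on the keyhole gives a contribution of size $\hbar^{-d/2}(\hbar^{1-\alpha}/s)^{1-d/2}=\hbar^{\Delta(\alpha,d)}s^{d/2-1}$, and Hankel's contour representation of $1/\Gamma$ produces precisely the constant $C_d=2^{-d+1}\pi^{-d/2}/\Gamma(d/2)$ (this is the paper's Lemma \ref{L:I0 Int}). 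By contrast, the stationary-phase contributions $I_\pm$ that you take as leading are of size $\hbar^{\delta(\alpha,d)}$ with $\delta(\alpha,d)=\tfrac12\bigl(1-\tfrac\alpha2-d(1+\tfrac\alpha2)\bigr)$, and $I_\pm/I_0\sim\hbar^{\frac{d-1}{2}(1-\frac32\alpha)}$ (see \eqref{eq:I0compare}), so for $d\geq2$ and $\alpha<2/3$ they are strictly subleading; moreover they are genuinely oscillatory (purely imaginary $\Phi(z_\pm)$), and no expansion of a cosine can manufacture the clean non-oscillatory factor $s^{d/2-1}$ you claim to extract from them. The "matching with the $\alpha=2/3$ or $\alpha=0$ computations via $\Ai_k$ asymptotics" step is therefore not a repair: the power $\hbar^{\delta(\alpha,d)}$ simply does not equal $\hbar^{\Delta(\alpha,d)}$.

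The same misattribution infects your treatment of the Kac--Rice matrix. Evaluating $\tfrac{2z}{1-z^2}$ at $z_\pm$ would give $\hbar^{-1-\alpha/2}s^{-1/2}$, not the stated $d^{-1}\delta_{ij}\hbar^{-2+\alpha}s$; the correct scale again comes from the $z=1$ singularity, where each extra factor of $(1-z)$ in the integrand contributes $\hbar^{1-\alpha}/s$ and the constant $d^{-1}$ arises from the ratio of Gamma factors in the Hankel integrals. In the paper's computation the $x_ix_j$ terms do not cancel exactly as at $\alpha=2/3$; rather one finds $\Omega_{ij}=\tfrac{2-d}{2}x_ix_js^{-2}\hbar^{-2\alpha}+d^{-1}\delta_{ij}\hbar^{-2+\alpha}s$ up to errors, and the $\delta_{ij}$ term dominates precisely because $\alpha<2/3$. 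Your error-term bookkeeping (the heuristic for $O(\hbar^{1-\frac32\alpha})$ and $O(\hbar^{\alpha/2})$) is in the right spirit, and your use of the Excision and Trimming Lemmas is appropriate for discarding the arcs away from $z=1$ and $z_\pm$, but the argument as written proves an incorrect leading term and must be reorganized around the pole contribution $I_0$, with the critical points relegated to the error term.
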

\bpf
We give the details for the calculation for $\Pi_{\hbar}(x,x).$ The outline of the proof is given in Section \ref{S:Outline} and our starting point is \eqref{eq:Pi_Contour}. As explained in Section \ref{S:Outline}, the integrand $A(z)e^{\Phi(z)/\hbar}$ is defined on 
\[S=\C \backslash \lr{(\infty, -1]\cup [1,\infty)\cup \set{0}}.\] 
Since the integrand is holomorphic, we may deform the contour $C_\epsilon=\set{\abs{z}=1-\epsilon}$ within $S$. For the portion near $-1$, we break $C_\epsilon$ into two pieces, one terminating at $-1+\epsilon+i0$ and the other starting at $-1+\epsilon -i0.$ Moreover, since the integrand in \eqref{eq:Pi_Contour} is rapidly decaying at $\infty \pm i\tau$ for every $\tau>0$, we may deform $C_\epsilon$ all way to $\infty \pm i0 $ inside $S$ (as shown in Figure \ref{fig:safe-region}). Finally, since $z_{\pm}\in S,$ we can make the deformed contour pass through $z_{\pm}$, and in a $\hbar^{\alpha/2}$ 
neighborhood be given by 
\begin{equation}
z_\pm+\hbar^{\alpha/2}\eta\lr{1\mp \frac{i}{\sqrt{3}}},\qquad \eta \in \R.\label{E:critcont}
\end{equation}
This completes step (i) of the proof outline from Section \ref{S:Outline}. We have 
$$\int_{|z| = 1 - \epsilon} A(z)e^{\Phi(z)/\hbar}dz= \sum_{\pm} \underbrace{\int_{C_{\pm}} A(z)e^{\Phi(z)/\hbar}dz}_{I_\pm}+ \underbrace{\int_{C_0}A(z)e^{\Phi(z)/\hbar}dz}_{I_0},$$
where $C_0$ is the keyhole contour that starts at $\infty+i0,$ wraps around $1$ and ends at $\infty-i0$ and $C_{\pm}$ are the complex conjugate contours on which $\pm \Im z>0$ shown in the left panel of Figure \ref{fig:safe-region}. The purpose of requiring \eqref{E:critcont} is that 
\[\Re{\Phi\lr{z_++\hbar^{\alpha/2}\eta\left[1-\frac{i}{\sqrt{3}}\right]}}=-\hbar^{3\alpha/2}\frac{\eta^2\sqrt{s}}{2\sqrt{3}}+O(\hbar^{2\alpha}).\]
Therefore, there exists $K>0$ and $C>0$ so that for all $s$ in a compact subset of $\R_+$
\begin{equation}
\sup_{z\in {C_0, C_{\pm}\\ \abs{z-z_{\pm}}}\geq K\hbar^{\alpha/2}} \Re \Phi(z) < - C \hbar^{3\alpha/2}.\label{E:PhaseEst}
\end{equation}
This ensures that $\exp\lr{\Phi(z)/\hbar}=O(\hbar^\infty)$ uniformly for $z\in  C_0, C_{\pm}$,  satisfying $\abs{z-z_{\pm}}\geq K\hbar^{\alpha/2}$. It will turn out that $I_0$ grows more rapidly as $\hbar \rightarrow 0$ than $I_{\pm}$. The exact growth rate in $\hbar$ of $I_{\pm}$ is given in the following Lemma. 
\begin{lem}\label{LEMOMEGAPROP}
There exists a constant $C>0$ (depending on $u$) so that as $h\gives 0$ 
  \begin{equation}
\abs{I_\pm}= C\cdot \hbar^{\delta(\alpha,d)}\lr{1+O(\hbar^{1-3\alpha/2})+O(\hbar^{\alpha/2})},\label{E:Iplus}
\end{equation}
where
\[\delta(\alpha, d)=\frac{1}{2}\lr{1-\frac{\alpha}{2}-d\lr{1+\frac{\alpha}{2}}}.\]
\end{lem}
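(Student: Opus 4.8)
The plan is to evaluate $I_\pm=\int_{C_\pm} A(z)e^{\Phi(z)/\hbar}\,dz$ by steepest descent, following the template of Proposition~\ref{pp:caustic} but exploiting that in the allowed region the critical points $z_\pm$ of \eqref{CPE} are complex conjugate and sit at distance $\sim\hbar^{\alpha/2}$ from the amplitude singularity at $z=1$. First I would record the exact critical values. Writing $|x|^2=\rho^2=1-\hbar^\alpha s$ and using \eqref{CPE} one checks $\sqrt{z_\pm}=\rho\pm\sqrt{\rho^2-1}=e^{\pm i\theta}$ with $\theta=\arcsin\sqrt{1-\rho^2}$ (it lies on the unit circle), so
\[
\Phi(z_\pm)=\pm i\Bigl(\tfrac12\sin 2\theta-\theta\Bigr)
\]
is purely imaginary and $|e^{\Phi(z_\pm)/\hbar}|=1$ exactly. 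A short expansion also gives $1-z_\pm^2=\mp 4i\hbar^{\alpha/2}\sqrt s\,(1+O(\hbar^{\alpha/2}))$ and $\Phi''(z_\pm)=\mp\tfrac{i}{2}\hbar^{\alpha/2}\sqrt s\,(1+O(\hbar^{\alpha/2}))$.

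Next I would localize. By \eqref{E:PhaseEst} we have $\Re\Phi(z)<-C\hbar^{3\alpha/2}$ on $C_\pm$ outside a $K\hbar^{\alpha/2}$-ball around $z_\pm$ (the portion of $C_\pm$ near $z=-1$ is handled by splitting into $|z+1|\gtrless\delta$, exactly as for $\Gamma_1,\Gamma_2$ in Proposition~\ref{pp:caustic}, where the divergence $\Re\Phi\to-\infty$ beats the amplitude blow-up), so the Excision Lemma~\ref{lm:excision} reduces $I_\pm$ modulo $O(\hbar^\infty)$ to the integral over the arc \eqref{E:critcont}, $z=z_\pm+\hbar^{\alpha/2}\eta\,\omega_\pm$, $\omega_\pm=1\mp i/\sqrt3$, $|\eta|\le K$. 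On this arc I would rescale $\eta=\hbar^{1/2-3\alpha/4}\tau$: with this choice the quadratic exponent becomes $\tfrac12\Phi''(z_\pm)(z-z_\pm)^2/\hbar=-\tfrac{\sqrt s}{2\sqrt3}\tau^2+(\text{lower order})$, which is $\hbar$-independent at leading order, while $dz=\hbar^{1/2-\alpha/4}\omega_\pm\,d\tau$ and $A(z)=A(z_\pm)\bigl(1+O(\hbar^{1/2-3\alpha/4}|\tau|)+O(\hbar^{1-3\alpha/2}|\tau|^2)+\cdots\bigr)$, where the near-singular factor $(1-z^2)^{-d/2}\sim\hbar^{-d\alpha/4}$ has been absorbed into $A(z_\pm)$ and its logarithmic derivatives produce the $\hbar^{-\alpha/2}$ enhancements. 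The condition $\alpha<2/3$ enters precisely here: the Gaussian width $\hbar^{1/2-\alpha/4}$ in the $z$ variable stays $\ll\hbar^{\alpha/2}$ iff $\alpha<2/3$, so the localized arc does not feel the $z=1$ singularity.

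The core estimate is then an application of the Trimming Lemma~\ref{lm:trim} in the $\tau$-variable: it replaces the truncated integral over $|\tau|\le K\hbar^{3\alpha/4-1/2}$ by the full Gaussian $\int_\R \tau^k e^{-\frac{\sqrt s}{2\sqrt3}\tau^2}\,d\tau$ and absorbs the Taylor remainders of $\Phi$ and $A$ into an error of order $\hbar^a$. The step I expect to be the main obstacle is the bookkeeping of orders: the naive leading correction — the amplitude's first-order term $O(\hbar^{1/2-3\alpha/4}\tau)$ together with the cubic phase remainder $\tfrac16\Phi'''(z_\pm)(z-z_\pm)^3/\hbar=O(\hbar^{1/2-3\alpha/4}\tau^3)$ — is odd in $\tau$ and integrates to zero against the Gaussian, so the genuine even corrections are their squares and the second-order amplitude term, both $O(\hbar^{1-3\alpha/2})$, while the quartic phase remainder gives only $O(\hbar^{1-\alpha})$ and the $O(\hbar^{\alpha/2})$ error comes from the relative corrections to $z_\pm$, $1-z_\pm^2$ and $\Phi''(z_\pm)$ recorded above. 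Finally, collecting powers, $|I_\pm|\sim |A(z_\pm)|\cdot\hbar^{1/2-\alpha/4}\cdot|e^{\Phi(z_\pm)/\hbar}|\cdot O(1)$ with $|A(z_\pm)|\sim(\pi\hbar)^{-d/2}|1-z_\pm^2|^{-d/2}\sim\hbar^{-d/2-d\alpha/4}$, yielding the exponent $\tfrac12-\tfrac\alpha4-\tfrac d2-\tfrac{d\alpha}4=\delta(\alpha,d)$, with $C>0$ the residual Gaussian integral (a constant depending on $s=s(u)$). The analogous computations for $\partial_{x_i}\Pi_\hbar$ and $\partial_{x_i}\partial_{y_j}\Pi_\hbar$ needed for the $\Omega_{ij}$ part of Proposition~\ref{pp:allowed} run along identical lines with the extra amplitude factors $\tfrac{1-z}{1+z}\sim\hbar^{\alpha/2}$ inserted.
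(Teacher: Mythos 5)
Your proposal is correct and reaches the paper's conclusion by essentially the same global strategy, but it implements the key local step differently. Like the paper, you deform through the unit-circle critical points $z_\pm$ of \eqref{CPE}, use that $\Re\Phi(z_\pm)=0$ (your explicit formula $\Phi(z_\pm)=\pm i(\tfrac12\sin 2\theta-\theta)$ is a nice sharpening of the paper's observation), excise everything at distance $\gtrsim K\hbar^{\alpha/2}$ from $z_\pm$ via Lemma \ref{lm:excision} and \eqref{E:PhaseEst}, and extract the power $\hbar^{\delta(\alpha,d)}$ from $|A(z_\pm)|\sim\hbar^{-d/2-d\alpha/4}$ times the Gaussian width $\sqrt{\hbar/|\Phi''(z_\pm)|}\sim\hbar^{1/2-\alpha/4}$, with the same two error scales $O(\hbar^{\alpha/2})$ and $O(\hbar^{1-3\alpha/2})$. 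The difference is in the evaluation of the localized integral: the paper rescales only by $z=z_++\hbar^{\alpha/2}\eta$, verifies in Lemma \ref{L:SPEst} that the rescaled phase has uniformly bounded derivatives and Hessian bounded below, and then quotes H\"ormander's stationary phase (Proposition \ref{P:MSP}) with large parameter $\hbar^{-1+3\alpha/2}$, which yields $R_+=O(\hbar^{1-3\alpha/2})$ in one stroke; you instead rescale to the Gaussian width $z-z_\pm=\hbar^{1/2-\alpha/4}\omega_\pm\tau$ and do the expansion by hand via the Trimming Lemma \ref{lm:trim}, using the odd parity of the first amplitude correction and of the cubic phase term to see that the genuine error is $O(\hbar^{1-3\alpha/2})$ rather than $O(\hbar^{1/2-3\alpha/4})$ --- the cancellation that the $j<k$ expansion in Proposition \ref{P:MSP} encodes automatically. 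Your route is more self-contained (it uses only the paper's own two lemmas) at the cost of more bookkeeping, and it also sidesteps the obstruction mentioned in the paper's footnote, since your rescaling is centered at $z_c$ so the quartic term contributes only $O(\hbar^{1-\alpha})$. One detail to patch: as stated, Lemma \ref{lm:trim} needs $\epsilon<a/m$, and on your full range $|\tau|\le K\hbar^{3\alpha/4-1/2}$ the cubic remainder $\hbar^{1/2-3\alpha/4}|\tau|^3$ is not uniformly small; so first excise once more, down to $|\tau|\le\hbar^{-\epsilon}$ with $\epsilon>0$ small, using that $\Re\Phi(z)/\hbar\le-c\tau^2\le-c\hbar^{-2\epsilon}$ on the discarded piece (for $K$ chosen small enough that the quadratic term dominates there), which costs only $O(\hbar^\infty)$ and parallels the paper's cutoff $\chi_K$.
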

\begin{proof}
It suffices to consider $I_+$ since $I_+=\overline{I_-}.$ We have by \eqref{CPE},
\[z_+ = -1+ 2 |x|^2 + i 2 |x| \sqrt{1-|x|^2} = 1 + i \cdot 2\hbar^{\alpha/2} \sqrt{s} + O(\hbar^\alpha)\]
and $\Re{\Phi(z_+)}=0$ since $\Re(\Phi)$ vanishes on the entire unit circle (except at the point $z=-1$). Recall the constant $K$ from \eqref{E:PhaseEst}. By \eqref{E:PhaseEst}, we have $I_+=I_+'+O(\hbar^\infty),$  where
\[I_+'=\int_{C_+}e^{\Phi(z)/\hbar}A(z) \chi_K(z)dz\]
and
\[\chi_K(z)=
\begin{cases}
  1, & \abs{z-z_+}\leq \frac{K}{2}\hbar^{\alpha/2}\\ 0, &\abs{z-z_+}>K \hbar^{\alpha/2}
\end{cases}.
\]
To evaluate the localized integral, we seek to apply the method of stationary phase (Proposition \ref{P:MSP}) to $I_+,$ and we need the following Lemma to ensure that the error terms are uniformly bounded. 
\begin{lem}\label{L:SPEst}
There exists a $K>0$ and $c, C_j>0,\,\, j=1,2,3,4,$ so that 
\begin{equation}
\inf_{z\in \text{supp}(\chi_K)\cap C_+}\abs{\Phi''(z)}>c\hbar^{\alpha/2}.\label{E:HessLB}
\end{equation}
Futher, for $j=1,2,3$ we have
\begin{equation}\label{E:SPerror1}
 \sup_{z\in \text{supp}(\chi_K)\cap C_+}\abs{\dell_z^j \Phi(z)}\leq C_j \hbar^{(3-j)\frac{\alpha}{2}}
\end{equation}
while for $j=4$ 
\begin{equation}\label{E:SPerror2}
 \sup_{z\in \text{supp}(\chi_K)\cap C_+}\abs{\dell_z^j \Phi(z)}\leq C_j .
\end{equation}
\end{lem}
\begin{proof}
The estimates \eqref{E:HessLB},\eqref{E:SPerror1}, and \eqref{E:SPerror2} are all obtained by Taylor expansion using that
\[\Phi'(z_+)=0\quad \text{and}\quad \Phi''(z_+)=2i\hbar^{\alpha/2}\sqrt{u}\lr{1+O(\hbar^{\alpha/2})}\]
as well as $\sup_{z\in \text{supp}(\chi_K)}\abs{\Phi^{(m)}(z)}\leq \kappa$ for some $\kappa$ independent of $K$ and $m=3,4.$ 
\end{proof}
We change variables 
\[z=z_++\hbar^{\alpha/2} \eta,\]
and $\twiddle{I}_+'$ becomes  
\[\hbar^{\alpha/2}e^{\Phi(z_+)/h}\cdot \int_{\twiddle{C}_+}\twiddle{\psi}(s)\twiddle{\chi}_K(\eta)\twiddle{A}(\eta)\exp\left[\hbar^{-1+3\alpha/2}\cdot \twiddle{\Phi}(\eta)\right]d\eta,\]
where $\twiddle{\chi}_K(\eta)=\chi_K(z_+\hbar^{\alpha/2}\eta)$ and we have written
\[\twiddle{\Phi}(\eta):=\hbar^{-3\alpha/2}\lr{\Phi(z_++\hbar^{\alpha/2}\eta)-\Phi(z_+)}\]
as well as
\[\twiddle{A}(\eta):=A (z_++\hbar^{\alpha/2}\eta).\] 
By \eqref{E:SPerror1} and \eqref{E:SPerror2}, there exists $C_j>0$ so that the phase function $\twiddle{\Phi}$ satisfies
\begin{equation}
\sup_{\eta\in \text{supp}(\twiddle{\chi})}\abs{\dell_\eta^j \twiddle{\Phi}(\eta)}\leq C_j\label{E:PhaseEst1}
\end{equation}
for all $h$ sufficiently close to $0.$ Moreover by \eqref{E:HessLB}, there exists $c>0$ so that
\begin{equation}
\inf_{z\in \text{supp}(\chi)}\abs{\dell_s^2 \twiddle{\Phi}(\eta)}\geq c\label{E:Hess Bound}
\end{equation}
for all $h$ sufficiently close to $0.$ This shows that the constant $C$ in the error term from applying stationary phase (Proposition \ref{P:MSP}) is independent of $h.$ We have
\[ \abs{I_+} = \frac{ \hbar^{\alpha/2}\abs{ A(z_+) e^{\Phi(z_+)/\hbar} }}{\sqrt{2\pi (\abs{\twiddle{\Phi}''(0)} / \hbar^{1-3\alpha/2})}} (1 + R_+) +O(\hbar^\infty)\]
To compute the leading order term, we note that
\[ \Phi(z_+) \in i \R, \quad |\Phi''(z_+)| =2 \hbar^{\alpha/2}\sqrt{s} (1+O(\hbar^{\alpha/2})). \]
Thus $|e^{\Phi(z_+)/\hbar}| = 1$. Since $A(z)$ has the pole contribution $(z-1)^{-d/2}$ and $z_+=1+2i\hbar^{\alpha/2}s$ we have 
\[ A(z_+) =C\cdot \hbar^{-(d/2)(1+\alpha/2)}s^{-d/2} (1+O(\hbar^{\alpha/2})).\] 
Moreover, derivatives of the amplitude $\twiddle{\chi}_K$ in the $\eta$ variable are all bounded and for each $j\geq 0$ there exists $C_j>0$ so that
\[\dell_\eta^j \lr{\twiddle{\chi}_K(\eta)\cdot \twiddle{A}(\eta)}\leq C_j \abs{A(z_+)}.\]
Therefore, there exists $C>0$ so that 
\[ \abs{R_+}\leq C \hbar^{1-3\alpha/2}.\]
Combining the above estimate, we have
\[ \abs{I_+} = C \hbar^{\delta(\alpha, d)}\left(1+O(\hbar^{\alpha/2})+O(\hbar^{1-3\alpha/2})\right),\]
completing the proof of Lemma \ref{LEMOMEGAPROP}.
\end{proof}
We study $I_0$ as $\hbar\rightarrow 0$ in the next Lemma.
\begin{lem}\label{L:I0 Int} As $h\gives 0,$
\[I_0=\frac{2^{1-d} \pi^{-d/2} s^{d/2-1}}{\Gamma\lr{\frac{d}{2}}}\cdot \hbar^{\Delta(\alpha,d)}\lr{1+O(\hbar^{2-3\alpha}) + O(\hbar^{1-\alpha})},\]
where $\Delta(\alpha, d)=-\frac{d}{2}+1-\alpha -\frac{d}{2}(1-\alpha)$.
\end{lem}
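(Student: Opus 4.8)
\emph{Proof plan for Lemma \ref{L:I0 Int}.} The approach mirrors that of Proposition \ref{pp:caustic} — localize the keyhole integral $I_0=\int_{C_0}A(z)e^{\Phi(z)/\hbar}\,dz$ to a neighborhood of $z=1$, rescale, and trim — the one essential change being the size of the natural local variable. Since here $|x|^2=1-\hbar^\alpha s$ exactly, expanding $\Phi$ about $z=1$ in powers of $w:=1-z$ gives
\[
\Phi(z)=\sum_{j\ge1}\Big(\tfrac1{2j}-\tfrac{|x|^2}{2^j}\Big)w^j=\tfrac{\hbar^\alpha s}{2}w+\tfrac{\hbar^\alpha s}{4}w^2+\Big(\tfrac1{24}+\tfrac{\hbar^\alpha s}{8}\Big)w^3+\sum_{j\ge4}\Big(\tfrac1{2j}-\tfrac1{2^j}\Big)w^j+\hbar^\alpha s\sum_{j\ge4}\tfrac{w^j}{2^j},
\]
while $A(z)=(2\pi\hbar(1-z))^{-d/2}(2\pi i)^{-1}(1+R_1(z))$ with $|R_1(z)|\le C|z-1|$ near $z=1$. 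Under $z=1-\hbar^{1-\alpha}T$ the linear term of $\Phi/\hbar$ becomes $\tfrac s2 T$ (order one), the quadratic term becomes $\tfrac s4\hbar^{1-\alpha}T^2$, the cubic term becomes $\tfrac1{24}\hbar^{2-3\alpha}T^3$, and the $j$-th term ($j\ge4$) becomes $O(\hbar^{j(1-\alpha)-1}T^j)$; the restriction $\alpha<2/3$ is exactly what makes $2-3\alpha>0$ and hence every one of these corrections $o(1)$, with the two relevant exponents being $1-\alpha$ (from the quadratic term and from $R_1$) and $2-3\alpha$ (from the cubic term). The amplitude contributes the overall prefactor $(2\pi)^{-d/2}\hbar^{\Delta(\alpha,d)}$, since $-\tfrac d2-(1-\alpha)\tfrac d2+(1-\alpha)=\Delta(\alpha,d)$, multiplied by $T^{-d/2}(1+O(\hbar^{1-\alpha}T))$.

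Next I would fix a small $\delta>0$ and deform $C_0$ so that inside $|z-1|\le\hbar^{1-\alpha-\delta}$ it is a Hankel-type contour at scale $\hbar^{1-\alpha}$ — a circular arc of radius $\hbar^{1-\alpha}$ about $z=1$ closed by two rays hugging the cut $[1,\infty)$, i.e. $\arg(1-z)$ near $\pm\pi$, along which $\mathrm{Re}\,\Phi$ is decreasing — and outside that disc it follows descent rays toward $+\infty$. On the part with $|z-1|>\hbar^{1-\alpha-\delta}$ one has $\mathrm{Re}(\Phi(z)/\hbar)\le -c\,\hbar^{-\delta}$ (at the inner edge $\mathrm{Re}\,\Phi\approx-\tfrac{\hbar^\alpha s}{2}\mathrm{Re}(1-z)\sim-\hbar^{1-\delta}$, and $\mathrm{Re}\,\Phi$ only becomes more negative further out, with the tail to $+\infty$ controlled by $\mathrm{Re}\,\Phi(z)\to-\infty$; for even $d$ this outer tail in fact cancels between the two sides of $C_0$ and $I_0$ is just a residue at $z=1$). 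Hence the Excision Lemma \ref{lm:excision} makes this part $O(\hbar^\infty)$.

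It then remains to evaluate the localized integral, which after $z=1-\hbar^{1-\alpha}T$ takes the form
\[
-\,(2\pi)^{-d/2}\hbar^{\Delta(\alpha,d)}\int_{\Gamma_T}\big(T^{-d/2}+\widetilde A_\hbar(T)\big)\,e^{\frac s2 T+A_\hbar(T)}\,\frac{dT}{2\pi i},
\]
where $\Gamma_T$ is the truncated Hankel contour ($|T|\in[1,\hbar^{-\delta}]$ on the rays, arc of radius $1$), $|\widetilde A_\hbar(T)|\le C\hbar^{1-\alpha}(1+|T|^m)$, and, from the expansion above, $|A_\hbar(T)|\le C(\hbar^{1-\alpha}+\hbar^{2-3\alpha})(1+|T|^m)$ on $\Gamma_T$ for a suitable fixed $m$ (for $\alpha<2/3$ every correction exponent $j(1-\alpha)-1$, $j\ge3$, is positive and $\ge2-3\alpha$). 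I would then apply the Trimming Lemma \ref{lm:trim} with leading amplitude $f(T)=T^{-d/2}$ (bounded on $\Gamma_T$ since the arc has fixed radius $1$), leading phase $\tfrac s2 T$ (degree one, with $\mathrm{Re}(\tfrac s2 e^{\pm i\pi})=-\tfrac s2<0$), and $a=\min(1-\alpha,2-3\alpha)>0$, to replace $\Gamma_T$ by the full Hankel contour at the cost of a factor $1+O(\hbar^{1-\alpha})+O(\hbar^{2-3\alpha})$. Finally, the Hankel representation of $1/\Gamma$ gives $\frac1{2\pi i}\int T^{-d/2}e^{\frac s2 T}\,dT=\frac{(s/2)^{d/2-1}}{\Gamma(d/2)}$, and $(2\pi)^{-d/2}(s/2)^{d/2-1}=2^{1-d}\pi^{-d/2}s^{d/2-1}$, which — after fixing the orientation of the Hankel contour so that $I_0$, the leading piece of the manifestly positive $\Pi_\hbar(x,x)$, comes out positive — yields the stated asymptotics.

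The main obstacle is the bookkeeping in the last step: one must organize the Taylor remainders of $\Phi$ and $A$ so that, after the $\hbar^{1-\alpha}$-rescaling, all corrections are genuinely of the form $\hbar^{a}$ times a fixed polynomial in $T$ with $a=\min(1-\alpha,2-3\alpha)$ (not an $\hbar^{a-O(\delta)}$), and in particular check that the cubic remainder $\tfrac1{24}\hbar^{2-3\alpha}T^3$ is the borderline term responsible for the hypothesis $\alpha<2/3$ — its exponent vanishes at $\alpha=2/3$, exactly where this expansion fails and the weighted-Airy description of Theorem \ref{SCLintro} takes over. A secondary, more routine point is that the deformed contour near $z=1$ must be a Hankel-type loop rather than a descent contour through a saddle, because the singularity of $A$ at $z=1$ (a pole of order $d/2$ for even $d$, a branch point for odd $d$) is now the dominant feature; the odd-$d$ case requires stating the orientation around the cut $[1,\infty)$ explicitly, though it contributes nothing beyond the $1/\Gamma(d/2)$ factor.
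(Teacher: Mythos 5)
Your proposal is correct and follows essentially the same route as the paper: localize the keyhole integral at the $z=1$ singularity, rescale at the natural scale $\hbar^{1-\alpha}$, control the Taylor remainders of phase and amplitude via the Trimming Lemma with $a=\min(1-\alpha,2-3\alpha)$, and evaluate the leading integral by Hankel's formula for $1/\Gamma(d/2)$, yielding the same constant, the same power $\hbar^{\Delta(\alpha,d)}$, and the same error exponents. The only (cosmetic) difference is that the paper first performs the substitution $w=(1-z)/(1+z)$, which compactifies the keyhole (sending $z=\infty$ to $w=-1$) and removes even powers from the phase, whereas you expand directly in $1-z$, so you must separately dispose of the non-compact tail along the cut and of the harmless $O(\hbar^{1-\alpha})$ quadratic phase correction — which you do.
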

\begin{proof}
We introduce a new variable $w = (1-z)/(1+z), $ then $z=(1-w)/(1+w)$. We will abuse notations to mean $A(w) := A(z(w)), \Phi(w) := \Phi(z(w))$. Then we get
\[ A(w) \frac{dz }{dw} = \frac{-1}{\pi i}  \frac{(1-w^2)^{d/2-1}}{(4\pi \hbar)^{d/2}(1+w)^2} w^{-d/2}, \quad \Phi(w) = \hbar^\alpha s w + \sum_{j\geq 3 \,\, \text{}odd}w^j/j,\]
 In the new variable $w$,  
the contour $C_0$  starts from $w=-1+i0$, wraps around $[-1,0]$ (which is the image of the branch cuts from the $z-$plane) counterclockwise and returns to $w=-1-i0$. We deform $C_0$ to 
\[C_{0,-}=[-1-i0,\, -\hbar^{1-\alpha}/u-i0],\, \,C_{0,\text{circle}}=\set{\abs{w}=\hbar^{1-\alpha}/u}\,\, C_{0,+}=[-\hbar^{1-\alpha}/u+i0,\, -1+i0]\]
and correspondingly write 
\[I_0=I_{0,-}+I_{0,\text{circle}}+I_{0,+}.\]
For any $\epsilon>0,$ if $w<-\hbar^{1-\alpha-\epsilon},$ then the phase $\Phi(w)$ is exponentially large in $h$ and negative so that 
\[I_0=\int_{C_0\cap \abs{w}>\hbar^{1-\alpha-\epsilon}} A(w)e^{\Phi (w)/h}dw = O(\hbar^\infty).\]
Let us denote by  $\twiddle{C}_{0,\hbar}$  a contour running from $-\hbar^{-\epsilon} - i0$ counterclockwise around $(-\infty, 0]$ to $-\hbar^{-\epsilon} + i0.$ Changing variables $W=sw\hbar^{\alpha-1}$ and Taylor expanding, we have  \begin{align*}I_0&= 2^{1-d}\pi^{-d/2} \hbar^{\Delta(\alpha,d)}s^{d/2-1}\cdot \frac{i}{2\pi} \int_{\twiddle{C}_{0,\hbar}} (W^{-d/2} + B_\hbar(W)) e^{W  + A_\hbar(W)} dW \\
&= 2^{1-d}\pi^{-d/2} \hbar^{\Delta(\alpha,d)}s^{d/2-1}\cdot \frac{i}{2\pi}\int_{\twiddle{C}_{0}} e^{W } W^{-d/2}dW\,\, \cdot \lr{1+O(\hbar^{2-3\alpha}) + O(\hbar^{1-\alpha})}.
\end{align*}
where in the last step we used the Trimming Lemma \ref{lm:trim}, and $\twiddle{C}_0,$ is a contour running from $-\infty - i0$ counterclockwise around $(-\infty, 0]$ to $-\infty + i0$. To see that the lemma is applicable, we note that
\[A_\hbar(W) =\hbar^{-1} \sum_{j\geq 3 \,\, \text{}odd} (W \hbar^{1-\alpha} s^{-1})^j/j \]
Since there exists $w_0$ small enough such that if $|w| < w_0$, we have  $\sum_{k \geq 1} |w|^{2k+1}/(2k+1) < 2 |w|^{3}/3$. And since $|w| < \hbar^{1-\alpha-\epsilon}$, we can always achieve $w < w_0$ by letting $\hbar$ being small enough. Thus for $\hbar$ small enough, for $|W| < \hbar^{-\epsilon}$ we have 
\[\abs{A_\hbar(W)} \leq \hbar^{-1} (2/3) |W \hbar^{1-\alpha} / s|^{3} = (2/3) |s|^{-3} \hbar^{2-3\alpha} |W|^3 \]
Similarly, we have $\abs{B_\hbar(W)} \leq C \hbar^{1-\alpha} $. This shows we can apply Lemma \ref{lm:trim} with $a = \min(2-3\alpha, 1-\alpha).$ Finally, using Hankel's representation for the reciprocal of the Gamma function
\[\frac{1}{\Gamma(z)}=\frac{i}{2\pi }\int_{\twiddle{C}_0}W^{-z}e^{W}dW,\]
we find
\[I_0=\frac{2^{1-d} \pi^{-d/2}s^{d/2-1}}{\Gamma\lr{\frac{d}{2}}} \hbar^{\Delta(\alpha,d)}\lr{1+O(\hbar^{2-3\alpha}) + O(\hbar^{1-\alpha})}.\]
This completes the proof of Lemma \ref{L:I0 Int}.
\end{proof}

Comparing the contribution from the critical point $z_\pm$ and the pole singularity at $z=1$, 
\be I_\pm / I_0 = C \hbar^{\frac{d-1}{2}(1-3\alpha/2)} (1 + O(\hbar^{1-(3/2)\alpha}) + O(\hbar^{\alpha/2})) \label{eq:I0compare}\ee
we see for $\alpha < 2/3$, $I_0$ dominates. This concludes the $\Pi_{\hbar}(x,x)$ estimates claimed in Proposition \ref{pp:allowed}.

The estimate for $\pa_{x_i} \Pi_{\hbar}(x,x)$ and $\pa_{x_i} \pa_{y_j} \Pi_{\hbar}(x,x)$ are similar, the singularity at $z = 1$ dominates, and each additional $(1-z)$ factor contribute an $\hbar^{1-\alpha}/s$ factor. We get
\bea
\Pi^{-1}_{\hbar} \pa_{x_i} \Pi_{\hbar}(x,x) & = & -\lr{\frac{d-2}{2} }x_i \hbar^{-\alpha}  s^{-1}(1 + O(\hbar^{1-(3/2)\alpha}) + O(\hbar^{\alpha/2})) \\
\Pi^{-1}_{\hbar} \pa_{x_i} \pa_{y_j} \Pi_{\hbar}(x,x) & = & \lr{\lr{\frac{d}{2}-1}\lr{\frac{d}{2}-2} x_i x_j s^{-2} \hbar^{-2\alpha}  +d^{-1}\delta_{ij} \hbar^{-2+\alpha} s} (1 + O(\hbar^{1-(3/2)\alpha}) + O(\hbar^{\alpha/2})) \\
\Omega_{ij} & = & \lr{\frac{2-d}{2} x_i x_j s^{-2} \hbar^{-2\alpha}+d^{-1}\delta_{ij} \hbar^{-2+\alpha} s} (1 + O(\hbar^{1-(3/2)\alpha}) + O(\hbar^{\alpha/2})) 
\eea
As long as $\alpha<2/3,$ the $\delta_{ij}$ term from $\Omega_{ij}$ dominates and we have
\[\Omega_{ij}=d^{-1}\delta_{ij} \hbar^{-2+\alpha} s  (1 + O(\hbar^{1-(3/2)\alpha}) + O(\hbar^{\alpha/2})).\]
This completes the proof of Proposition \ref{pp:allowed}.
\epf
Applying the Kac-Rice formula, we complete the proof of the allowed region part of the Theorem \ref{T:Mainb}.

\subsection{The forbidden annuli region}

\bp
\label{pp:forbidden}
Let $x^2 = 1 + \hbar^{\alpha} s$ for any $s>0, \alpha \in (0, 2/3)$, then 
\[ \Pi_{\hbar}(x,x) = 2^{-d}\pi^{-\frac{d+1}{2}}\hbar^{-\frac{d-1}{2} - \alpha\frac{d+1}{4}} s^{-1/4-d/4} e^{ -2/3 s^{3/2} \hbar^{-1+ 3 \alpha/2}} (1+O(\hbar^{\alpha/2}) +O( \hbar^{1-(3/2)\alpha})).\]
Moreover,
\[  \Omega_{ij} (x) = \hbar^{-1-\alpha/2} (-x_i x_j+ \delta_{ij}) (2 s^{1/2})^{-1} (1 + O(\hbar^{1-(3/2)\alpha}) + O(\hbar^{\alpha/2})).\]
The implied constants in the estimate above are uniform for $s$ in a compact subset of $(0,\infty).$
\ep
\bpf
As in \S \ref{S:Allowed Annuli}, the initial contour $C_\epsilon =\set{\abs{z}^2=1-\epsilon}$ can be deformed freely inside $S$ (defined in \eqref{E:SDef}). The relevant critical point of $\Phi$ is 
\[ z_c = 1 + 2 (x^2-1) - 2 \sqrt{(x^2-1) x^2} = 1 - 2 \sqrt{\hbar^{\alpha} s} + O(\hbar^{\alpha}), \]
with 
\[ \Phi(z_c) = -\frac{2}{3} s^{3/2} \hbar^{3\alpha/2 } + O(\hbar^{5 \alpha/2}), \quad \Phi''(z_c) = \frac{\hbar^{\alpha/2} \sqrt{s} }{2} + O(\hbar^\alpha).\]
We may therefore deform $C_\epsilon$ in a $\hbar^{\alpha/2}$ neighborhood of $z_c$ so that it is parametrized by 
\[z_c+\hbar^{\alpha/2}i\eta,\qquad \eta\in \R.\]
Then a similar stationary phase computation as in the Proposition \ref{pp:allowed} carries through to get the result for $\Pi_\hbar(x,x)$. 

Next, we evaluate $\Omega_{ij}(x)$ using (\ref{eq:omega_int}). Here we write $\Omega$ as
\begin{equation}
\label{eq:omega_int}
\Omega_{ij}(x) = \big\lan \frac{x_i x_j}{\hbar^2} \frac{2(z-w)^2}{(1+z)^2(1+w)^2}  +  \frac{\delta_{ij}}{\hbar} \frac{2z}{1-z^2}  \big\ran_{z,w} 
\end{equation}
where 
\[ \lan \cdots \ran_{z,w} :=  \frac{ \oint_z \oint_w (\cdots) A(z)  A(w)  e^{\Phi(z)/\hbar}e^{\Phi(w)/\hbar}dz dw}{ \oint_z \oint_w   A(z)  A(w)  e^{\Phi(z)/\hbar}e^{\Phi(w)/\hbar}dz dw}.
\]
Applying the standard stationary phase method, we get
\[ \lan f(z,w) \ran_{z,w} =f(z_c,w_c) - \hbar^{1-\alpha/2} s^{-1/2}(f_{zz}+ f_{ww})|_{(z_c,w_c)}( 1 + O(\hbar^{1-\alpha/2}) +O(\hbar^\alpha))  \]
Indeed, in computing the next order term in $\lan f(z,w)\ran_{z,w}$, one need not worry about the pairing between $A(z)$ and $A(w)$ among themselves, since the same contribution from the denominator will cancel them out. Thus
\bea
\lan \frac{x_i x_j}{\hbar^2} \frac{2(z-w)^2}{(1+z)^2(1+w)^2}  \ran_{z,w} &=& 0 +  \left(\frac{x_i x_j}{\hbar^2} \frac{ - \hbar^{1-\alpha/2} s^{-1/2}}{2} \right)(1+ O(\hbar^{1-\alpha/2}) + O(\hbar^\alpha)) \\
\lan  \frac{\delta_{ij}}{\hbar} \frac{2z}{1-z^2}  \ran_{z,w}  &=& \frac{\delta_{ij}}{\hbar} \frac{1}{2 \hbar^{\alpha/2} \sqrt{s}} + O(\hbar^{\alpha/2}) + O(\hbar^{1-(3/2) \alpha}) \\
\Omega_{ij}(u) &=& \lan \frac{x_i x_j}{\hbar^2} \frac{2(z-w)^2}{(1+z)^2(1+w)^2}  +  \frac{\delta_{ij}}{\hbar} \frac{2z}{1-z^2}  \ran_{z,w} \\
&=& \hbar^{-1-\alpha/2} (-x_i x_j + \delta_{ij}) (2 \sqrt{s})^{-1} (1 + O(\hbar^{1-\alpha(3/2)}) + O(\hbar^{\alpha/2})) 
\eea
\epf
Applying the Kac-Rice formula, we complete the proof of the forbidden region part of the Theorem \ref{T:Mainb}.

\appendix

\section{\label{AIRYAPP} The Weighted Airy Functions}
For any $k \in \R$, we define the weighted Airy function $\Ai_k(s)$ by
\begin{equation}
 \label{eq:Ai_k2} \Ai_{k}(s) := \int_\ccal T^{k} \exp \left(  \frac{T^3}{3} - T s\right) \frac{dT}{2\pi i},\qquad s \in \R
\end{equation}
The contour $\ccal$ is coming from $e^{-i \theta_-} \infty$ and ending at $e^{i \theta_+} \infty$, for $\theta_- \in [-\pi/2,-\pi/6]$ and $\theta_+\in [\pi/6, \pi/2]$,  and stays within the right half plane ($\Re\,T>0$) (shown as $C$ in Figure \ref{fig:airy-contour}). 
\begin{figure} 
\centering  
\includegraphics[width=0.33\textwidth]{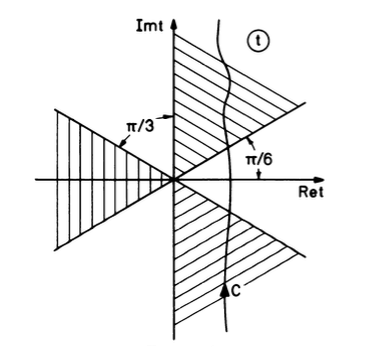}
\includegraphics[width=0.33\textwidth]{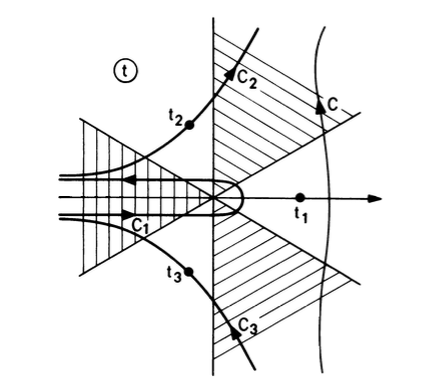}
\caption{Defining contour for the generalized Airy function $\Ai_k(u)$ (left panel), and its deformation when $x\ll0$. (Taken from \cite{BPD}.) \label{fig:airy-contour}}
\end{figure}

If $k=0$, $\Ai_0(s) = \Ai(s)$. For positive integral weight, we have 
\be \Ai_k(s) = (-\frac{d}{dx})^k \Ai(s) \quad k \in \Z_+\ee
For negative real weight, we can use 
\[ T^{-k} = \frac{1}{\Gamma(k)}\int_0^\infty \rho^{k-1} e^{- \rho T} d\rho, \quad \Re(T)>0, k > 0 \]
to get
\bee
\Ai_{-k}(s) & =&   \frac{1}{\Gamma(k)}  \int_\ccal \int_0^\infty\exp \left(  \frac{T^3}{3} - T (s+\rho)\right) \rho^{k-1}  d\rho  \frac{dT}{2\pi i} \notag \\
 &=&  \frac{1}{\Gamma(k)} \int_0^\infty \Ai(s+\rho)\rho^{k-1} d\rho \label{eq:Ai_kb}
\eee

The weighted Airy function has been considered before, see \cite{BPD}, there $W(z,n) = \Gamma(n+1) \Ai_{-(n+1)} (z)$.  Here we quote the weighted Airy functions' asymptotic expansion from the above paper, cf Eq. (17), (18), (19) there. 
\bp[\cite{BPD}]
\label{ppBPD}
Fix any $k \in \R$. \\
(1) For $s \gg 0$, we get 
\[ \Ai_{-k}(s) = \frac{1}{2\sqrt{\pi}} \frac{e^{-2/3 s^{3/2}}}{s^{(2k+1)/4}}[1 - \frac{k^2+2k}{4 s^{3/2}} + O(s^{-3})]. \]
(2) For $s \ll 0$,  we get 
\bea 
\Ai_{-k}(s) &=& \sum_{j=0}^\infty \frac{|s|^{k-3j-1}}{3^j \Gamma(k-3j)} 
 \; + \; \frac{\sin(2 |s|^{2/3}/3 - (2k-1) \pi/4)}{\pi^{1/2} |s|^{(2k+1)/4}} [1+O(|s|^{-1})]
\eea
In first term summation, if $j$ is such that $k - 3j$ is a non-positive integer, then $|\Gamma(k-3j)| = \infty$ and the corresponding term vanishes. 
\ep
The only difference with \cite{BPD} is that for $s \gg 0$ case, we computed the second order term, and we removed the $k \geq 1$ condition. Since the arguments in that paper still holds verbatim, we will not give the detail of the proof here.  
%This asymptotic formula can be used to extrapolate the result from the caustic to the result on allowed and forbidden annuli. 

The following lemma is used in the proof of Theorem  \ref{SCLintro}.
\bl[Product formula for Airy function]
\label{airy-double}
\[ \Ai(x) \Ai(y) =  \int_{\ccal} e^{ \frac{2 T^3}{3} - (x+y) T - \frac{(x-y)^2}{8 T} } \sqrt{\frac{1}{2 \pi T}} \frac{d T}{(2 \pi i)} \]
In particular, if $x=y$, we get 
\[ \Ai(x)^2  = 2^{-1/6} (2\pi)^{-1/2} \Ai_{-1/2}(2^{2/3} x). \]
\el

\bpf
From the integral expression of airy function we obtain
\[\Ai(x) \Ai(y)  =  \int_{T_1 \in \ccal} \int_{T_2 \in \ccal} e^{\frac{T_1^3+T_2^3}{3} -  x T_1-  y T_2}  \frac{d T_1}{2 \pi i} \frac{d T_2}{2 \pi i}.\]
If we straighten $\ccal$ to be $i \R + \epsilon$ for some $\epsilon>0$, then we can reparametrize the integration variables as $T_1 = T+ S/2, T_2 = T- S/2$, for $T \in i \R + \epsilon$ and $S \in i \R$. The integration becomes
\bea
\Ai(x) \Ai(y) &=& \int_{T \in i \R + \epsilon} \int_{S \in i\R} e^{ \frac{2 T^3}{3} - (x+y) T + \frac{1}{2} (S^2 T - S (x-y)) } \frac{dT dS}{(2\pi i)^2}  \\
&=& \int_\ccal e^{ \frac{2 T^3}{3} - (x+y) T - \frac{(x-y)^2}{8 T} } \sqrt{\frac{1}{2 \pi T}} \frac{d T}{(2 \pi i)} 
\eea
\epf

\section{\label{APPB} Analytic Stationary Phase Method}
We recall here the version of the method of stationary phase that we will use. 
\begin{prop}[Thm. 7.7.5 Vol. $3$ \cite{Hor}]\label{P:MSP}
  Let $K\subseteq \R^n$ be a compact set, $X$ and open neighborhood of $K$ and $k$ be a positive integer. If $u\in C_0^{2k}(K),\, f\in C^{3k+1}(X)$ and $Im(f)\geq 0$ in $X$, $Im(f(x_0))=0, f'(x_0)=0,\det f''(x_0)\neq 0$ and $f'\neq 0$ on $K\backslash \{x_0\},$ then
  \begin{align*}
 &   \abs{\int_X u(x)e^{i\omega f(x)}dx-e^{i\omega f(x_0)}\cdot \det\lr{\omega f''(x_0)/2\pi i}^{-1/2}\sum_{j<k }\omega^{-j}L_j u}\\
  &~~  \leq C\omega^{-k}\cdot \det\lr{\omega f''(x_0)/2\pi i}^{-1/2} \sum_{\abs{\alpha}\leq 2k}\sup \abs{D^\alpha u}.
  \end{align*}
The constant $C$ is uniform over any bounded set $S$ in $C^{3k+1}$ as long as $\abs{\det f''(x)}$ is uniformly bounded away from $0$ for all $f\in S.$
\end{prop}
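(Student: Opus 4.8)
This is the classical stationary phase lemma of H\"ormander (Theorem~7.7.5 in Volume~I of his treatise, \cite{Hor}), so there is nothing genuinely new to prove; for completeness I would reproduce the standard three-step argument, taking care that every constant is controlled by the $C^{3k+1}$-size of $f$ and by a lower bound for $\abs{\det f''}$---the only feature we actually use in \S\ref{ANNULISECT}. Since $\operatorname{Im} f\ge 0$ we have $\abs{e^{i\omega f}}\le 1$, so convergence is automatic and only the decay in $\omega$ is at stake. After a translation I would assume $x_0=0$; fix $\chi\in C_0^\infty(X)$ with $\chi\equiv 1$ near $0$ and $\operatorname{supp}\chi$ so small that $f'\neq 0$ on $\operatorname{supp}\big((1-\chi)u\big)$ (possible because $f'\neq 0$ on $K\RM\{0\}$), and split $\int u\,e^{i\omega f}\,dx=I_{\mathrm{near}}+I_{\mathrm{far}}$, with $u$ replaced by $\chi u$ resp.\ $(1-\chi)u$.

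For $I_{\mathrm{far}}$ I would use non-stationary phase: on the support of $(1-\chi)u$ the quantity $\sum_j\abs{\dell_j f}^2$ is bounded below, so the first-order operator $L:=\big(i\omega\sum_j\abs{\dell_j f}^2\big)^{-1}\sum_j\overline{\dell_j f}\,\dell_j$ satisfies $Le^{i\omega f}=e^{i\omega f}$; integrating by parts $k$ times gives $I_{\mathrm{far}}=\int({}^tL)^k\big((1-\chi)u\big)\,e^{i\omega f}\,dx$, and each transpose costs one derivative of $f$ and one power of $\omega^{-1}$, so---as $f\in C^{3k+1}$ and $u\in C^{2k}$---this is $O(\omega^{-k})$ with constant governed by the $C^{k+1}$-norm of $f$, by $\inf\abs{f'}$ on the support, and by $\sum_{\abs{\alpha}\le k}\sup\abs{D^\alpha u}$, which is of the claimed size. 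For $I_{\mathrm{near}}$ I would apply the Morse lemma in the form valid for phases with $\operatorname{Im} f\ge 0$ and non-degenerate Hessian $Q:=f''(0)$: there is a local diffeomorphism $x=\psi(y)$ with $\psi(0)=0$, $\psi'(0)=\mathrm{Id}$, of regularity roughly two orders below that of $f$, such that $f(\psi(y))=f(0)+\half\lan Qy,y\ran$. Setting $v(y):=(\chi u)(\psi(y))\,\abs{\det\psi'(y)}$---compactly supported and of class $C^{2k}$, which is precisely where the hypothesis $f\in C^{3k+1}$ is consumed, and with $\sum_{\abs{\alpha}\le 2k}\sup\abs{D^\alpha v}\le C\sum_{\abs{\beta}\le 2k}\sup\abs{D^\beta u}$ for a $C$ depending only on the $C^{3k+1}$-norm of $f$ and on $\inf\abs{\det f''}$---this reduces matters to the pure Gaussian integral $\int v(y)\,e^{i\omega\lan Qy,y\ran/2}\,dy$.

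It remains to treat this Gaussian integral, which I would handle by H\"ormander's exact formula $\int v\,e^{i\omega\lan Qy,y\ran/2}\,dy=\big(\det(\omega Q/2\pi i)\big)^{-1/2}\big(e^{-\frac{i}{2\omega}\lan Q^{-1}D,D\ran}v\big)(0)$ (Theorem~7.7.1 of \cite{Hor}, obtained from the Fourier transform of a Gaussian; note $\operatorname{Im} Q\ge 0$ forces $\operatorname{Im}\lan Q^{-1}\xi,\xi\ran\le 0$, so the Fourier multiplier has modulus $\le 1$), followed by Taylor-expanding the operator $e^{-\frac{i}{2\omega}\lan Q^{-1}D,D\ran}$ to order $k$: the partial sums produce $\sum_{j<k}\omega^{-j}\widetilde L_j v(0)$ with $\widetilde L_j=\frac{(-i/2)^j}{j!}\lan Q^{-1}D,D\ran^j$, and unwinding $\psi$ converts these into $\sum_{j<k}\omega^{-j}L_j u$ for differential operators $L_j$ of order $\le 2j$ whose coefficients are built from the Taylor data of $f$ at $x_0$, with $L_0u=u(x_0)$. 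The error in the expansion is controlled by Taylor's theorem with integral remainder for $e^{-\frac{i}{2\omega}\lan Q^{-1}D,D\ran}$, producing $\omega^{-k}$ times a $t$-average over $[0,1]$ of $\big(e^{-\frac{it}{2\omega}\lan Q^{-1}D,D\ran}\lan Q^{-1}D,D\ran^k v\big)(0)$; bounding this uniformly in $\omega$ and $t$ by $C\sum_{\abs{\alpha}\le 2k}\sup\abs{D^\alpha v}$ using only $C^{2k}$ regularity of $v$ is the delicate point, and is the step I expect to be the main obstacle---it is exactly what H\"ormander's proof of Theorem~7.7.1 achieves. Finally, tracking that all constants above depend only on $k$, $d$, $Q=f''(x_0)$ and the $C^{3k+1}$-size of $f$---uniformly over a bounded family $S$ on which $\abs{\det f''}$ stays bounded below---yields the stated uniformity, which is the form we invoke in the proofs of Propositions~\ref{pp:allowed} and \ref{pp:forbidden}.
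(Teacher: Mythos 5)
The paper offers no proof of this proposition at all: it is quoted verbatim from H\"ormander (Theorem 7.7.5, which, as you say, is in Volume I, not ``Vol.\ 3'' as the citation in the appendix has it), so the only meaningful check is whether your reconstruction of H\"ormander's argument is sound. Most of it is: the near/far splitting, the non-stationary phase estimate for $I_{\mathrm{far}}$, the exact Gaussian--multiplier identity $\int v\,e^{i\omega\langle Qy,y\rangle/2}\,dy=(\det(\omega Q/2\pi i))^{-1/2}\bigl(e^{-\frac{i}{2\omega}\langle Q^{-1}D,D\rangle}v\bigr)(0)$, and the Taylor expansion of the multiplier with the remainder bound requiring only $C^{2k}$ regularity of $v$ (which you correctly flag as the technical heart, i.e.\ Theorem 7.7.1) are exactly the standard ingredients.

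The genuine gap is the Morse-lemma reduction. There is no ``Morse lemma in the form valid for phases with $\operatorname{Im} f\ge 0$'': for a genuinely complex-valued $f$ one cannot in general find a \emph{real} local diffeomorphism $\psi$ with $f(\psi(y))=f(x_0)+\tfrac12\langle Qy,y\rangle$ exactly. A one-dimensional counterexample is $f(x)=x^2+ix^4$, which satisfies all the hypotheses of the proposition at $x_0=0$, yet $\psi(y)^2+i\psi(y)^4=y^2$ is impossible for real $\psi(y)\neq 0$ since the left side has strictly positive imaginary part. This is not a removable technicality here, because the complex case $\operatorname{Im} f\ge 0$ is precisely the case the paper uses: after parametrizing the deformed contours in Propositions \ref{pp:allowed} and \ref{pp:forbidden}, the phase $\Phi/\hbar$ is complex with nonpositive real part, i.e.\ $\operatorname{Im} f\ge 0$ in the normalization of the proposition. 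H\"ormander's actual proof avoids the Morse lemma altogether: after Taylor expansion $f(x)=f(x_0)+\tfrac12\langle A(x-x_0),x-x_0\rangle+g(x)$ with $g$ vanishing to third order, he interpolates $f_s=f(x_0)+\tfrac12\langle A(x-x_0),x-x_0\rangle+sg$ (note $\operatorname{Im} f_s\ge 0$ since it is a convex combination of two nonnegative functions), and estimates the $s$-derivatives of $I(s)=\int u\,e^{i\omega f_s}dx$ by the quadratic-phase estimate, each factor $i\omega g$ gaining a net $\omega^{-1/2}$ because $g=O(|x-x_0|^3)$; this also yields the stated uniformity over bounded sets in $C^{3k+1}$ with $|\det f''|$ bounded below. (An alternative repair of your route is the Melin--Sj\"ostrand almost-analytic machinery, but that is a different and heavier argument; also note the minor bookkeeping issue that for small $k$ the $C^{3k+1}$ regularity of $f$ does not quite give $\psi\in C^{2k+1}$, which your substitution $v=(\chi u)\circ\psi\cdot|\det\psi'|$ needs.) So as written, your reduction step fails precisely in the regime the paper invokes the proposition.
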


\end{document}